\def\SPZ{SP$\zeta$}
\newcommand{\pd}[2]{\frac{\partial{#1}}{\partial{#2}}}
\newcommand{\der}[2]{\frac{\md{#1}}{\md{#2}}}
\newcommand{\md}{\mathrm{d}}
\newcommand{\D}{D}  %Dirac operator - just use ordinary D.
\newcommand{\av}[1]{\left\langle{#1}\right\rangle}
\newcommand{\R}{\mathbb{R}}   % Real numbers
\newcommand{\C}{\mathbb{C}}   % Complex numbers
\newcommand{\Z}{\mathbb{Z}}   % Integers
\DeclareMathOperator\tr{Tr}
\DeclareMathOperator\Res{Res}
\DeclareMathOperator\sinc{sinc}
\newcommand{\Vol}{\mathrm{Vol}} %Volume
\newcommand{\dd}{{\mathrm d}}      % Differential d. Note: \d already defined
\newtheorem{theorem}{Theorem}
\newtheorem{lemma}[theorem]{Lemma}
\begin{document}

\title{Spectral estimators for finite non-commutative geometries}
%\date{\today}
\date{3 June 2019}                          % date fixed for arXiv
\author{John W. Barrett$^1$, Paul Druce$^1$, Lisa Glaser$^2$}
\affiliation{$^1$ University of Nottingham, $^2$ Radboud University Nijmegen}

\begin{abstract} A finite non-commutative geometry consists of a fuzzy space together with a Dirac operator satisfying the axioms of a real spectral triple. This paper addresses the question of how to extract information about these geometries from the spectrum of the Dirac operator. Since the Dirac operator is a finite-dimensional matrix, the usual asymptotics of the eigenvalues makes no sense and is replaced by measurements of the spectrum at a finite energy scale. The spectral dimension of the square of the Dirac operator is improved to provide a new spectral measure of the dimension of a space called the spectral variance. Similarly, the volume of a space can be computed from the spectrum once the dimension is known. Two methods of doing this are investigated: the well-known Dixmier trace and a recent improvement due to Abel Stern. Finally, the distance between two geometries is investigated by comparing the spectral zeta functions using the method of Cornelissen and Kontogeorgis. All of these techniques are tested on the explicit examples of the fuzzy spheres and fuzzy tori, which can be regarded as approximations of the usual Riemannian sphere and flat tori. Then they are applied to characterise some random fuzzy spaces using data generated by a Monte Carlo simulation.
\end{abstract}
\maketitle

\section*{Introduction}

A Riemannian geometry can be specified by giving a Dirac operator on the bundle of spinors~\cite{connes2013spectral}. According to this point of view, the Dirac operator is regarded as the fundamental data and the metric tensor field can be extracted from it. Non-commutative geometry then extends the idea of a geometry by generalising the algebra of functions on the manifold to a non-commutative algebra. There are no longer any points in this non-commutative `space', yet various aspects of geometry survive in a more abstract form. In particular, there is still a Dirac operator and its spectrum provides useful information about the geometry. This notion of non-commutative geometry has proven to be particularly useful in particle physics, where the addition of non-commutative extra dimensions to the usual (commutative) space-time allows for an elegant description of the standard model~\cite{barrett_lorentzian_2007,connes_noncommutative_2006}.

This work focusses on the idea that space-time itself might be non-commutative, using a simplified setting in which space-time is both Euclidean signature and compact. It investigates non-commutative geometries as analogues of compact Riemannian manifolds and asks how one can measure, or compare, geometric quantities derived from the spectrum of the Dirac operator.

The particular non-commutative geometries of interest are called fuzzy spaces and have as the algebra of `functions' the algebra of $N\times N$ matrices. The most famous example is the fuzzy sphere~\cite{podles_quantum_1987,madore_fuzzy_1992}. This is the non-commutative analogue of a 2-sphere in which the fields consist of linear combinations of the spherical harmonics on the sphere up to a certain maximum order.
A fuzzy space has a high-energy cut-off while keeping analogues of the coordinate symmetries.
This makes these spaces attractive as backgrounds for quantum field theory simulations~\cite{oconnor_monte_2006} but currently the limited number of well-understood backgrounds restricts the use of this.

A Dirac operator is defined for a fuzzy space using the mathematical framework of a real spectral triple~\cite{Connes:1995tu}. This spectral triple is finite, which means that the vector spaces are all finite-dimensional. The Dirac operator is also a finite-dimensional matrix and so its spectrum is a finite set.  Dirac operators with a high degree of symmetry are known for the fuzzy sphere~\cite{grosse_dirac_1995} and, in a more recent discovery, for fuzzy tori of various shapes~\cite{barrett_gaunt}.

Random non-commutative geometries allow the Dirac operator $D$ to vary, subject to the constraints imposed by the axioms. In the context of fuzzy spaces, one can integrate over the finite-dimensional space of Dirac operators in a well-defined way, giving a very particular type of random matrix model.
This allows an implementation of a Monte Carlo algorithm on these geometries.
Recent work using a simple action showed that this approach can give rise to a phase transition, close to which indications of 2-dimensional manifold-like behaviour were observed~\cite{barrett_monte_2015}. This was done by comparing the spectral density of the random fuzzy space to the spectral density of the fuzzy sphere by eye, but clearly quantitative tools are necessary. More generally, it raises the question of how to recover geometric information about a random fuzzy space.

This paper investigates ways to extract the dimension and the volume of a fuzzy space from the spectrum of the Dirac operator by applying methods that are adapted from the case of the Dirac operator on a Riemannian manifold. The guiding idea is that  if the fuzzy space approximates a certain Riemannian manifold, then one expects the dimension and volume of the fuzzy space to approximate the Riemannian ones. The paper also investigates a method to measure how close two geometries are by comparing their spectral zeta functions.

The various methods are tested on the fuzzy sphere and on some fuzzy tori, where the spectrum is known exactly and the relation to the continuum sphere and tori is clear. Then the methods are applied to random fuzzy spaces, obtaining some interesting new insights. A new difficulty in the random case is that the variance of geometric quantities can be significant, leading to a question of how to  carry out the averaging and also to the tricky issue of what an averaged geometry actually means.

In Section~\ref{sec:fuzzy} the fuzzy spaces that will be explored are introduced. Section~\ref{sec:dim} discusses the problem of measuring the dimension of a fuzzy space.  Weyl's law for a manifold relates the asymptotics of the eigenvalues of the Dirac operator to the dimension and volume of the manifold. As a warm-up exercise, this is adapted to a fuzzy space by looking directly at the eigenvalues rather than their asymptotics. It gives some insight into the behaviour of the dimension but the results depend on which part of the spectrum one looks at.

A more reliable method is to use a new quantity introduced in this paper that is called the spectral variance. This is a measure of dimension that depends on a parameter that determines the energy scale at which the dimension is measured. It is an adaptation of the notion of spectral dimension, which was first used in~\cite{ambjorn_spectral_2005} to study the model of causal dynamical triangulations, and has since been modified to explore many other models (see~\cite{carlip_small_2009} for a fairly comprehensive review). The spectral dimension is usually used with the Laplace-Beltrami operator on scalar fields; it is calculated from the heat kernel and exploits Weyl's law of eigenvalue scaling. In this paper, the spectral dimension is defined using the fermion fields and the operator $D^2$. The crucial difference is that although $D^2$ is a Laplace-type operator, it does not necessarily have zero as an eigenvalue and as a consequence the spectral dimension behaves badly.

The spectral variance is defined to remedy this defect. It is a variance of the squared eigenvalues, and hence only differences of squared eigenvalues matter. Another way of thinking of the spectral variance is that it is the heat capacity of an ideal gas `in' the geometric space. The spectral variance is applied to analysing properties of the averages of the random fuzzy spaces introduced in~\cite{barrett_monte_2015} and provides further evidence about the nature of the phase transition. It confirms that the transitions are from a phase that is approximately 1-dimensional to a phase that has significantly higher dimension over a large range of energies.

Measures of the volume of a fuzzy space are considered in section~\ref{sec:zeta}. Both the Dixmier trace and a new volume measure defined by Stern~\cite{stern_finite-rank_2017} are adapted to the finite setting and their properties compared. These measures provide sensible results as long as the dimension of the space is known or estimated as a single number. However, the scale-dependent dimension provided by the spectral variance does lead to difficulties in interpreting the results, since the notion of volume is dimension-dependent. Also, it is not clear how one should apply a volume measure to an ensemble where the dimension varies significantly. These difficulties are discussed in this section.

Section~\ref{sec:zetadist} considers the distance between two spectra determined by comparing  their spectral zeta functions using a method due to Cornelissen and Kontogeorgis~\cite{cornelissen_distances_2017}. This method applies equally well to finite and infinite spectra and so it is possible to compute the distance between a fuzzy sphere and a Riemannian sphere, for example. After testing the formalism on spheres and tori, the distances between the average spectra of some random geometries are computed. Also, the distance to the fuzzy sphere reveals that the average geometries are indeed close to the fuzzy sphere near the phase transition, as first suggested in~\cite{barrett_monte_2015}. The conclusions of the paper are presented in section~\ref{sec:conclusion}.

\section{Fuzzy spaces}\label{sec:fuzzy}

A fuzzy space is the generic term for a non-commutative geometry where the algebra is an algebra of matrices. Here it is assumed that this is the algebra $M_N(\C)$ of all $N\times N$ matrices over $\C$. The geometry on a fuzzy space is specified by a differential operator, originally a Laplace operator, but in this paper it is always the Dirac operator. The Dirac operator is defined by making the fuzzy space into a finite non-commutative spectral triple.
This consists of a finite dimensional Hilbert space $\mathcal{H}$, a non-commutative algebra $\mathcal{A}$ represented on $\mathcal{H}$ and a self-adjoint operator $D$ acting on $\mathcal{H}$ which is the Dirac operator. The fuzzy geometries in question all follow the framework outlined in~\cite{barrett_matrix_2015}, which is that $\mathcal{A} = M_N(\C)$ and $\mathcal{H}=V\otimes M_N(\C)$, with $V=\C^k$ carrying the action of a type $(p,q)$ Clifford module generated by the $k\times k$ matrices $\{\gamma^i\}$, $p$ of which square to $1$ and the remaining $q$ of which square to $-1$.

The simplest example of a finite non-commutative geometry is the fuzzy $2$-sphere. For the continuum (Riemannian) $2$-sphere  $S^2$ of unit radius and round metric, the eigenvalues of the Dirac operator are all the integers except $0$, with a multiplicity of $2 |j|$ for eigenvalue $j$~\cite{GraciaBondia:2001gta}.

The fuzzy sphere is studied as a real spectral triple in \cite{dandrea2013metric} and \cite{barrett_matrix_2015}. The former gives a spectral triple for the Grosse-Presnajder Dirac operator \cite{grosse_dirac_1995}. The problem with this is that the Clifford type is $(0,3)$, for which numerical computations of random Dirac operators are difficult due to the asymmetry in the spectrum.
For this reason, the fuzzy sphere used in this paper is the one studied in~\cite{barrett_matrix_2015}, for which  the spinor space is $V=\C^4$ and the Clifford type is $(1,3)$. It is not necessary to give the explicit formula for the Dirac operator, since all that is needed here is the spectrum. The eigenvalues are the integers $-(N-1)\le j\le (N-1)$ excluding zero, with the multiplicity $ 4|j|$,
and the values $j=-N,N$ with multiplicity $2N$. Thus, with the exception of the largest eigenvalues, the spectrum has double the multiplicity of the corresponding eigenvalues for the continuum sphere. It is worth noting that the spectrum of $|D|$ coincides with that of the Grosse-Presnajder operator (with doubled multiplicities), so for practical purposes either Dirac operator could be used.

Another simple geometry for which a good description in terms of a finite spectral triple exists is the fuzzy torus~\cite{barrett_gaunt}.
In the continuum, a 2-dimensional torus $T^2$ is created by identifying a square with corner points $(0,0),(0,2\pi),(2\pi,0),(2\pi,2\pi)$ along opposing edges. It is given the flat metric
\begin{equation}\dd s^2=(a^2+c^2)\,\dd\theta^2+2(ab+cd)\,\dd\theta\dd\phi+(b^2+d^2)\,\dd\phi^2.\label{metric}
\end{equation}
The torus has four spin structures, which are labelled $\Sigma=(0,0)$, $(1,0)$, $(0,1)$ or $(1,1)$.
The Dirac eigenvalues are
\begin{align}
\lambda_{k,l}=\pm\frac1{ad-bc} \sqrt{ \left({ak-bl}\right)^2 + \left({dl-ck} \right)^2}\label{eq:contTorus}
\end{align}
with $(k,l) \in \mathbb{Z}^2+\Sigma/2$.
Thus the values of $k$ and $l$ are either integers or integers plus one half, depending on the spin structure.

The fuzzy torus (in its simplest form) is described by a finite spectral triple with $\mathcal{A} = M_N(\C)$ and four integers $a$, $b$, $c$, $d$ which determine an invertible matrix $\begin{pmatrix}a&b\\c&d\end{pmatrix}$. The spinor space is again $V=\C^4$ but the Clifford type is $(0,4)$. It has Dirac eigenvalues given by
\begin{equation}
\lambda_{k,l}=\pm\frac1{[ad-bc]_q} \sqrt{ \left[ak-bl\right]_q^2 + \left[dl-ck\right]_q^2}\label{eq:fuzzyTorus}
\end{equation}
where the square brackets denote the `q-numbers'
\begin{equation}[l]_q=\frac{\sin \pi l/N}{\sin\pi/N},\end{equation}
using the parameter $q = e^{2\pi i/N}$ and ignoring an irrelevant sign in~\cite{barrett_gaunt}.
However, this time each geometry has only one spin structure, $\Sigma_c=(a+c,b+d) \mod 2$, which is called the canonical spin structure. Thus $(k,l) \in \mathbb{Z}^2+\Sigma_c/2$. The eigenvalues are periodic in $k$ and $l$, with period $N$. Thus one can index them uniquely by $-N/2\le k,l <N/2$. For each $k$ and $l$, the multiplicity of the eigenvalue is two.

The simplest example is the (unit) square torus, which has $a=d=1$, $b=c=0$, and hence spin structure $\Sigma_c=(1,1)$. Then using integers $k'$ and $l'$, the eigenvalues are
\begin{equation}
\lambda=\pm \sqrt{ \left[k'+\frac12\right]_q^2 + \left[l'+\frac12\right]_q^2}.\label{eq:squareTorus}
\end{equation}
The other examples used in this paper all have $N$ as a multiple of $ad-bc$. In this case, the eigenvalues repeat themselves $ad-bc$ times as $k$ and $l$ vary. For other values of $N$ the behaviour is more complicated, so these cases are not explored here.

The last class of fuzzy spaces examined is the class of random fuzzy spaces defined in~\cite{barrett_monte_2015}. The Dirac operator has an explicit expression in~\cite{barrett_matrix_2015} using several $N\times N$ Hermitian matrices $H_i$ and anti-Hermitian traceless matrices $L_j$. The Dirac operator is allowed to vary, defining an ensemble of geometries governed by a partition function.
The independent entries of the matrices, $H_i,L_j$, are then the free data for the Dirac operator and the ensemble can be explored using Monte Carlo moves that change them.  For more detail see~\cite{barrett_monte_2015}.
To decide which Monte Carlo moves to accept, the action
\begin{align}
\mathcal{S}= g_4 \tr( \D^4) +g_2 \tr(\D^2)\label{eq:action}
\end{align}
was used.
This is the simplest possible non-trivial action, using the lowest powers of $D$ (the trace of odd powers of $D$ is zero in the examples explored here).
For particular values of $g_4$ and $g_2$ it can be derived from the lowest order expansion of the heat kernel.

With $g_4=1$ and varying $g_2<0$, the random fuzzy spaces show a phase transition, which was described in more detail in~\cite{barrett_monte_2015,glaser_scaling_2016}.
The location of the phase transition depends on the Clifford module type $(p,q)$. It was suggested that close to the phase transition the density of the eigenvalue spectrum has some similarity to that of the fuzzy sphere, leading to speculation that the geometries at the phase transition might be $2$-dimensional.

In~\cite{glaser_scaling_2016}, the geometries of type $(1,1),(2,0),(1,3)$ were explored in more detail.
A particular focus was to determine the phase transition points and try to better understand the behaviour of the geometries at the phase transition.
The datasets generated for~\cite{glaser_scaling_2016} are the ones used in this paper.
The phase transition was determined to be at $g_2=-2.4,-2.8,-3.7$ for the types $(1,1),(2,0),(1,3)$ respectively.
There was also good indication, including strong correlations between different eigenvalues, that these phase transitions are higher than first order.

Throughout, $g_2$ values less (more negative) than the phase transition value $g_c$  are referred to as \emph{after} the phase transition and the system is then in a gapped phase.
Likewise, $g_2$ values greater than $g_c$ are referred to as \emph{before} the phase transition, and the system is in an ungapped phase. So the system undergoes the phase transition as $g_2$ gets more negative.

\section{Dimension measures}\label{sec:dim}

In ordinary geometry there are many different definitions of the dimension of a space that are equivalent.
The simplest is the minimum number of coordinates on a space needed to specify any point within it.
With the discovery of fractal spaces, new notions of dimension that need not take integer values were defined.
In non-commutative geometry, the dimension spectrum is defined as the set of singularities of the zeta function of the Dirac operator, and is a subset of the complex plane.

These definitions tend to be trivial when applied to finite spectral triples.
However, finite non-commutative approximations to the 2-sphere and 2-torus exist and a dimensional measure that captures the 2-dimensionality of these non-commutative geometries is desirable.
As more examples of fuzzy spaces are uncovered it will be useful to have a notion of dimension that approximates the
dimension of any possible continuum limit.

\subsection{Weyl's law}\label{sec:weyl}
The classic result of Weyl demonstrates that the asymptotic behaviour of the eigenvalues of the scalar Laplacian on subsets of $\R^d$ encodes the dimension and volume of the space.
This result was extended to operators of Laplace type, such as the square of the Dirac operator on a spin manifold~\cite{Roe:1998vp}.
Let $D$ be the Dirac operator on a compact oriented Riemannian manifold of dimension $d$ and volume $V$, with spin bundle of rank $k$. Let $\lambda_n$ be the $n$-th eigenvalue, ordered so that $|\lambda_n|$ is non-decreasing.
Weyl's asymptotic formula is
\begin{equation}\frac n{|\lambda_n|^d} \to \frac{k V}{(4\pi)^{d/2}\Gamma(1+d/2)}
\end{equation}
as $n\to\infty$.

Taking the logarithm gives
\begin{equation}\log n- d\log|\lambda_n|\to \text{constant}
\end{equation}
which suggests the use of $\log n/\log|\lambda_n|$ to approximate $d$.
Unfortunately this expression contains two unknowns, the dimension $d$ and the volume-dependent constant.
There is currently no tool to estimate the volume of a finite spectral triple without knowing its dimension, as will be discussed in section~\ref{sec:zeta}, hence leaving the constant unknown.

To remove the unknown constant, consider two different eigenvalues labelled by $n$ and $m$.
Then in a limit in which $n,m\to\infty$
\begin{equation}\log(n/m)-d\log(|\lambda_n|/|\lambda_{m}|)\to 0.
\end{equation}
Assuming $\log(n/m)$ remains bounded in this limit,
\begin{equation}W_{nm}=\frac{\log(n/m)}{\log(|\lambda_n|/|\lambda_m|)}\to d \label{eq:modDWeyl}
\end{equation}
Therefore the left-hand side can be used to estimate $d$ in the finite case, and this estimate is independent of any rescaling of the eigenvalues.

In Figure~\ref{fig:modDsphereWeyl}(a) this is plotted for the fuzzy sphere.
The large multiplicities in the spectra lead to a range of values for $n$ associated to the same eigenvalue $\lambda_n$.
This leads to $W_{nm}$ varying within the rectangular multiplicity blocks, which creates a psychedelic pattern in our plot.
To suppress this phenomenon and instead illustrate the overall trend of the spectra one can choose the middle of the multiplicity block as the value for $n$, denoted $\tilde{n}$, for each value $\lambda_n$, as shown in Figure~\ref{fig:modDsphereWeyl}(b).
With the second definition the fuzzy sphere has a stable notion of dimension which is close to $2$.
This is not surprising as the spectrum of the Dirac operator for the fuzzy sphere is a truncation of the spectrum of the continuum sphere; with increasing matrix size it includes more of the spectrum of the continuum sphere.

This however is not the case for the fuzzy torus, shown in figure~\ref{fig:WeylLawTori} for the square fuzzy torus ($a=d=1$ and $b=c=0$).
The value of $W_{nm}$ depends on which part of the spectrum is examined. This arises because the eigenvalues of the fuzzy torus are in general different from those of the continuum torus.
The reason for this is that the $q$-number $[l]_q$ approximates the integer $l$ for $\pi l /N \ll 1$, but deviates from this, and thus from the classical spectrum, as $l$ increases. In fact, the `correct' continuum value of $2$ for the dimension can only be seen in the lower left corner of the plot, where $m$ and $n$ are small.
\begin{figure}
  \subfloat[$W_{nm}$]{\includegraphics[width=0.45\textwidth]{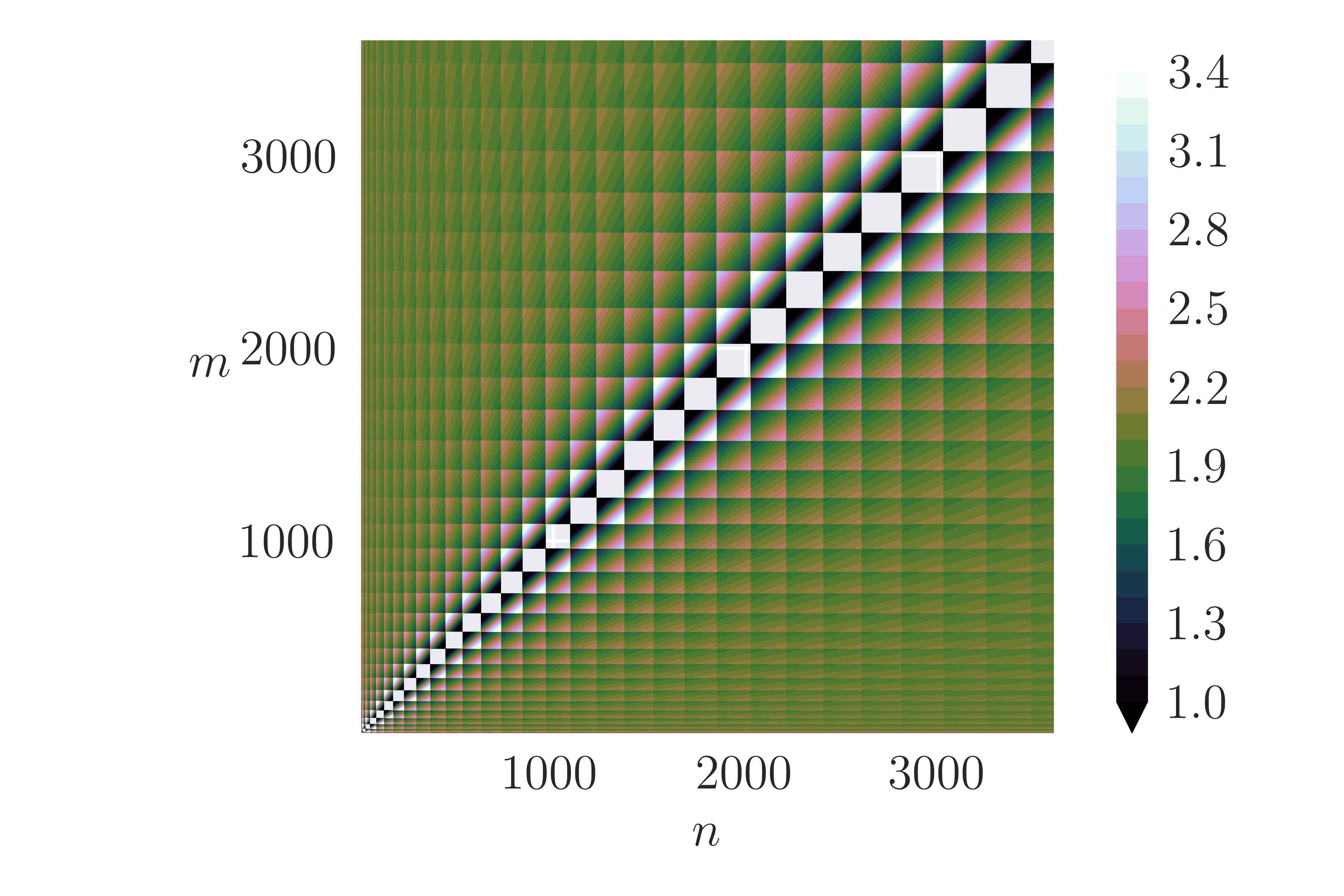}}
  \subfloat[$W_{\tilde{n}\tilde{m}}$ where $\tilde{n}$ is the mid point of the multiplicity of the eigenvalue $\lambda_n$]{
  \includegraphics[width=0.45\textwidth]{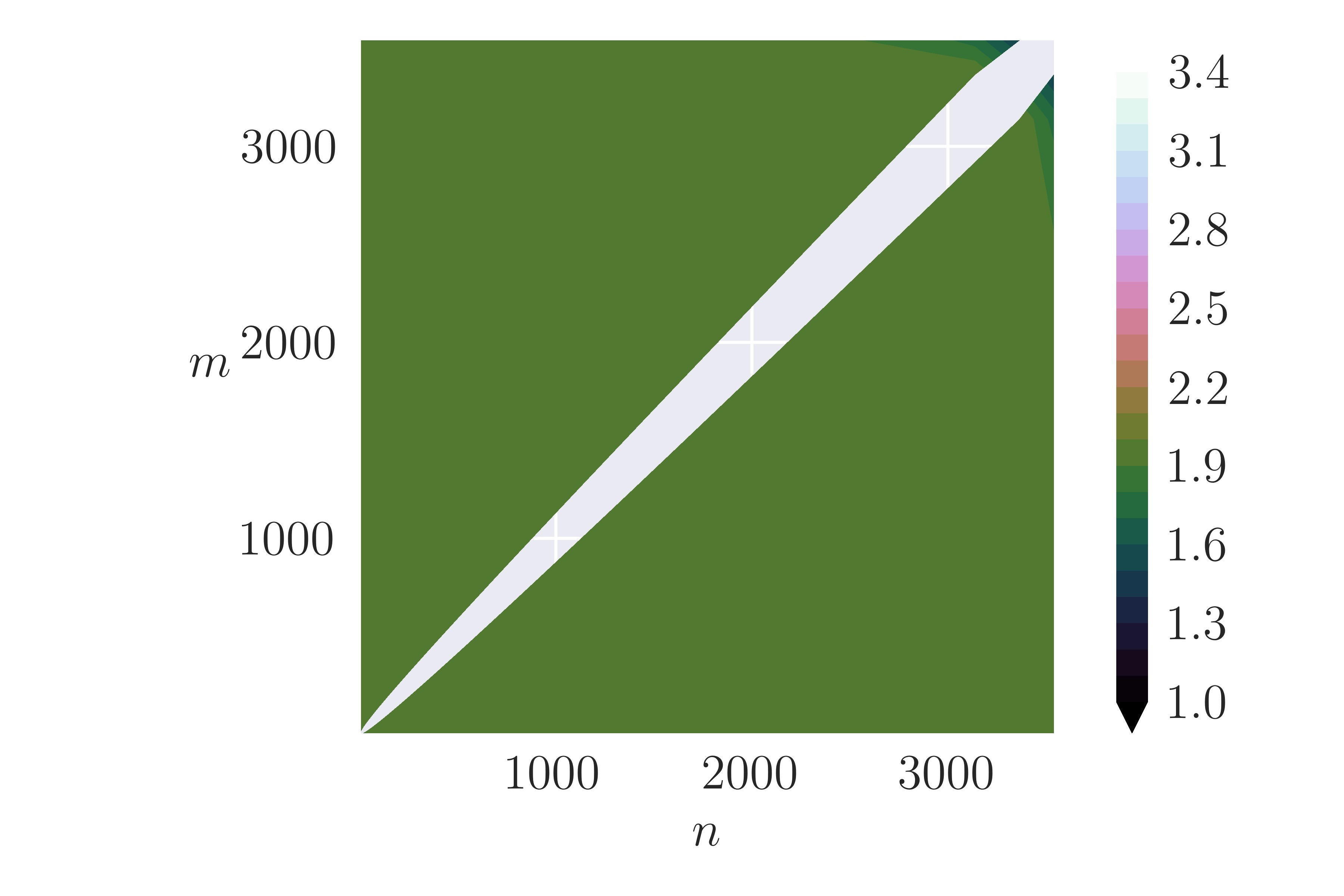}}
  \caption{Weyl's law for the fuzzy sphere for $N=30$: A 2-parameter plot of the estimate of the dimension $W_{nm}$.}
  \label{fig:modDsphereWeyl}
\end{figure}

\begin{figure}
    \centering
\includegraphics[width=0.5\textwidth]{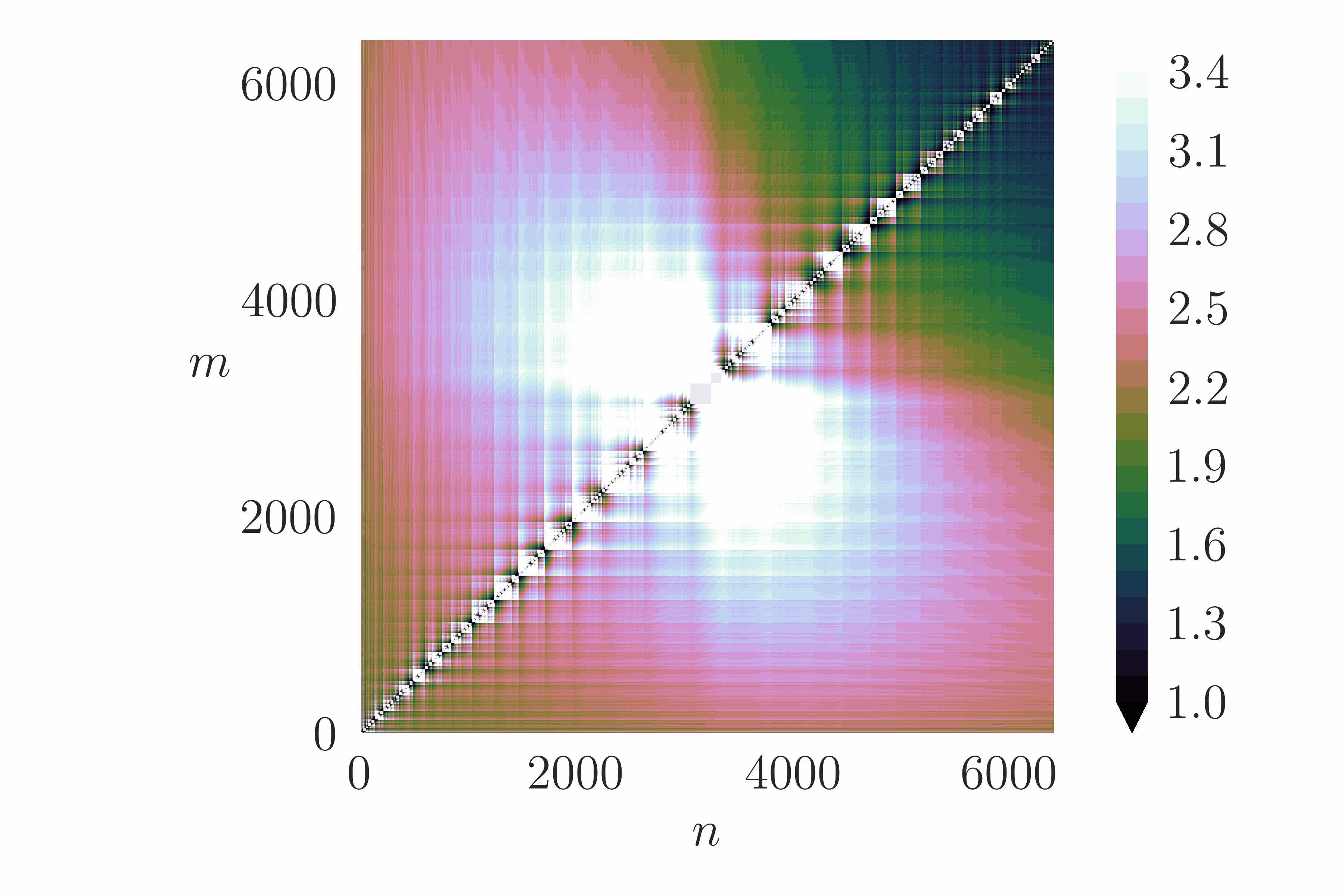}
   \caption{Weyl's law for the square fuzzy torus with $N=40$ and $a=d=1$ $b=c=0$.} \label{fig:WeylLawTori}
\end{figure}

For the random geometries the Dirac operators do not have any high multiplicities and the $W_{nm}$ for various $g_2$ values of the type $(2,0)$ geometries are shown in Figure~\ref{fig:type20WeylLawPT}. One can see that there is a large variation of values depending on which part of the spectrum is looked at. The $W_{nm}$ fall to zero for the largest eigenvalues, where the short wavelength cut-off of the geometry becomes visible. Ignoring the largest eigenvalues, one can see there that $W_{nm}$ takes values between $1.2 - 1.6$ at $g_2=-2.75$, between $1.6-2.0$ at the phase transition $g_2=-2.80$. For $g_2=-2.85$ there is no distinction between the behaviour of the largest eigenvalues suggesting that no stable notion of dimension is present.

While the plots are informative and can be interpreted by eye, there appears to be no good way to extract quantitative information. To get a more robust definition of dimension, other methods are explored in the next sections.

\begin{figure}
 \subfloat[$g_2=-2.75$]{\includegraphics[width=0.32\textwidth]{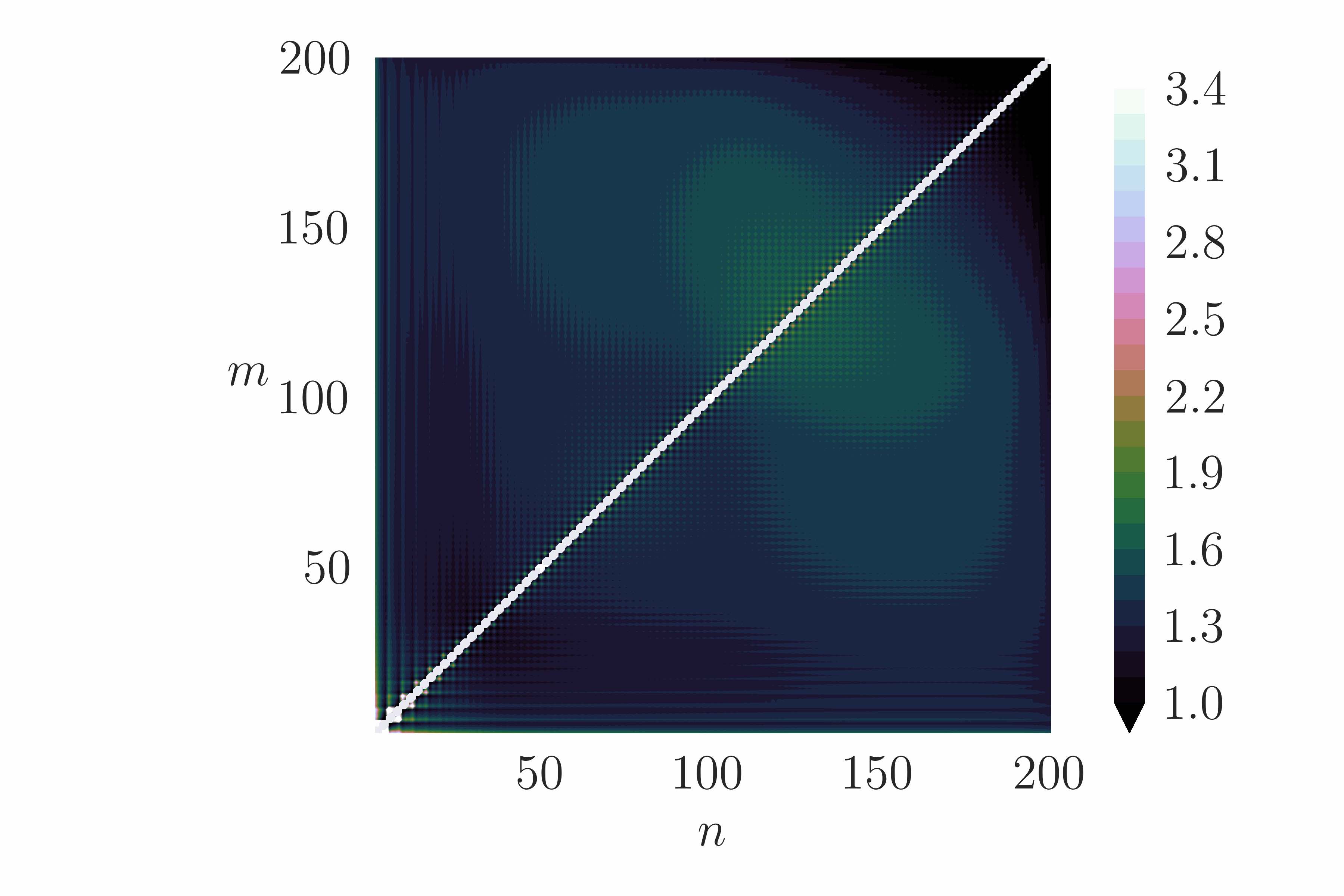}}
  \subfloat[$g_2 = -2.80$, phase transition]{\includegraphics[width=0.32\textwidth]{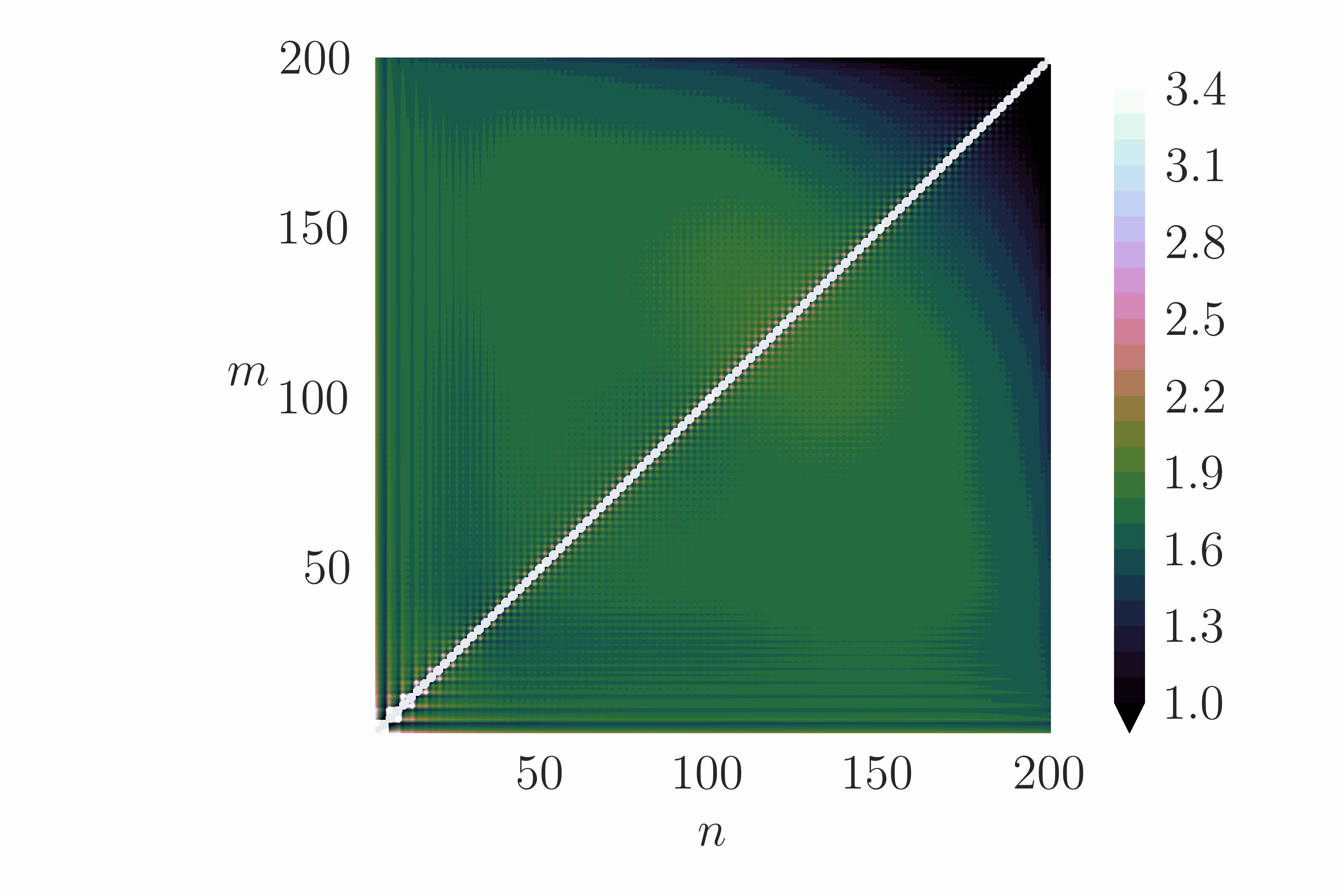}}
\subfloat[$g_2 =-2.85$]{ \includegraphics[width=0.32\textwidth]{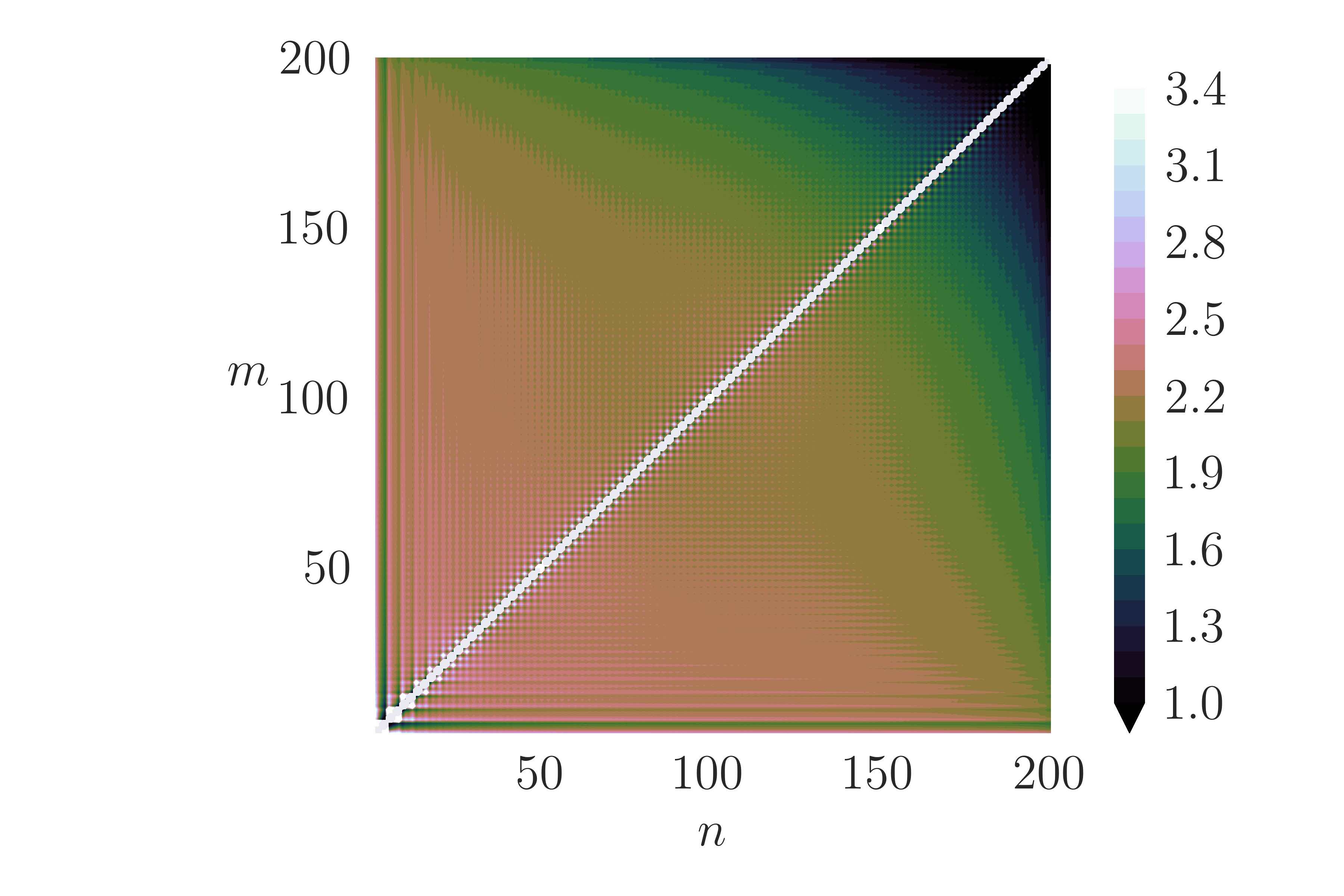}}
  \caption{Type $(2,0)$ random geometries: A 2-parameter plot of the estimate of the dimension $W_{nm}$ for three values of the coupling constant $g_2$ around the phase transition value.}
  \label{fig:type20WeylLawPT}
\end{figure}

%%%%%%%%%%%%%%%%%%%%%%%%%%%%%%%%%%%%%%%%%%%%%

\subsection{Spectral dimension}
The spectral dimension~\cite{ambjorn_spectral_2005,benedetti_spectral_2009,Sotiriou:2011aa} is a measure of dimension that depends on a single parameter $t$, which determines a length scale at which the geometry is probed.
It is most commonly defined based on the Laplace-Beltrami operator, but the following formalism can be used for a Dirac operator in both commutative and non-commutative geometry.

Let $D$ be a Dirac operator with a discrete spectrum and denote the eigenvalues by $\lambda$.
A non-negative operator is obtained by squaring the Dirac operator, which results in a Laplace-type operator suitable for the asymptotic analysis.
The heat kernel trace of $D^2$ is defined to be
\begin{align}
K(t)=\tr e^{-t D^2}=\sum_\lambda e^{-t \lambda^2 }.\label{eq:hkt}
\end{align}
%This can be thought of as the Laplace transform of the eigenvalue distribution $\rho(s)=\sum_\lambda\delta(s-|\lambda|)$.
The spectral dimension of the Dirac operator is then defined by
\begin{align}
d_s(t)&=-2t \pd{\log[K(t)]}{t} = 2t\, \frac{\sum_\lambda \lambda^2 e^{-t \lambda^2}}{\sum_\lambda e^{-t \lambda^2\mathstrut}}\;.\label{eq:specdim}
\end{align}
The value of the spectral dimension depends on the scale at which it is measured.
One can see from~\eqref{eq:specdim} that at fixed $t$, a term in the numerator $2 t\lambda^2 e^{-t \lambda^2}$ has a maximum at $\lambda=t^{-1/2}$.
Hence the spectral dimension at parameter $t$ is most sensitive to the eigenvalue distribution in some range around the value $\lambda=t^{-1/2}$.

For a smooth compact Riemannian manifold of dimension $d$ without boundary, the heat kernel trace~\eqref{eq:hkt} has an asymptotic expansion~\cite{Vassilevich:2003ik}
as $t\to0$
\begin{equation}
K(t)\sim t^{-d/2}\bigl(a_0+a_2t+a_4t^2+\ldots\bigr).\label{eq:HKexp}
\end{equation}
As a consequence, one can determine the dimension of the manifold from the small $t$ behaviour of $K$.
Calculating the spectral dimension shows that $d_s(0)=d$, the dimension of the manifold.
In fact, the first term in~\eqref{eq:HKexp} on its own contributes a constant value $d$ to $d_s(t)$, but due to the $a_2, a_4,\ldots$  terms, which depend on the curvature of the manifold, the spectral dimension typically deviates from the value $d$ as $t$ increases away from $0$.

A simple example to test this is flat space. This has a continuous spectrum so the sum is replaced by an integral. The spectral dimension for flat space is the constant function $d$. This can be modified with a high-energy cut-off on the spectrum at a maximum $|\lambda|=\Lambda$, which can be viewed as the space having a minimum length scale $\sim 1/\Lambda$.
The spectral dimension for this cut-off operator is
\begin{align}
d_s(t)&=d-\frac{2 e^{-\Lambda^2 t} { \left(\Lambda^2 t\right) }^{d/2}}{\Gamma \left(\frac{d}{2}\right)-\Gamma \left(\frac{d}{2},\Lambda^2 t\right)} \;,
\end{align}
where $\Gamma(a,z)=\int_z^{\infty} t^{a-1} e^{-t} \md t$ is the incomplete $\Gamma$ function.
This is illustrated for different $d$ in Figure~\ref{fig:flatDs}. The high energy cut-off implies that the spectral dimension is $0$ at $t=0$, from where it rises smoothly until it converges towards $d$ for large enough $t$. This example shows that with a cut-off  spectrum, the small $t$ behaviour of the spectral dimension is no longer sufficient to estimate the dimension.
\begin{figure}
\centering
\includegraphics[width=0.7\textwidth]{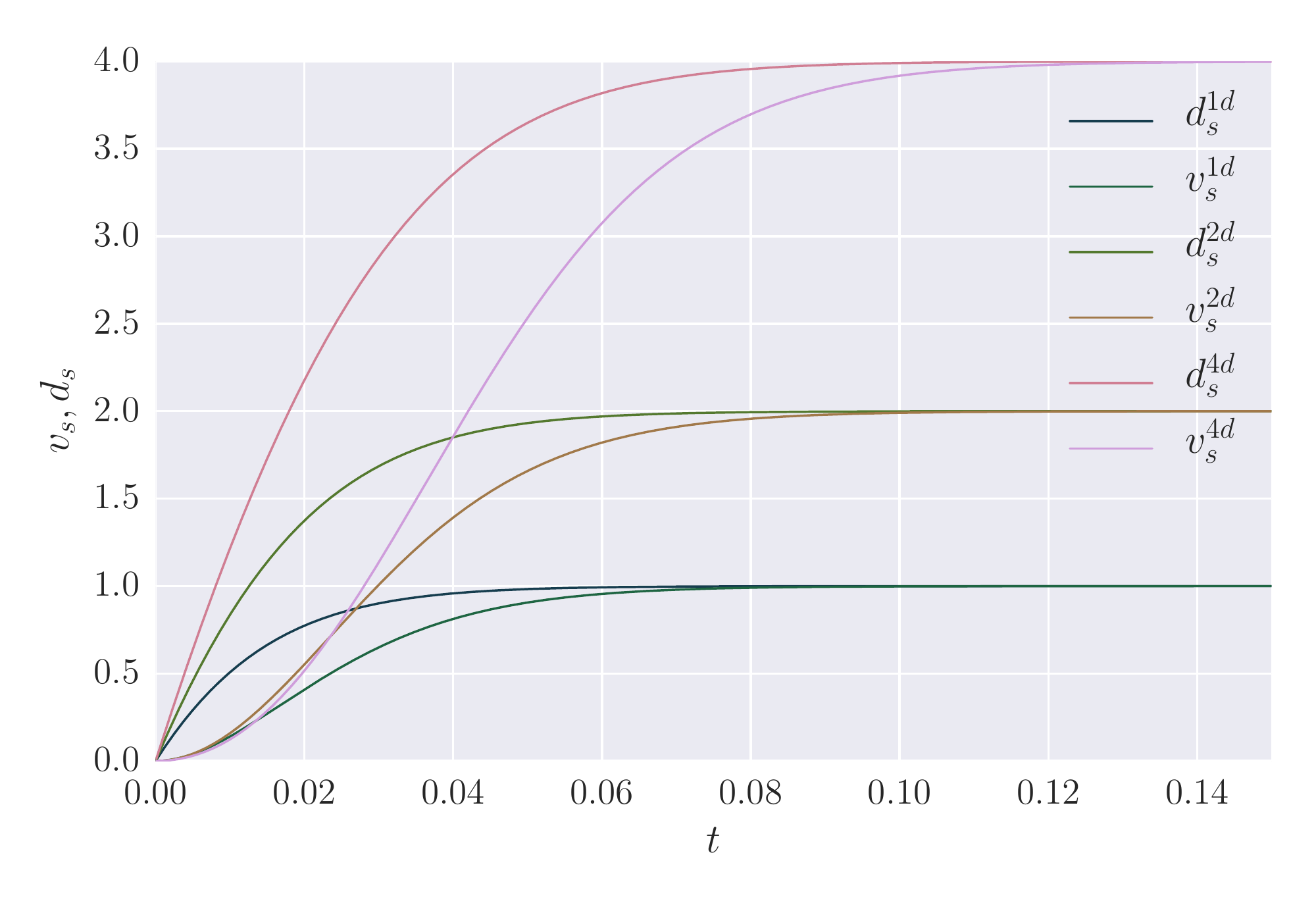}
\caption{\label{fig:flatDs}Spectral dimension and spectral variance for flat space, with a high energy cut-off $\Lambda=10$.}
\end{figure}

\subsection{Spectral variance}
One difference between the Laplace-Beltrami and Dirac operators is that the Laplace-Beltrami operator always has an eigenvector of eigenvalue zero, i.e.\ the constant function, whereas the Dirac operator often does not.
For example, if the scalar curvature is not zero for a compact Riemannian manifold, then zero is never an eigenvalue of $D$~\cite{MR1777332}.
If the magnitude of the lowest non-zero eigenvalue is $\lambda_0$ then it dominates in~\eqref{eq:specdim}  for large $t$ and the spectral dimension for the Dirac operator is asymptotically
\begin{equation}
d_s(t)\sim 2t\lambda_0^2,\label{eq:lowestev}
\end{equation}
whereas for the Laplace-Beltrami operator it always converges to zero at large $t$.
Hence the use of $d_s$ as a dimension estimator is not useful for values of $t$ larger than $t \sim \lambda_0^{-2}$.
A useful, energy dependent, dimension measure on a compact space is expected to go to $0$ for very low energies.
This would correspond to wavelengths larger than the scale of compactness, which would not propagate on the space at all.

Hence the linear growth phenomenon is an unhelpful feature of the spectral dimension.
A more useful refinement is a new measure of dimension called the \emph{spectral variance}.
\begin{align}
v_s(t) = d_s(t) - t\der{d_s}{t}\;,
\end{align}
which removes this linear mode.
The reason this is called the spectral variance is because it can be expressed in terms of the eigenvalues as
\begin{align}
v_s(t) &= 2 t^2 \left(
\frac{ \sum_{\lambda}\lambda^4 e^{-\lambda^2 t} }{ \sum_{\lambda} e^{-\lambda^2 t} } -
{ \left( \frac{ \sum_{\lambda} \lambda^2 e^{- \lambda^2 t} }{ \sum_{\lambda} e^{-\lambda^2 t} } \right) }^2 \; \right)
\end{align}
This formula is $2t^2$ times the variance of $\lambda^2$ in a probability distribution $p(\lambda)=K^{-1} e^{- \lambda^2 t}$.

For large $t$ the spectral variance is dominated by the lowest two eigenvalues $\lambda_0$ and $\lambda_1$. Denoting their multiplicities $\mu_0, \mu_1$, it is asymptotically
\begin{align}\label{eq:vsbump}
v_s(t)
\sim 2 \frac{\mu_1}{\mu_0}t^2 { \left(\lambda_1^2 - \lambda_0^2\right) }^2 e^{-(\lambda_1^2-\lambda_0^2)t}
\end{align}
\noindent The maximum value of this approximate formula is $8e^{-2}\mu_1/\mu_0$ at the point $t=2/(\lambda_1^2 - \lambda_0^2)$, which can be large if $|\lambda_1|-|\lambda_0|$ is small.

%\todo[inline]{L:\ Could also mention that $d_s$ and $v_s$ are the momenta of the heat kernel? J:\ What do you mean? L:\ If we consider the heat kernel as a probability distribution then $d_s$ is the first moment and $v_s$ is the difference between the second moment and the first squared, I was thinking about that because of this paper \href{http://arxiv.org/abs/math-ph/0112050}{Heat kernel-zeta function relationship coming from the classical moment problem} }
%J: Not sure this is right: moments of the heat kernel give values of the zeta function according to that paper.

For flat space with a high-energy cut-off $\Lambda$, the spectral variance is
\begin{align}
v_s(t)&=d+\frac{e^{-\Lambda^2 t} \left(d-2 \Lambda^2 t-2\right) { \left(\Lambda^2 t\right) }^{d/2}}{\Gamma \left(\frac{d}{2}\right)-\Gamma \left(\frac{d}{2},\Lambda^2 t\right)}-\frac{2 e^{-2 \Lambda^2 t} { \left(\Lambda^2 t\right) }^d}{{\left(\Gamma  { \left( \frac{d}{2} \right)} -\Gamma \left( \frac{d}{2},\Lambda^2 t\right)\right) }^2}\;
\end{align}
as can be seen in Figure~\ref{fig:flatDs}.
There both the spectral variance and spectral dimension converge towards the continuum dimension for large $t$, since the space is non-compact.

The spectral dimension and variance of the fuzzy sphere do not have simple closed expressions but numerical results are readily obtained. Figure~\ref{fig:sphereDV} shows plots of the spectral dimension and variance for fuzzy spheres of two sizes. For large values of $t$ the three plots look similar due to the spectra of the different fuzzy spheres being identical for the smallest eigenvalues.
For the continuum sphere (Figure~\ref{fig:sphereDVc}), for very low $t$, $d_s(t)$ and $v_s(t)$ both go to $2$, while for the fuzzy spheres in Figures~\ref{fig:sphereDVa},~\ref{fig:sphereDVb}, they go to $0$.
This is expected, since low $t$ corresponds to the high energy limit, and the high energy modes are removed from the fuzzy space.
In the large $t$ limit, $d_s(t)\sim t$, as explained by~\eqref{eq:lowestev}.
The spectral variance, on the other hand, satisfies $v_s(t) \to 0$ for $t\to \infty$, which is the desired behaviour for a compact space.

\begin{figure}[t]
\subfloat[][$N=5$\label{fig:sphereDVa}]{\includegraphics[width=0.45\textwidth]{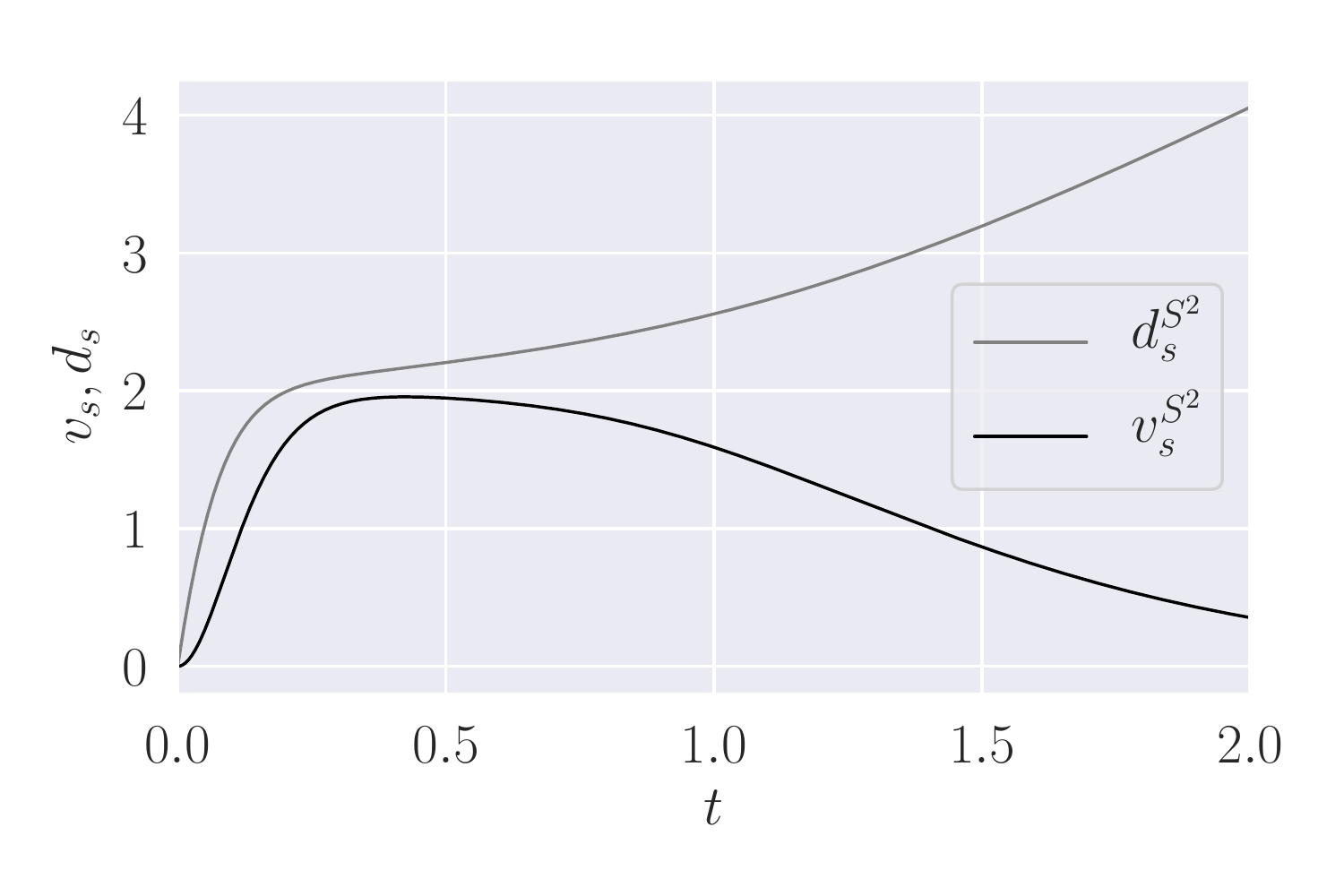}}
\subfloat[][$N=15$\label{fig:sphereDVb}]{\includegraphics[width=0.45\textwidth]{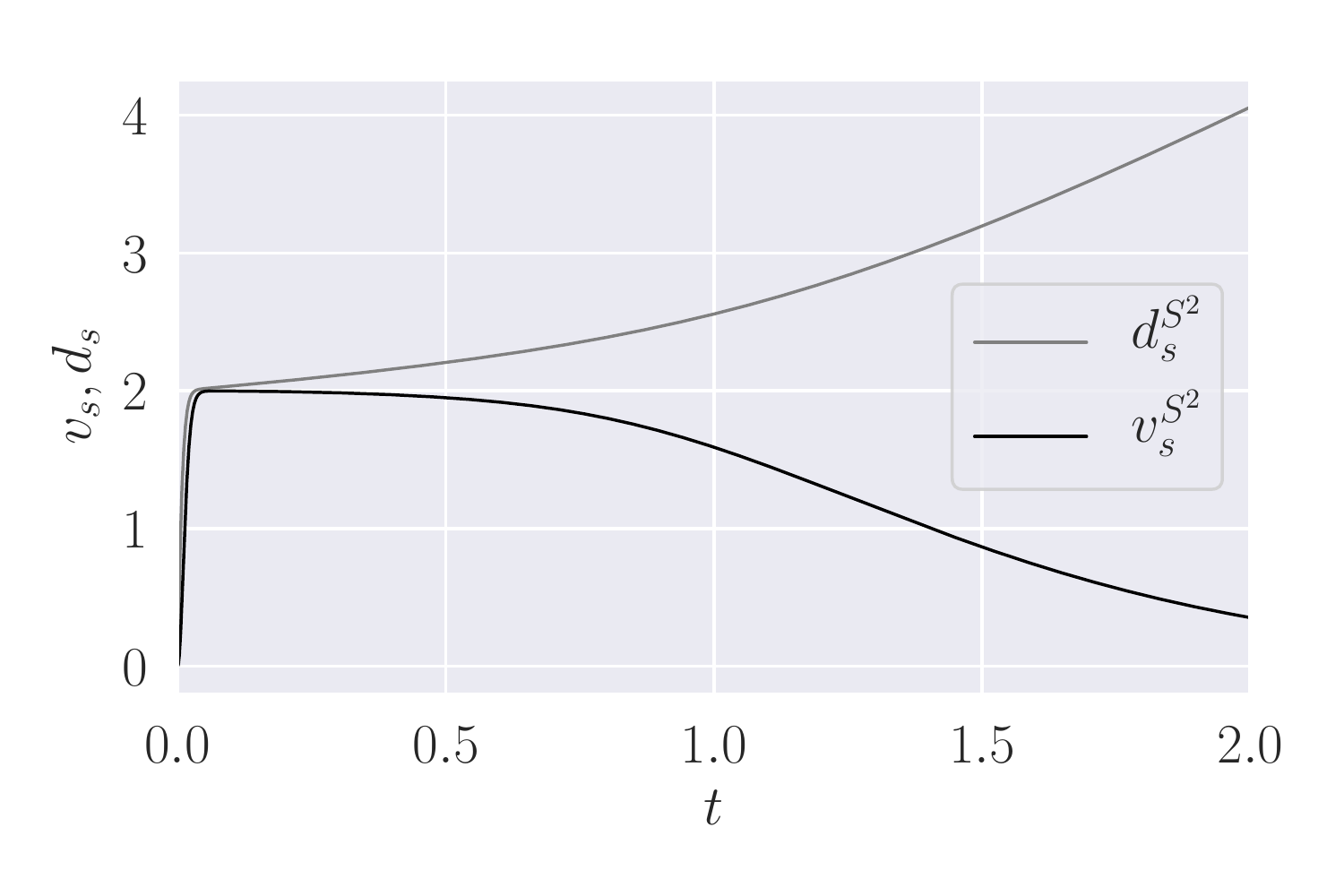}}

\centering
\subfloat[][Continuum sphere\label{fig:sphereDVc}]{\includegraphics[width=0.45\textwidth]{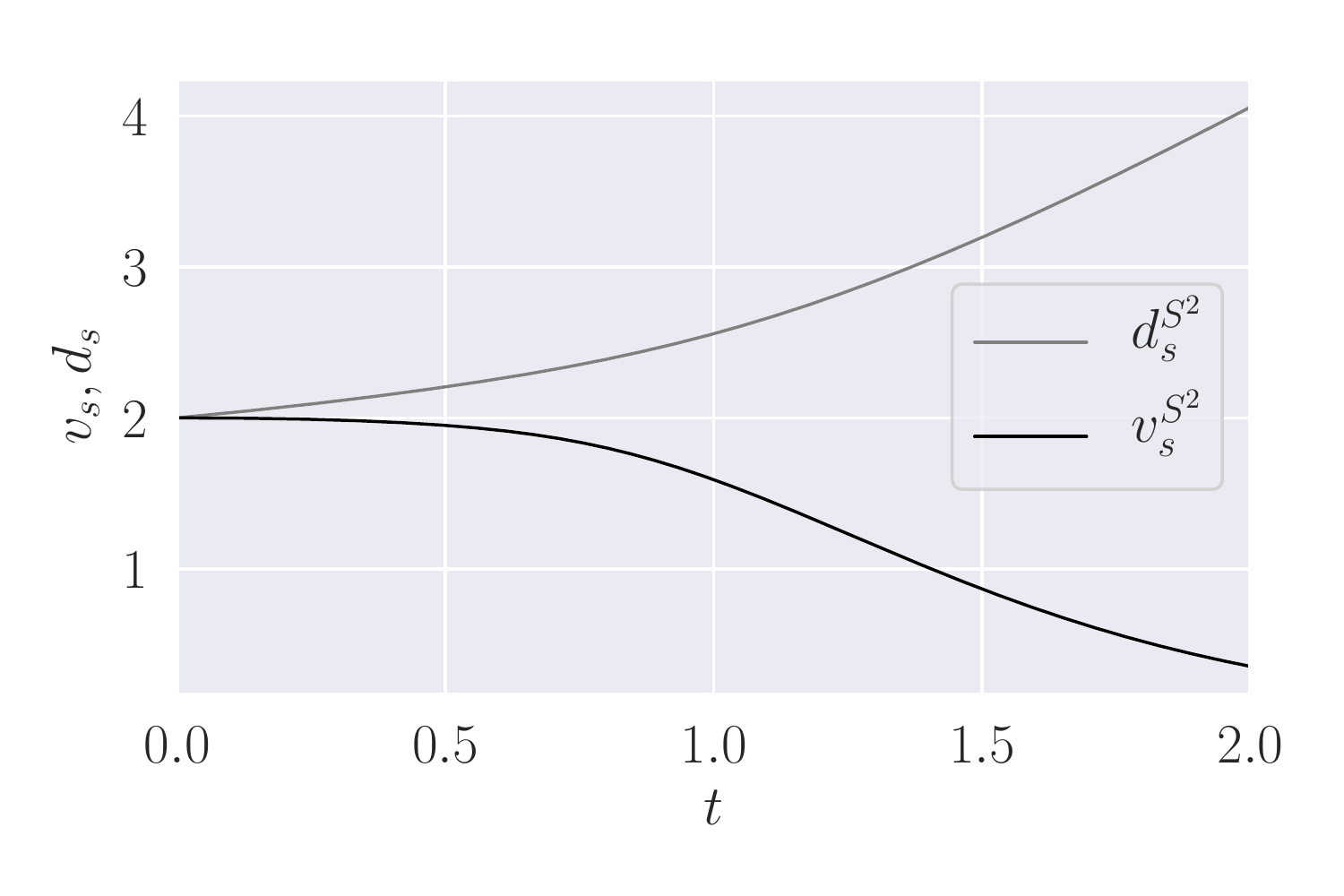}}
	\caption{\label{fig:sphereDV}Spectral dimension and spectral variance for the fuzzy sphere.}
\end{figure}

For the torus, the first thing to examine is how well the exact torus and the fuzzy torus agree.
This is tested for $a=d=1$ and $b=c=0$, which leads to a square torus with sides of length $2\pi$.
Figure~\ref{fig:torus} shows the spectral variance and spectral dimension for fuzzy tori of matrix size $N=10$, $N=30$ and the continuum torus.
The figure shows larger $N$ values than for the sphere since the spectrum of the fuzzy torus is a worse approximation to the spectrum of its continuum analogue.

\begin{figure}[t]
\subfloat[][$N=10$\label{fig:TorusDVa}]{\includegraphics[width=0.45\textwidth]{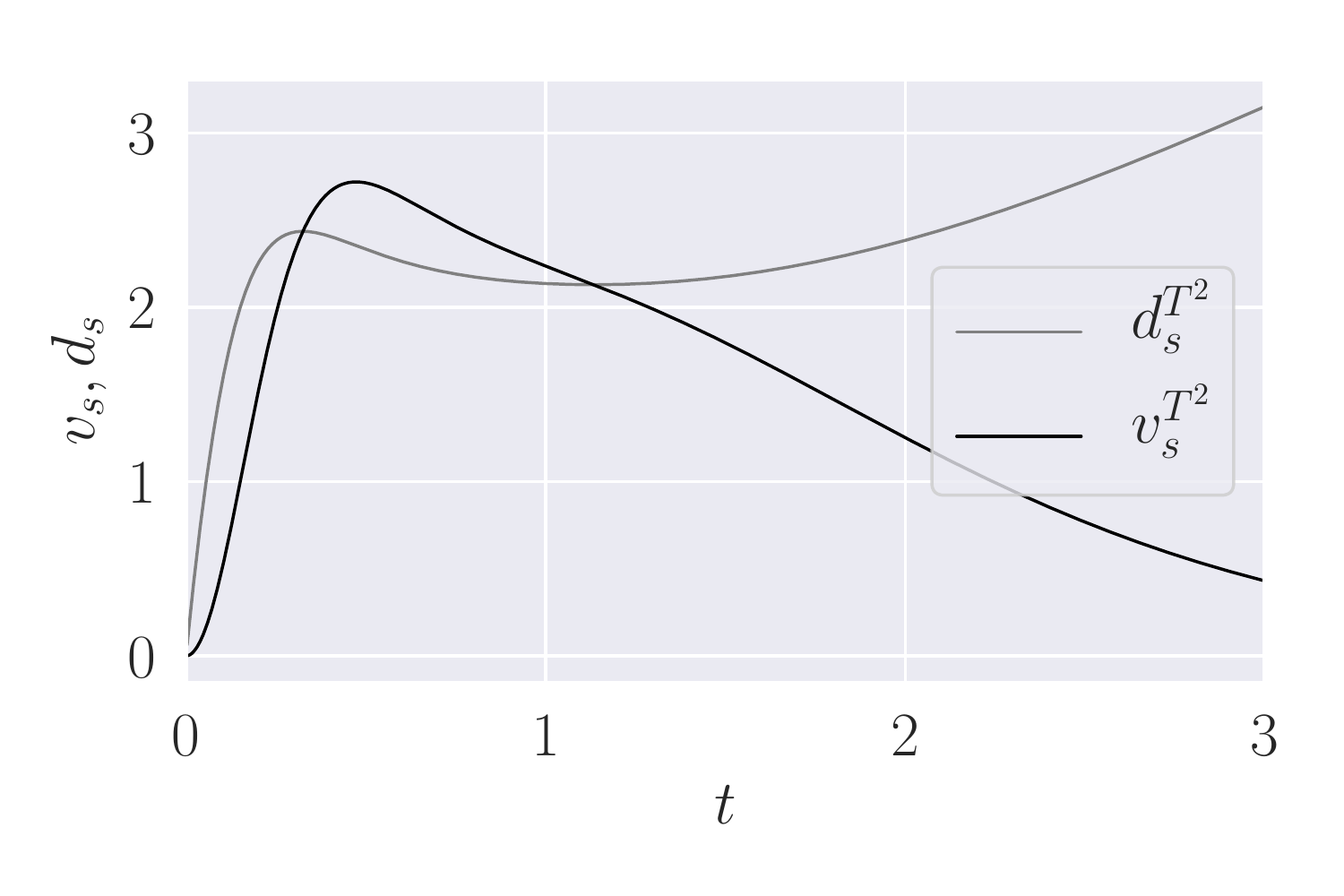}}
\subfloat[][$N=30$\label{fig:TorusDVb}]{\includegraphics[width=0.45\textwidth]{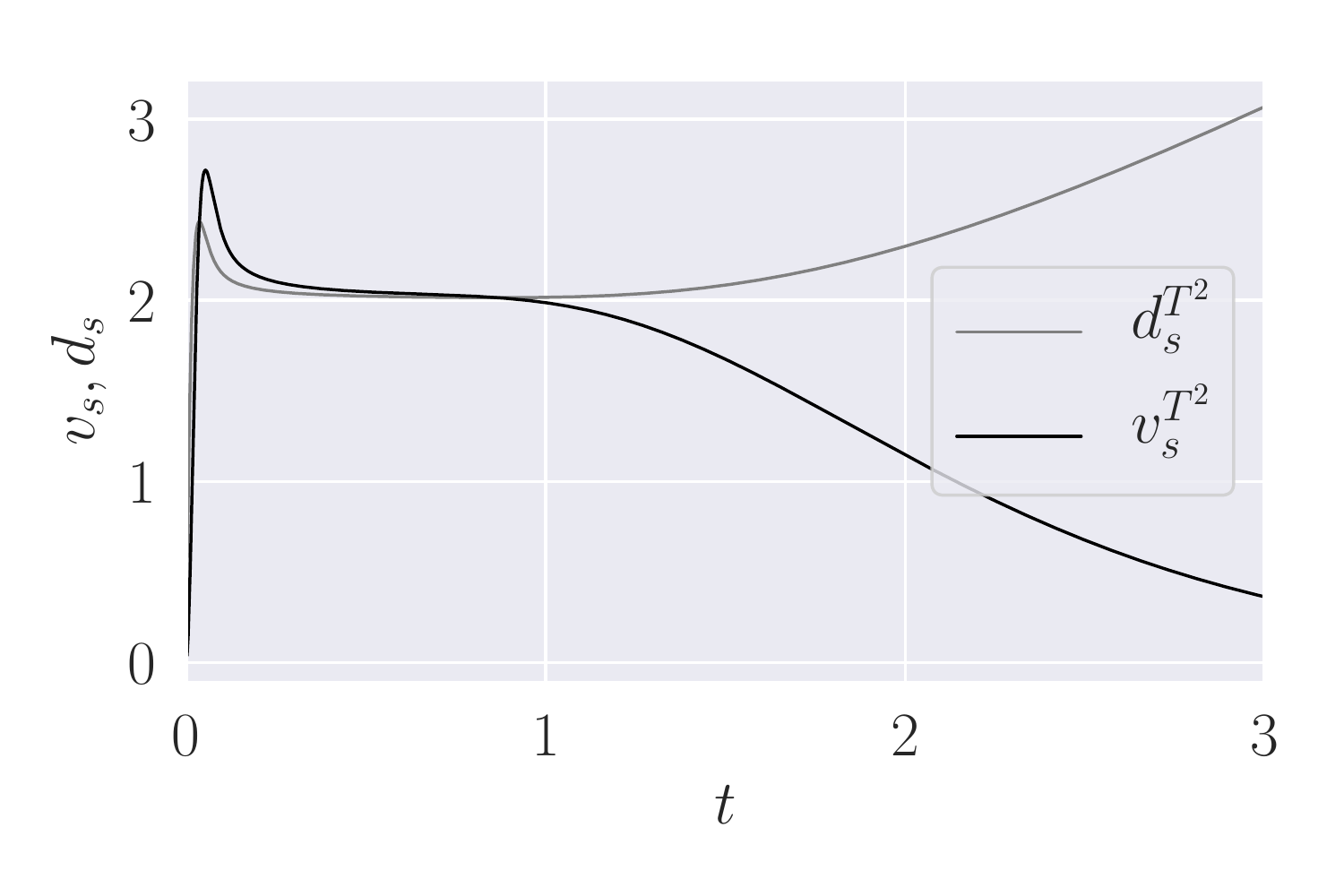}}

\centering
\subfloat[][Continuum torus \label{fig:torusvs}]{\includegraphics[width=0.45\textwidth]{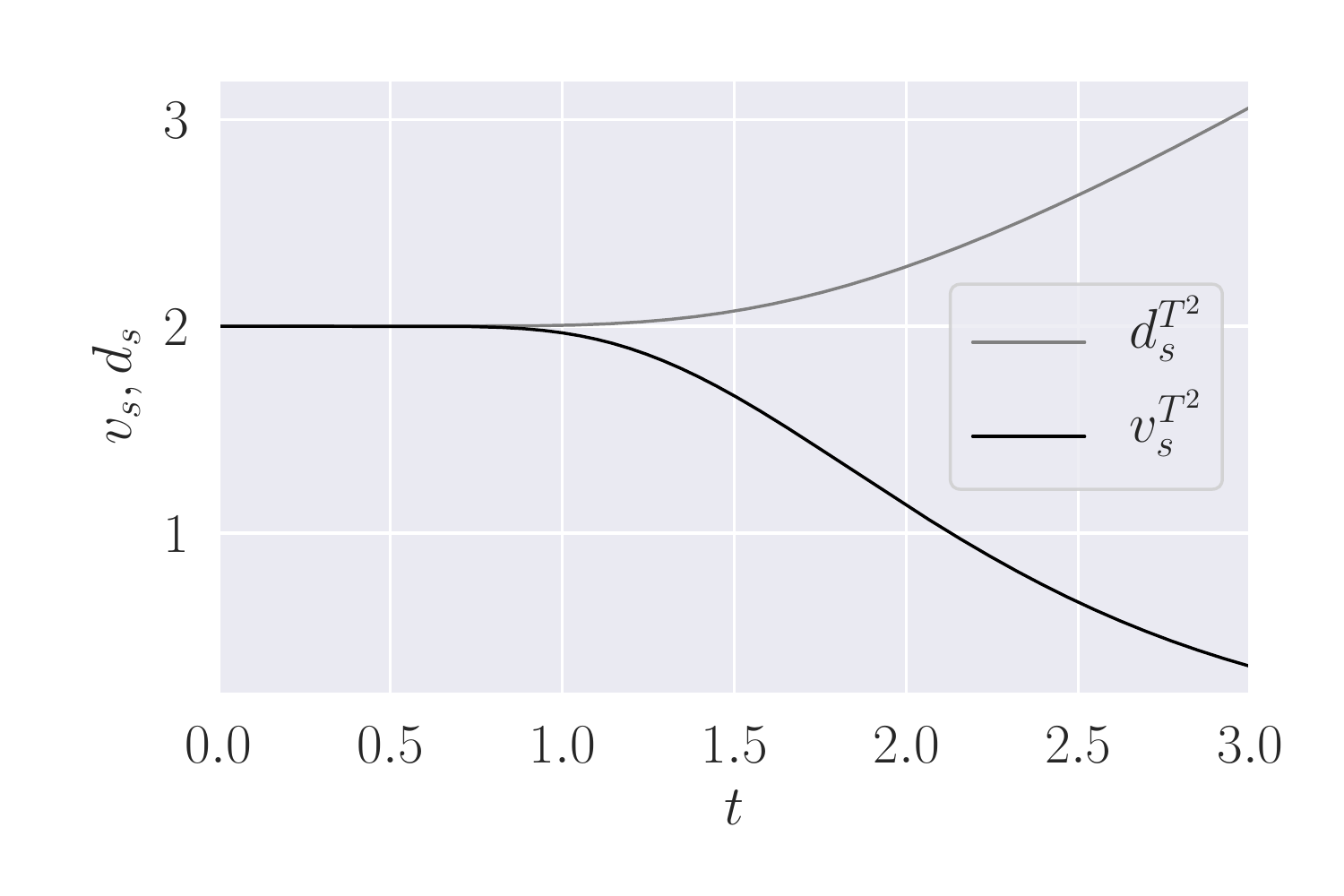}}
	\caption{\label{fig:torus}Spectral dimension and spectral variance for the square fuzzy torus with $a=d=1$ and $b=c=0$}
\end{figure}

For $N\sim 30$ the fuzzy torus and the continuum torus are in good agreement on their spectral variance and dimension.
One exception is the sharp peak in the spectral dimension and variance of the fuzzy torus at small $t$, i.e., high energies. With increasing $N$ this peak becomes sharper but does not grow in height.
It is created by the large $k,l$ eigenvalues, since these eigenvalues are quite far from the continuum value for the same $(k,l)$ pair. They form a slowly rising plateau in the region where the sine function reaches its maximum. This is shown in Figure~\ref{fig:TorFuzzyComm}, which plots the eigenvalues as a function of $k,l$ for both the continuum and the fuzzy torus for $a=d=1$, $b=c=0$.
A very similar phenomenon also occurs for Laplace operators on discretisations of the torus~\cite{Calcagni:2013dna}, so it does not appear to be related to the non-commutativity.

\begin{figure}
  \centering
\includegraphics[width=0.5\textwidth]{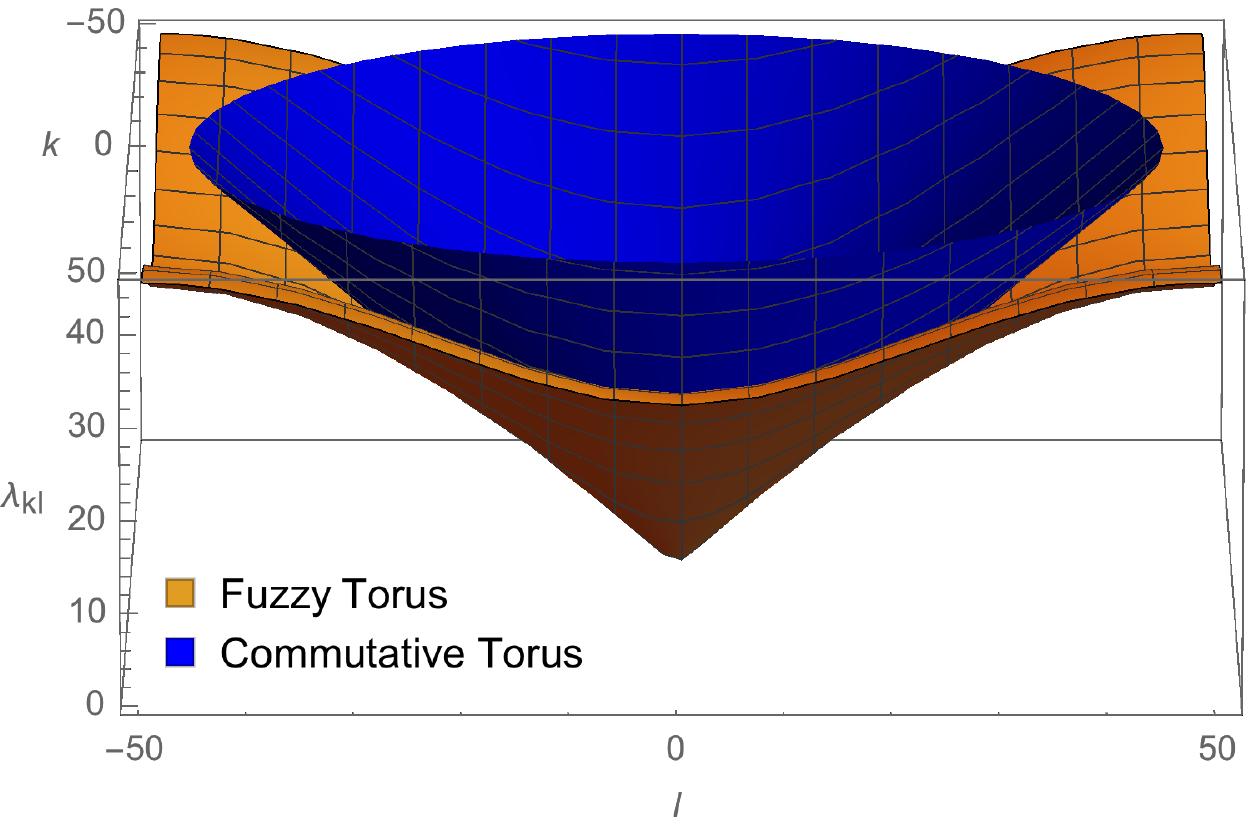}
    \caption{The positive eigenvalues of the Dirac operator for the fuzzy torus for $N=100$ compared to those of the continuum torus, both plotted as functions of $k,l$. This shows that the fuzzy torus deviates significantly from the continuum torus for the majority of its spectrum.}
    \label{fig:TorFuzzyComm}
\end{figure}
One can see that for the fuzzy torus the eigenvalues start to deviate from the continuum torus outside a small region around the origin, and that the majority of eigenvalues are far away from their continuum values (see Appendix~\ref{sec:appendix}).
Overall the eigenvalues show a slower than  linear rise, which leads to the fuzzy torus showing higher-dimensional behaviour at large energies.
The spectral variance shows this as a peak at low $t$ before it falls to $0$ at very low $t$.

\begin{figure}
\subfloat[$N=90$]{\includegraphics[width=0.5\textwidth]{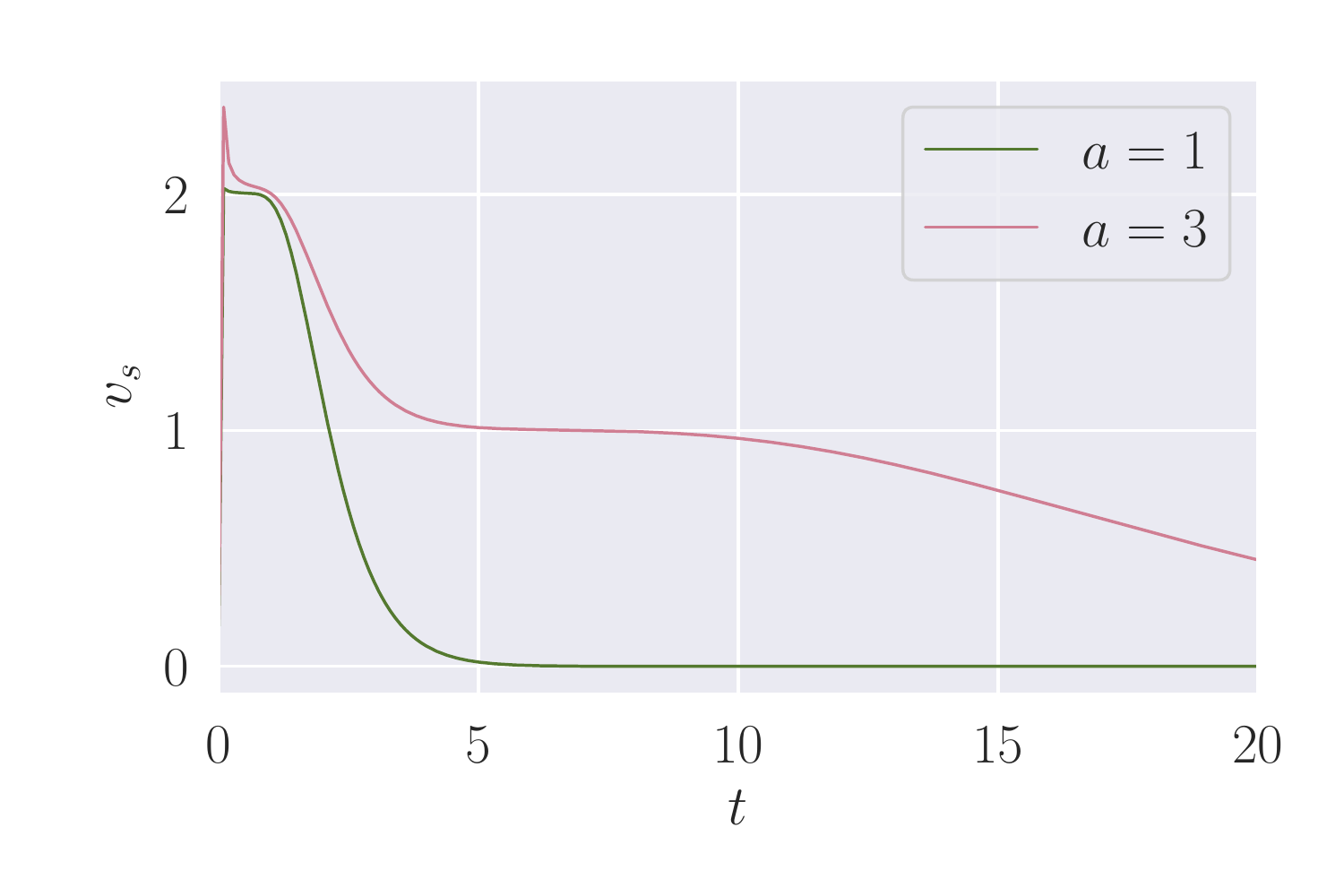}}
\subfloat[continuum torus]{\includegraphics[width=0.5\textwidth]{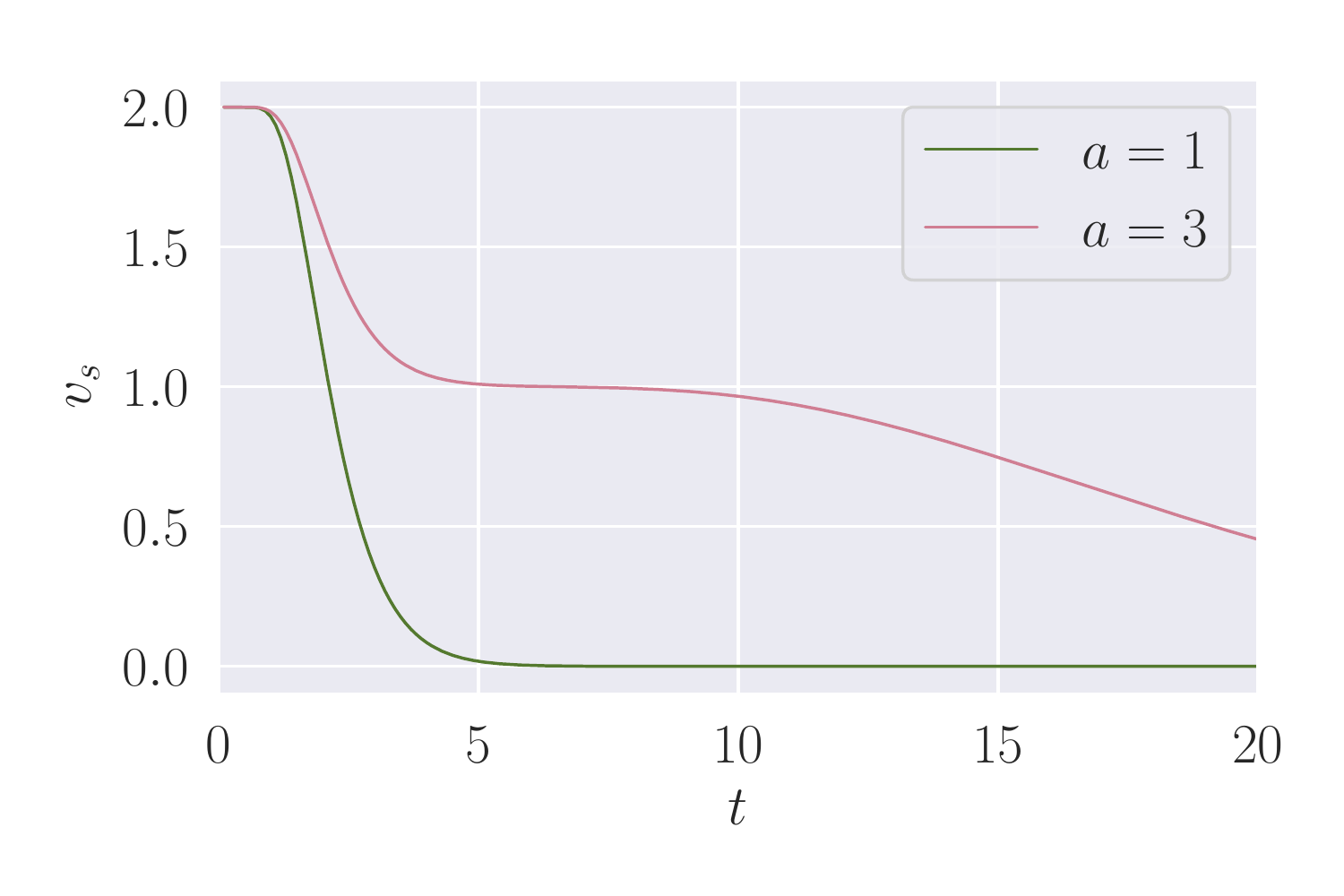}}
    \caption{Spectral variance of a square torus $a=d=1$ and $b=c=0$ in green and a rectangular torus $a=3,d=1$ and $b=c=0$ in pink.}
    \label{fig:torusCvs}
\end{figure}

An interesting feature of energy-dependent dimension measures on a torus, which has for example been seen in~\cite{Calcagni:2013dna}, is that a torus with very different length scales in the different directions will show both 1- and 2-dimensional behaviour depending on the energy scale probed.
This is a Kaluza-Klein type effect, in which only modes with high enough energy will see the true 2-dimensional structure of the torus, while lower energy modes remain constrained to one direction.
This is shown for a rectangular torus with $a=3$, $b=c=0$ and $d=1$, which has edge lengths $2\pi$ and $18\pi$. This has the same spin structure as for $a=1$, which makes it easier to compare the high $t$ behaviour.
Figure~\ref{fig:torusCvs} shows the spectral variance for $N=90$ and the continuum torus.
The first thing to notice is that for $a=3$ the high eigenvalue peak at low $t$ is wider and more pronounced than for $a=1$.
After this peak there is a region of values of $t$ for which the dimension is 2.
The interesting behaviour is after this $2$d region. While the spectral variance for $a=1$ falls off towards $0$, the spectral variance for $a=3$ shows a clear $1$d region. The fuzzy torus and the continuum torus in this case agree well, except for the behaviour at small $t$.

The proposal in this paper is to use the spectral variance of a Dirac operator as a measure of dimension. This requires some justification as it is not always clear what the dimension of a non-commutative space means. The minimum requirement is that if a spectrum approximates the spectrum of a manifold, then the dimensions should agree (to an appropriate level of approximation). For a manifold, both the spectral dimension and spectral variance give exactly the dimension at $t=0$, which is the high energy limit. When the spectrum is truncated to a finite number of eigenvalues, this is no longer true, and one has to look at the spectral dimension or spectral variance at some non-zero value of $t$. From the examples presented so far, it is apparent that this is where the graph plateaus, i.e., has an interval of $t$ for which the graph is approximately constant. These plateaus mimic the behaviour of the graphs for flat space and show there is a region of energies for which the growth of the eigenvalues is similar to the behaviour of flat space with the same dimension.

The spectral dimension is problematic when applied to the Dirac operator because of the linearly rising mode which, in some cases, dominates for sufficiently low values of $t$, obscuring any other features. The spectral variance does not have this problem, though the graph can show significant features at large $t$ when several low eigenvalues have an unusual configuration (see section~\ref{subsec:EVscale}).

Thus the focus is on the spectral variance in regions of $t$ for which the graph is approximately stationary. The fuzzy sphere has just one stationary point and this is therefore the maximum of the graph. The value of the spectral variance at this maximum agrees well with the continuum dimension, $2$.
For the torus the graph shows surprisingly good agreement with the continuum spectral variance, and the value $2$ for a wide region, despite the fact that the actual spectra are substantially different.
The only marked difference is that the fuzzy torus has a higher peak at very small $t$.
This peak arises because the large eigenvalues of the fuzzy torus rise much slower than those of the continuum torus.
This example makes it clear that it is not the global maximum that is important, rather it is the larger region of $t$ for which the graph is approximately constant that matters.

Another way of understanding the spectral variance is to realise that the probability distribution $p(\lambda)$ is that of a thermal equilibrium of a system with Hamiltonian $D^2$ and inverse temperature $T^{-1}=t$.
Using this distribution, one can write
\begin{align}
d_s&=2 t\, \langle \lambda^2\rangle \\
v_s&=2 t^2\, \left( \langle \lambda^4 \rangle - \langle \lambda^2 \rangle^2 \right) \;.
\end{align}
In the thermodynamic analogy, $K$ is the partition function, the spectral dimension is $2U/T$, with $U=\langle \lambda^2\rangle$, the internal energy, and the spectral variance is twice the heat capacity
$2\der U T=-2t^2\der Ut.$
It is clear that adding a constant to the internal energy does not affect the heat capacity.
When interpreted in terms of the eigenvalues, adding a constant to the squared eigenvalues does not affect the spectral variance, however it will affect the spectral dimension.

For flat space, this thermodynamic system is equivalent to a molecule of an ideal gas in the space $\R^d$. The fact that the heat capacity of the ideal gas is $d/2$ is a standard result (it is normally quoted for $d=3$). Thus one can think of the spectral variance at parameter $t$ as a determination of the dimension of a space from the heat capacity of an ideal gas that is  in it at the temperature $T=1/t$.
This makes an interesting connection with other work which proposes to use the thermal behaviour as a dimension measure~\cite{Amelino-Camelia_Brighenti_Gubitosi_Santos_2017}.

%%%%%%%%%%%%%%%%%%%%%%%%%%%%%%%%%%%%%%%%%%%%%%%%%

\subsection{Dimension of random fuzzy spaces}\label{sec:ranspec}
The spectral dimension and spectral variance can be calculated for fuzzy spaces whose continuum geometry is unknown. In this section, this is done for the class of random non-commutative geometries defined in~\cite{barrett_monte_2015}.
Before examining the spectral properties of ensembles of random geometries one needs to answer two questions. The first is how to average over the ensemble to extract meaningful geometric quantities. The second question is how to compare geometries so that differences in scaling do not dominate the comparison. For example, one may want to know if a geometry is close to a sphere of any radius, instead of a sphere of fixed radius. These issues are discussed before presenting specific results.

\subsubsection{Averaging dimension measures}
To examine the spectral geometry of the random fuzzy spaces there are three possibilities:
\begin{itemize}
\item Calculate the average spectrum, by calculating the averages of the eigenvalues, sorted by magnitude, and then determine the spectral dimension and variance of this.
\item Combine the spectra of all the geometries in the ensemble into one large spectrum
 and then calculate the spectral dimension and variance of this.
\item Calculate the spectral variance or dimension for each geometry in our ensemble of random fuzzy spaces and then average over these.
\end{itemize}
The first option is computationally efficient, however the exact physical meaning of the average of a sorted eigenvalue is unclear. It also ignores that the fluctuations in the eigenvalues might be correlated with each other which could have an important effect on the spectral estimators.

The second option has a certain elegance, since it uses all the data and it is customary to think of eigenvalue densities of a random ensemble as approximating a density of states.
However using the density as an approximation removes an important part of the spectral information, namely the spacing of the eigenvalues, from the system.
The artificial density then has little to do with the true compact space, since in the limit of infinitely many measurements, it leads to a continuous spectrum, which is characteristic of a non-compact space.
A related problem is that the lowest eigenvalues typically have a large variance, leading to unstable large $t$ behaviour of the spectral dimension.
Figure~\ref{fig:evHist} shows the distribution of individual eigenvalues of the Dirac operator for all three types at their respective phase transitions.
In particular for types $(1,3)$ and $(2,0)$ the distribution of the lowest eigenvalues is very wide.

\begin{figure}
  \subfloat[][Type $(1,3)$ $g_2=-3.70$  $N=8$ ]{\includegraphics[width=0.33\textwidth]{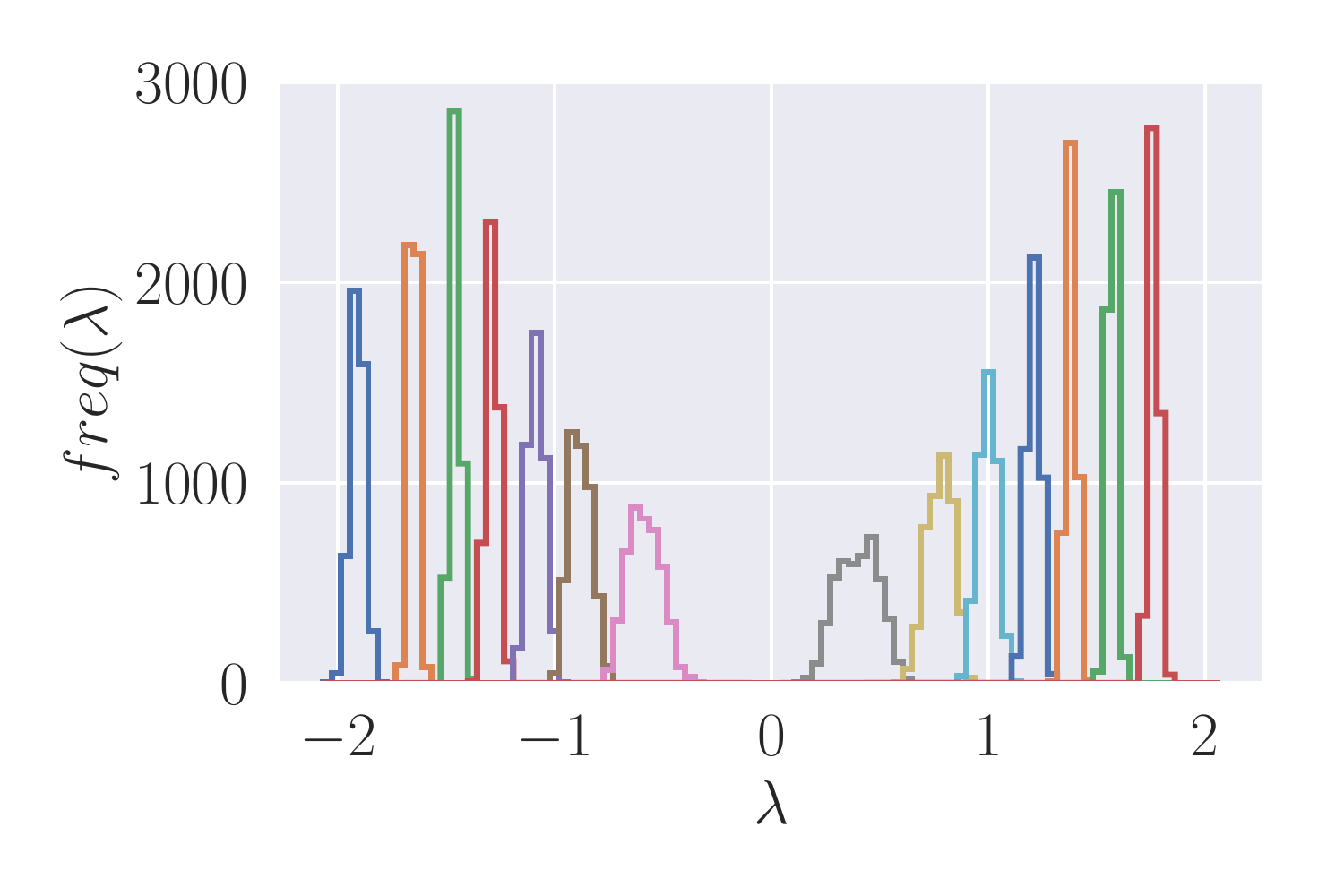}}
  \subfloat[][Type $(1,1)$ $g_2=-2.40$ $N=10$ ]{\includegraphics[width=0.33\textwidth]{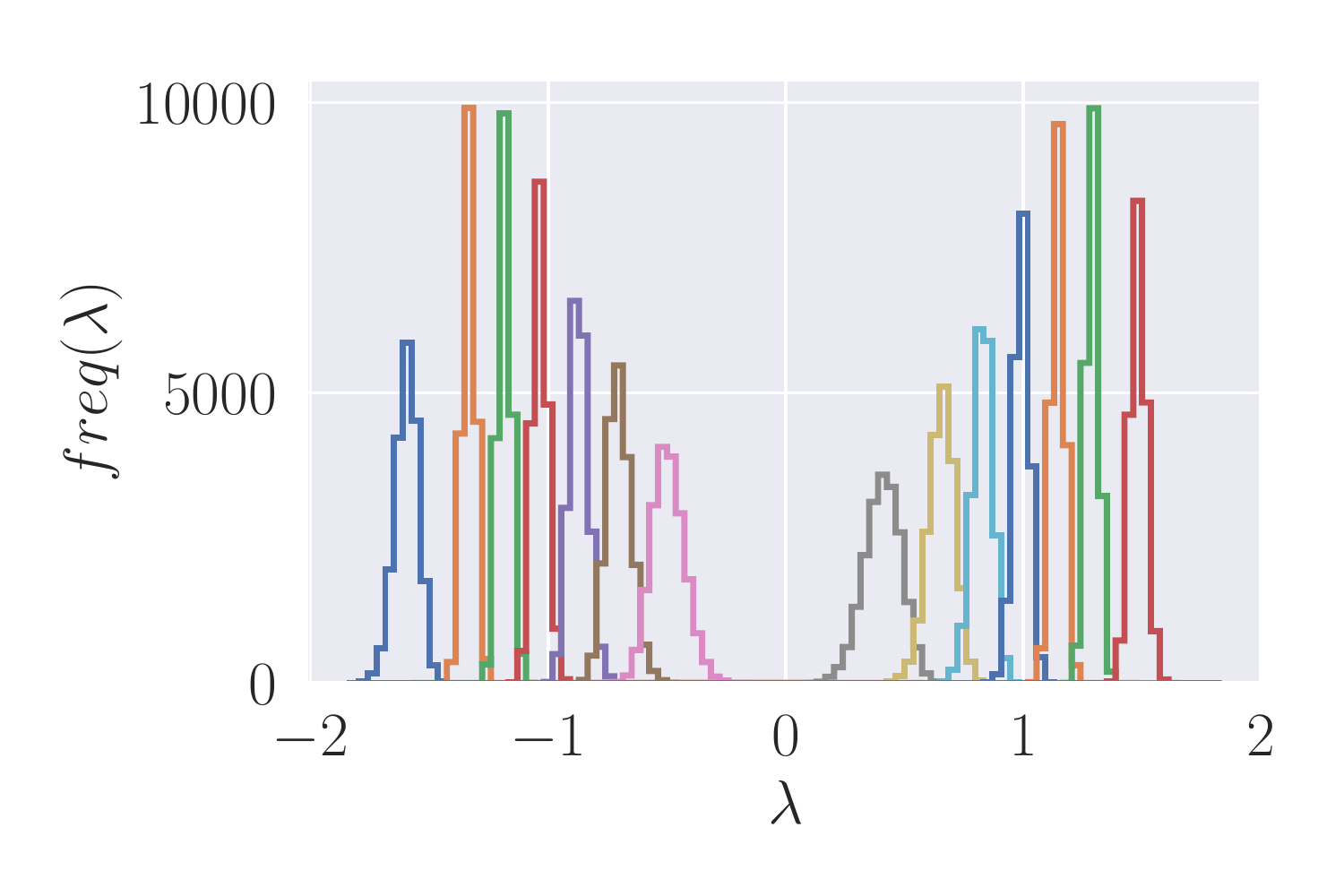}}
  \subfloat[][Type $(2,0)$ $g_2=-2.80$ $N=10$ ]{\includegraphics[width=0.33\textwidth]{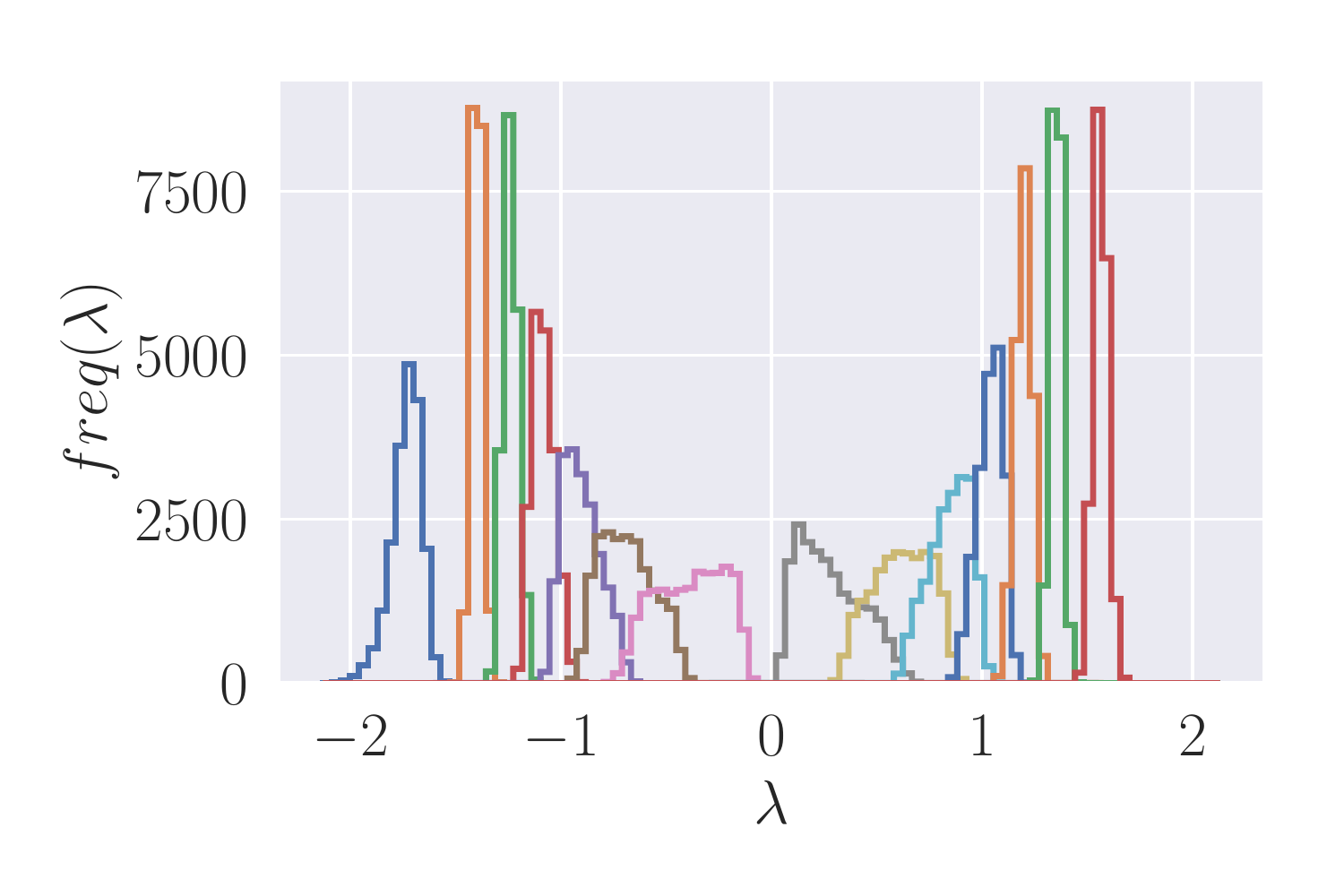}}
  \caption{The distribution of individual eigenvalues at the phase transition.
To make it possible to distinguish individual eigenvalues, only a subset of 13  equidistant eigenvalues $\lambda_0,\lambda_{N/13},\lambda_{2N/13},\dots$ is plotted.
Particularly for types $(1,3)$ and $(2,0)$, the smallest eigenvalues have a very wide distribution.}\label{fig:evHist}
\end{figure}

Hence this article uses the third option, which is comparable to the definition used in other studies of quantum space-time.
Since the spectral variance and dimension are, to some degree, physical, it is reasonable to average them over different geometries in our ensemble and expect the result to make sense.

It is interesting that away from the phase transition the spectral variance from the average eigenvalues and the average of the spectral variances are very similar, while at the phase transition these two definitions lead to quite different values.
This is illustrated in the left-hand plots in Figure~\ref{fig:vscomp}.
The right-hand plots of Figure~\ref{fig:vscomp} show the spectral variance for a sample of geometries taken from the respective ensembles.
This shows that the geometries that contribute to one ensemble can be very different, particularly close to the phase transition.
From the graphs one can see that the variance of the spectral variance at a given value of $t$ is large, i.e., of order 1, at $N=10$. A good question is whether this variance decreases for larger $N$. This will need further data to determine.

Looking at the plots in detail, one can see that several of the curves exhibit a maximum at a large value of $t$, in accordance with the discussion of~\eqref{eq:vsbump}. Apart from this feature, the curves in Figure~\ref{fig:vscomp}(a) are qualitatively similar to each other, as are the ones in Figure~\ref{fig:vscomp}(c). However at the phase transition, in Figure~\ref{fig:vscomp}(b), one sees curves similar to both (a) and (c), suggesting that the system spends some time on each side of the transition.

\begin{figure}
  \subfloat[][Type $(2,0)$ $g_2=-2.5$ $N=10$]{\includegraphics[width=0.42\textwidth]{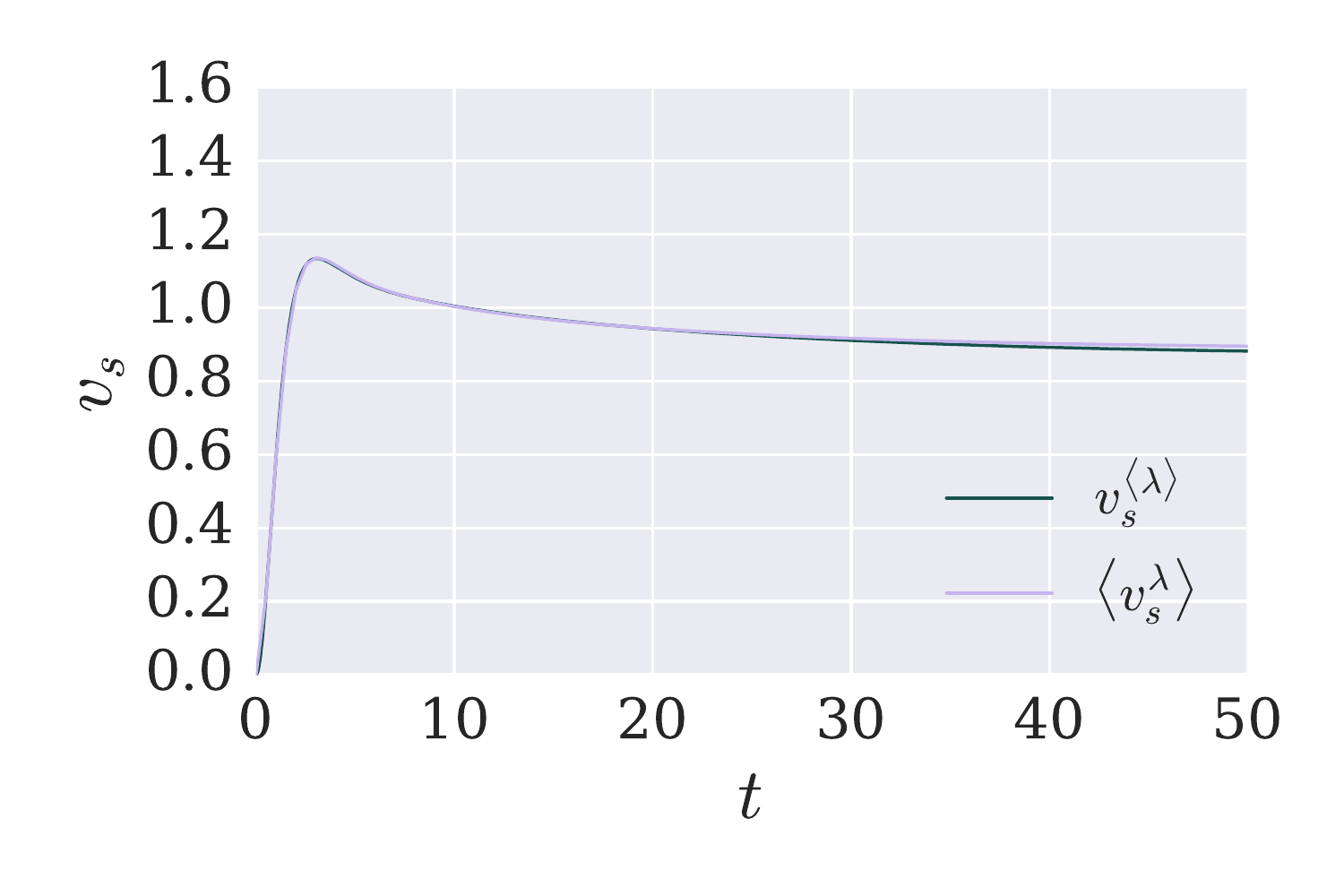}
  \includegraphics[width=0.42\textwidth]{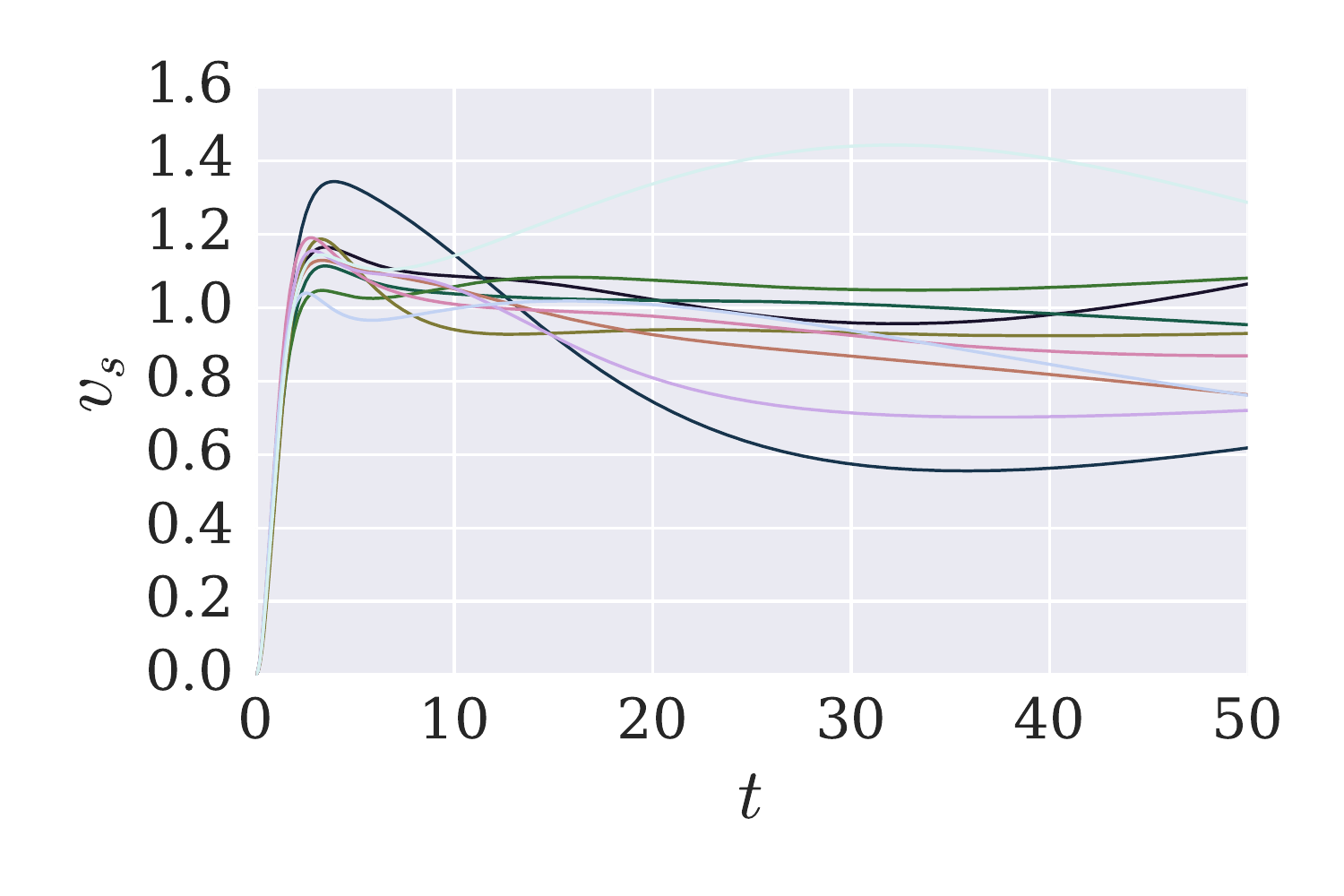}
  }\\
  \subfloat[][Type $(2,0)$ $g_2=-2.8$ $N=10$]{\includegraphics[width=0.42\textwidth]{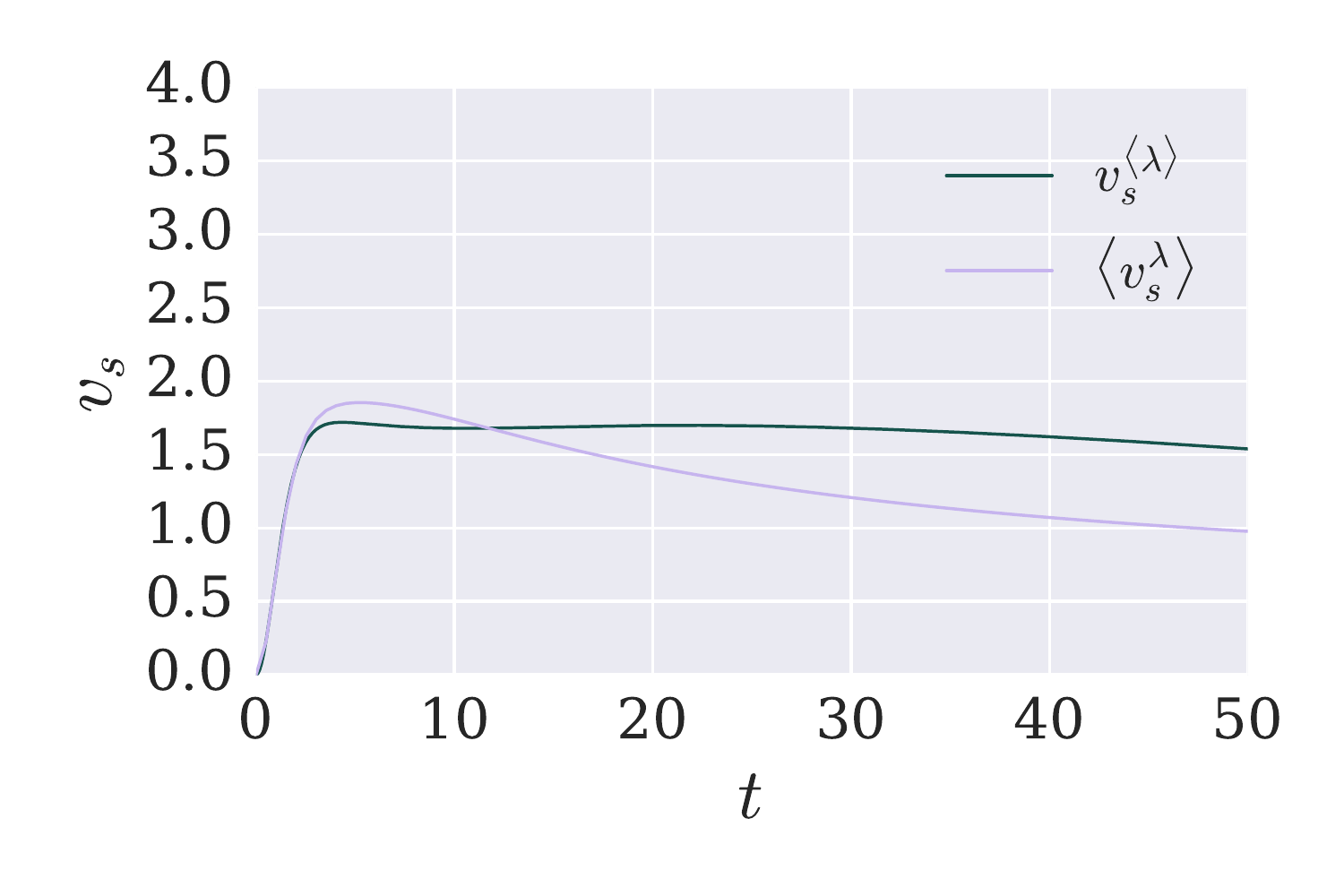}
  \includegraphics[width=0.42\textwidth]{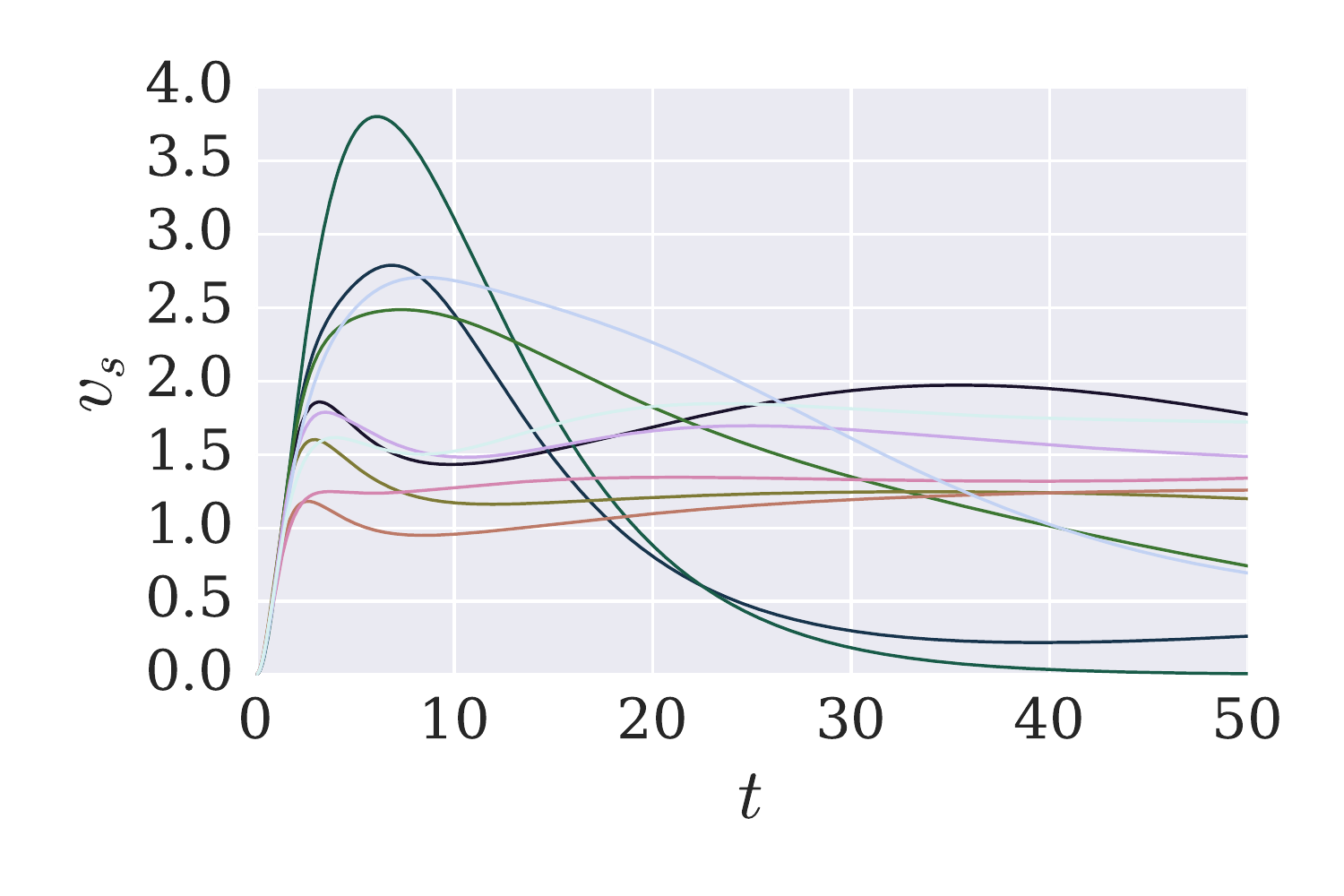}
  }\\
  \subfloat[][Type $(2,0)$ $g_2=-3.5$ $N=10$]{\includegraphics[width=0.42\textwidth]{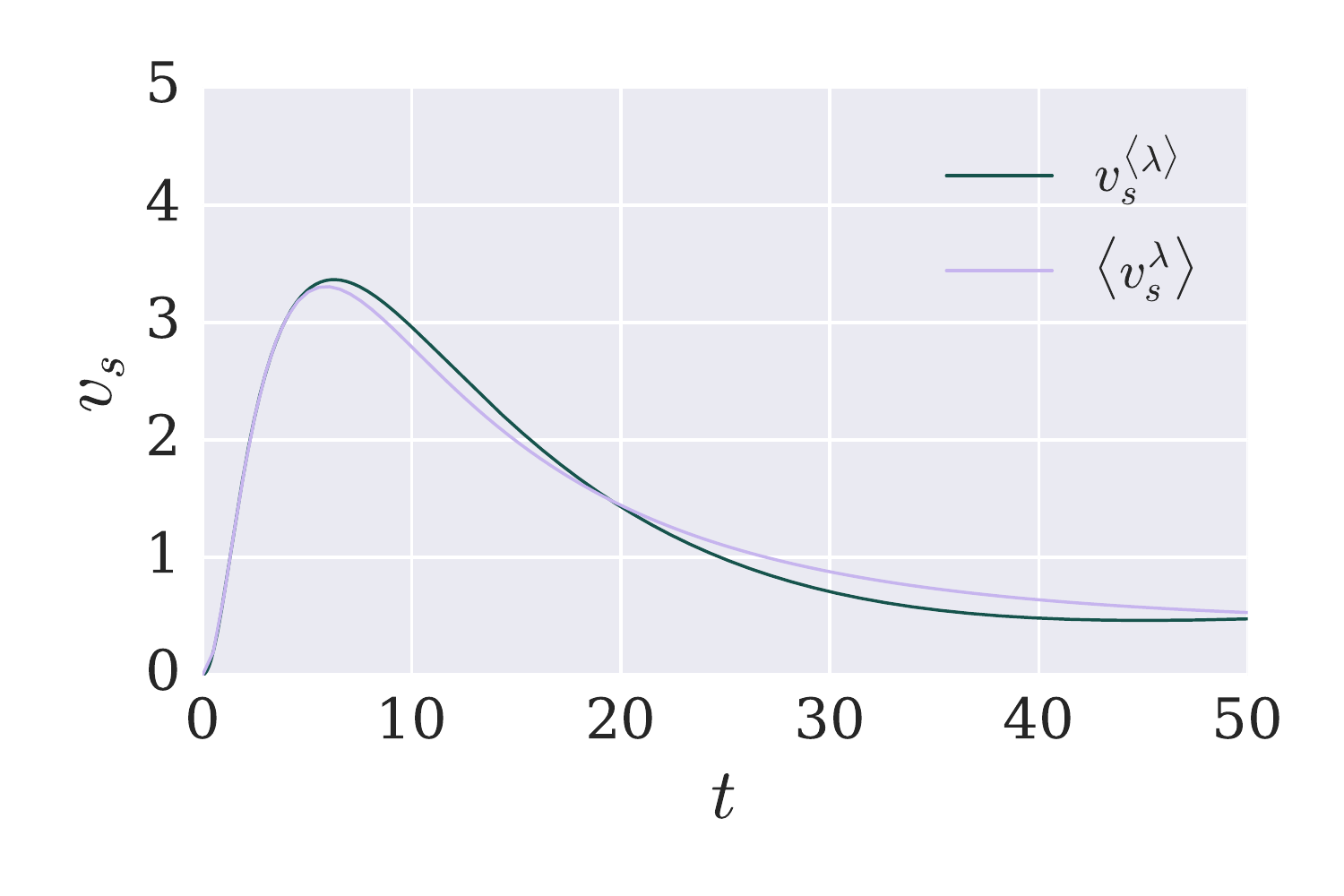}
  \includegraphics[width=0.42\textwidth]{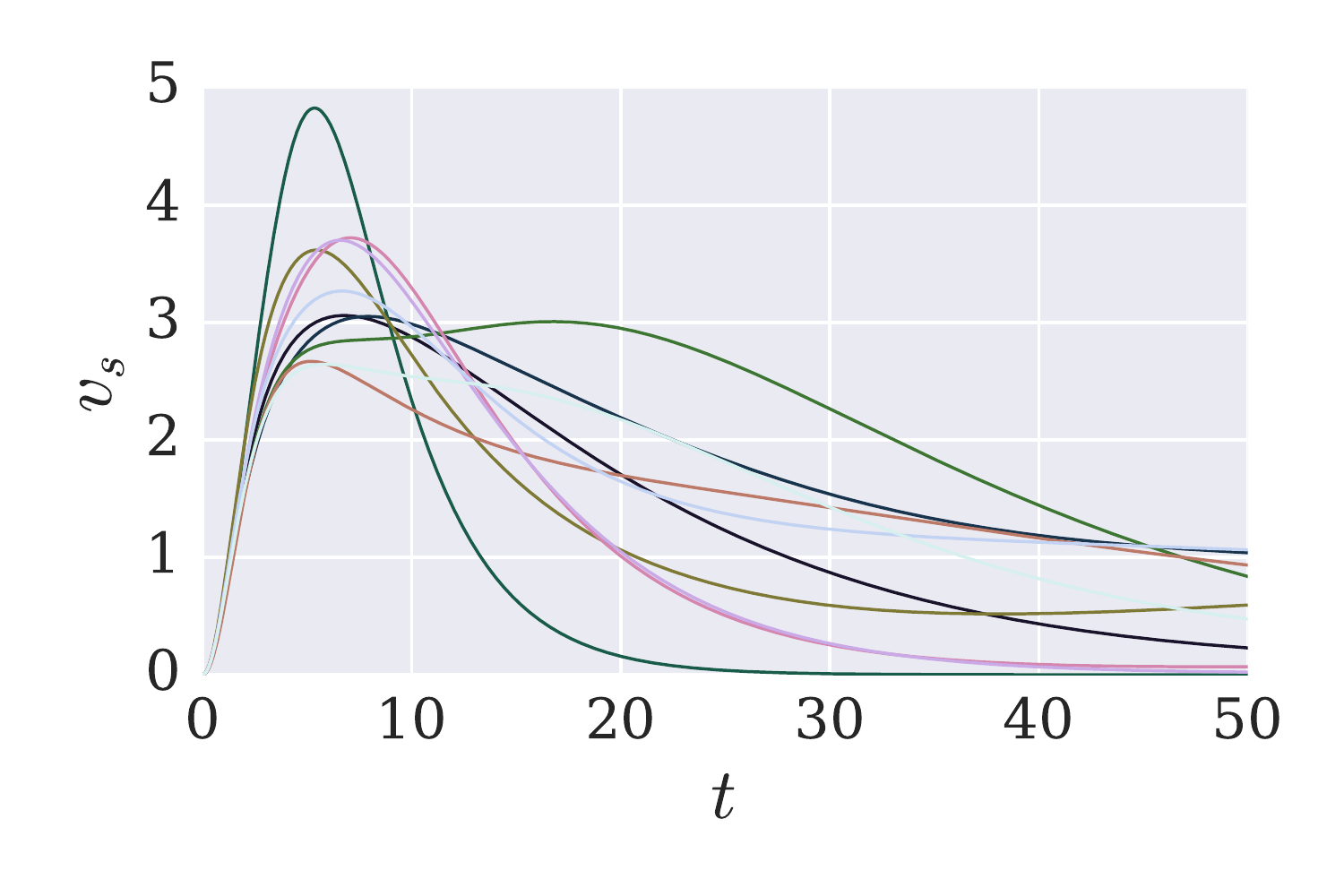}
  }\\
  \caption{\label{fig:vscomp} The left-hand plots are comparing the spectral variance as calculated from the average eigenvalues, $v_s^{\av{\lambda}}(t)$ with the average spectral variance $\av{v_s^{\lambda}(t)}$. In figure (a) the two lines are so similar that they almost can not be distinguished in this plot. The right-hand plots are showing 10 uncorrelated examples of spectral variances from the ensemble of geometries.}
\end{figure}

\subsubsection{Scaling}\label{subsec:EVscale}
If $D$ is a Dirac operator, then so is $\mu D$ for any constant $\mu\in\R$, with the eigenvalues scaling in the same manner. For a manifold the distances between points would scale as $\mu^{-1}$.
The spectral dimension and variance change by a rescaling of the parameter,
\begin{equation}
  d_s(\mu D,t)=d_s(D,\mu^2t), \quad   v_s(\mu D,t)=v_s(D,\mu^2t).
\end{equation}
 If this scaling is regarded as a trivial difference then to compare two Dirac operators effectively, one should rescale one of them to eliminate differences caused by scaling. There are several ways one could do this: the scaling can be such that
\begin{itemize}
	\item the minimum squared eigenvalues $\lambda^2_{\min }$ match
	\item the maximum squared eigenvalues $\lambda^2_{\max }$ match
	\item a scaling depending only on $N$ is used
\end{itemize}

In the case of random geometries, the first two are subject to some statistical uncertainty as one has to compute the scaling from the data.
Experience shows that the minimum eigenvalue is too `random' to be useful for this: since it varies in a range that includes zero, one may find a geometry in which the minimum eigenvalue is uncharacteristically small, leading to a large rescaling that does not reflect the properties of the rest of the geometry.

The maximum eigenvalue is much more stable; in our numerical simulations its variance is small and decreasing with $N$.
So scaling to match the largest eigenvalues is used in the rest of this section when comparing different geometries.
To illustrate this an example is shown in Figure~\ref{fig:lambdamaxfix}, in which random (2,0) geometries are compared with the fuzzy sphere. This is done by rescaling the Dirac operator so that $\lambda_{\max} \to N$.

\begin{figure}
\subfloat[][\label{fig:lmaxfixsmall}$g_2=-2.5$]{\includegraphics[width=0.5\textwidth]{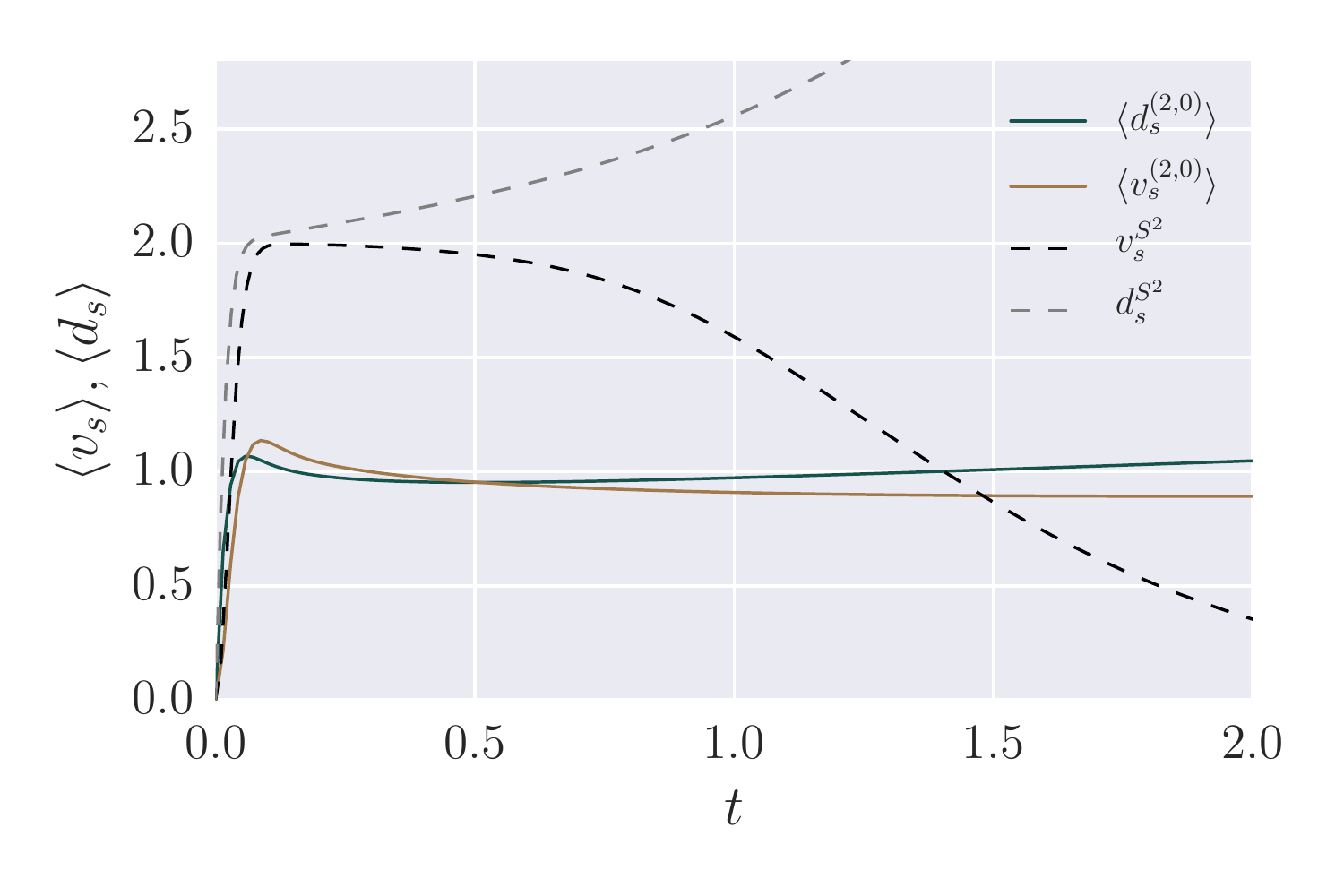}}
\subfloat[][\label{fig:lmaxfixlarge}$g_2=-3.5$]{\includegraphics[width=0.5\textwidth]{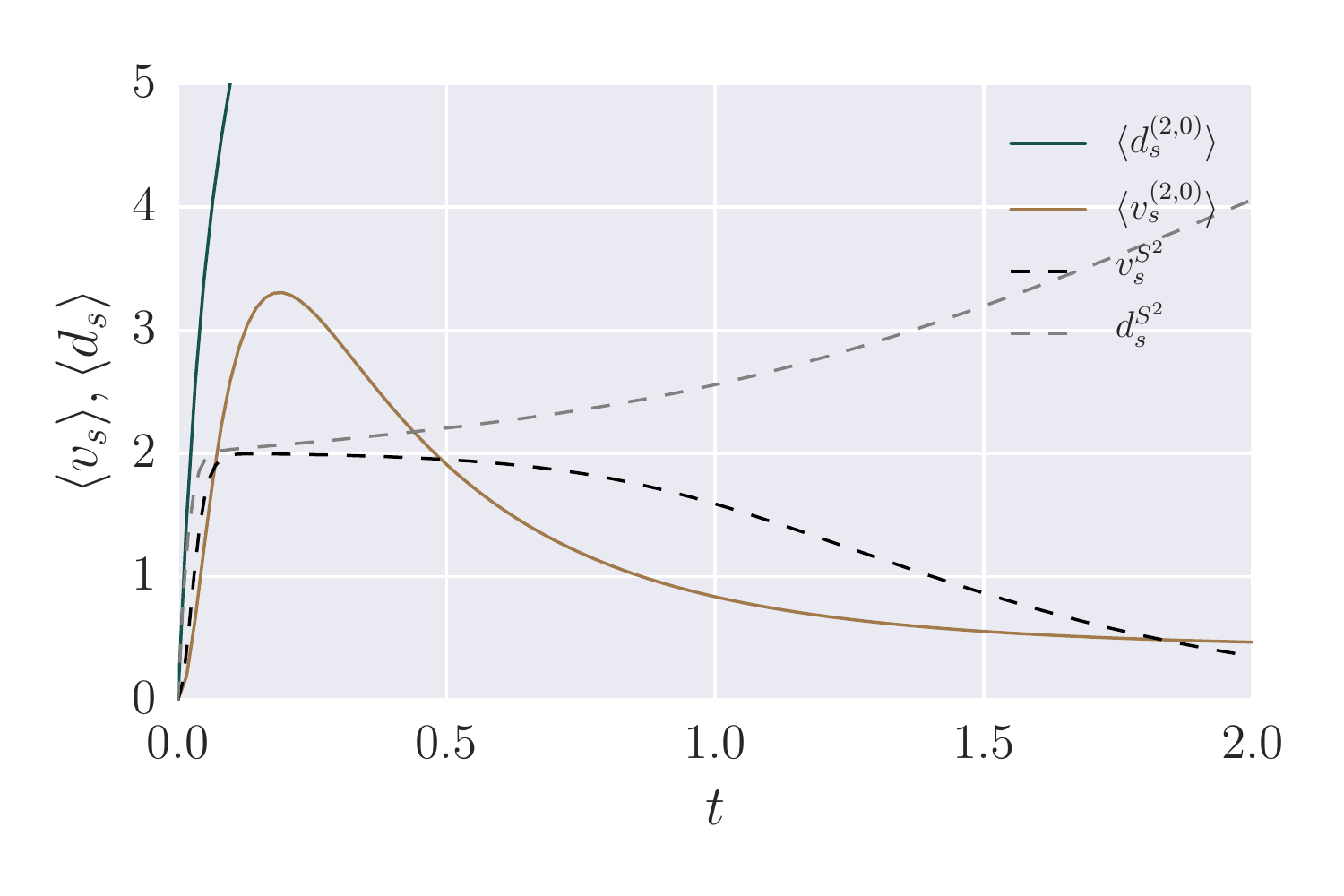}}
	\caption{\label{fig:lambdamaxfix} The spectral variance and dimension for type $(2,0)$ geometries with $N=10$ at $g_2=-2.5,-3.5$. In this figure the spectra were rescaled to match their largest eigenvalues to the largest eigenvalue of the fuzzy sphere for $N=10$.}
\end{figure}

In these plots the spectral dimension and variance for both coupling constants show their characteristic behaviour on similar scales as those of the fuzzy sphere.
In an intuitive sense, rescaling with the maximum eigenvalue can be thought of as an attempt to compare discrete spaces with the same discreteness scale or Planck length.

The choice of action strongly restricts the maximum eigenvalue of the distribution, hence rescaling all eigenvalues with an $N$-dependent factor leads to a result very similar to rescaling them with the maximum eigenvalue.
It is thus sufficient to show only the second option in detail.

The spectral dimension and variance plotted here have error bars, corresponding to the uncertainty on the  average value calculated.
These uncertainties are shown as shaded regions, and are calculated using a jackknife algorithm as described in~\cite{Newman_Barkema_1999}.

\subsubsection{Comparing type $(1,3)$ with $S^2$}
Random geometries of type $(1,3)$ have the same Clifford type as the fuzzy sphere. In addition, the spectra appear to be similar at the phase transition~\cite{barrett_monte_2015,glaser_scaling_2016}. This is investigated further here by comparing the spectral dimension and spectral variance.

The first step taken to explore these random geometries is to look at how the spectral variance changes with the action coupling $g_2$, examined in Figure~\ref{fig:13g2comp}.
The first impression is that the curve changes rapidly around $g_2=-3.7$, which is the  $g_2$ value of the phase transition as found in~\cite{glaser_scaling_2016}. The curve at $g_2=-3.7$ has an interesting behaviour, with a maximum approaching $2$, which is the dimension of the fuzzy sphere. The spectral dimension and spectral variance curves for this $g_2$ are rescaled and compared with the fuzzy sphere in Figure~\ref{fig:13vsS2}.

\begin{figure}[b]
  \begin{minipage}{0.49\textwidth}
    \includegraphics[width=\textwidth]{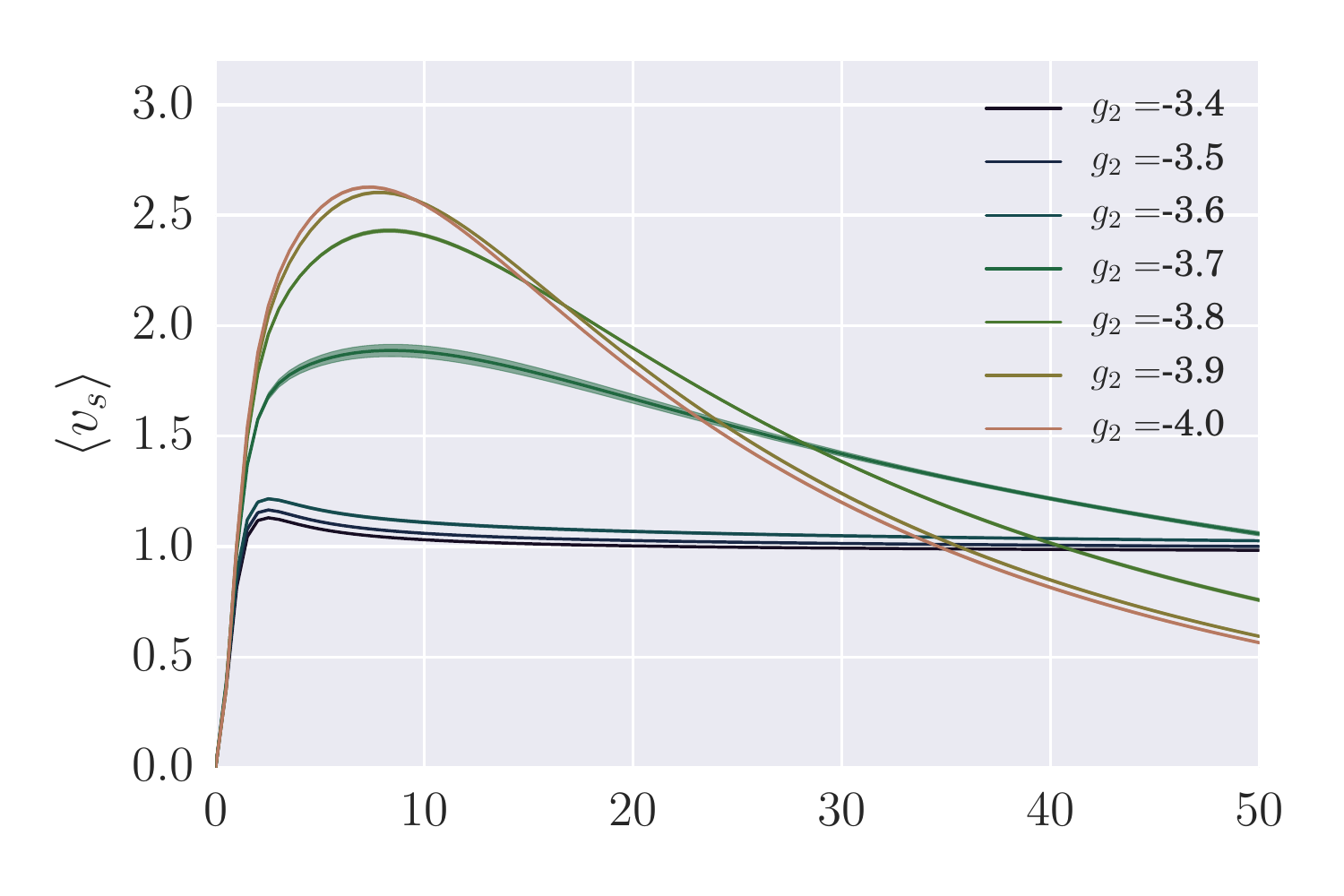}
    \caption{\label{fig:13g2comp}Comparing the spectral variance for geometries of type $(1,3)$ at $N=8$ for different $g_2$. } \end{minipage}\hfill
  \begin{minipage}{0.49\textwidth}
    \includegraphics[width=\textwidth]{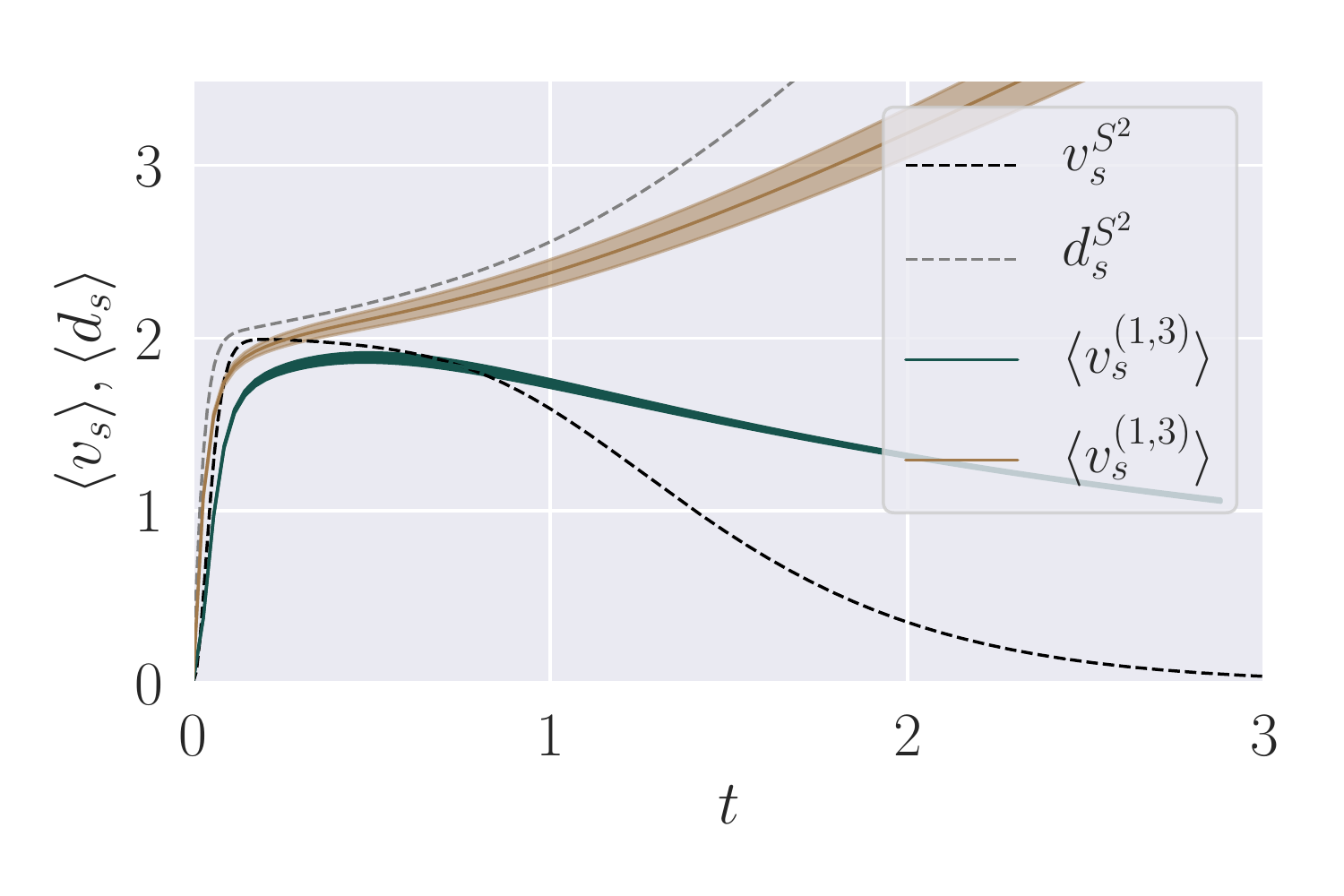}
      \caption{\label{fig:13vsS2}Comparing the spectral dimension and variance for random geometries at the critical point $g_2=-3.7$ for $N=8$ with eigenvalue rescaling to match the maximum eigenvalue to that of the fuzzy $S^2$ of the same size.}
  \end{minipage}
\end{figure}

While the $(1,3)$ geometry and the fuzzy sphere do not agree completely, they behave in a similar way.
For the random geometries the spectral variance and dimension both rise slower at small $t$.
This is due to the fact that the density of eigenvalues for the fuzzy sphere continues to rise linearly almost up to the end, while the growth of the density of eigenvalues for the random fuzzy spaces slows down.

For the random geometries the measures also decay or rise slower at large $t$. This is because the lowest eigenvalue of the random fuzzy geometries is typically much smaller than that of the fuzzy sphere, even after rescaling.
The reason for this is that the fuzzy $S^2$ has degenerate eigenvalues due to the spherical symmetry, while for a generic random fuzzy space the eigenvalues have the minimal allowed multiplicity (which is $2$ for $(1,3)$ geometries, as explained in~\cite{barrett_monte_2015}), to maximise the entropy.

\subsubsection{Comparing type $(2,0)$ and $(1,1)$}
The geometries of type $(2,0)$ and $(1,1)$ were already examined in~\cite{barrett_monte_2015,glaser_scaling_2016}, so it is interesting to add the understanding of their dimension.
The most important difference found in~\cite{glaser_scaling_2016} is that the phase transition for the type $(1,1)$ leads to much weaker correlations and the accompanying shift in behaviour is much more gradual than for type $(2,0)$, hence no large jumps between different $g_2$ values for the type $(1,1)$ geometries are expected.

For type $(2,0)$, the maximum of the spectral variance rises with lowering $g_2$, just as for type $(1,3)$.
As shown in Figure~\ref{fig:type20g2comp}, the spectral variance near the phase transition ($g_2=-2.8$) has a maximum value close to $2$ and is qualitatively similar to the fuzzy sphere. This suggests a 2-dimensional geometry at the phase transition but more work would be needed to substantiate this.

For type $(1,1)$, the spectral variance close to the phase transition ($g_2=-2.4$) reaches values above $2$ and the overall shape of the curve is very different from that of the $2$-sphere. In particular, the curves just below and just above the phase transition are not as different from the curve at the phase transition as they are for type $(2,0)$. This is shown in Figure~\ref{fig:type11g2comp}.

\begin{figure}
\begin{minipage}{0.49\textwidth}
  \includegraphics[width=\textwidth]{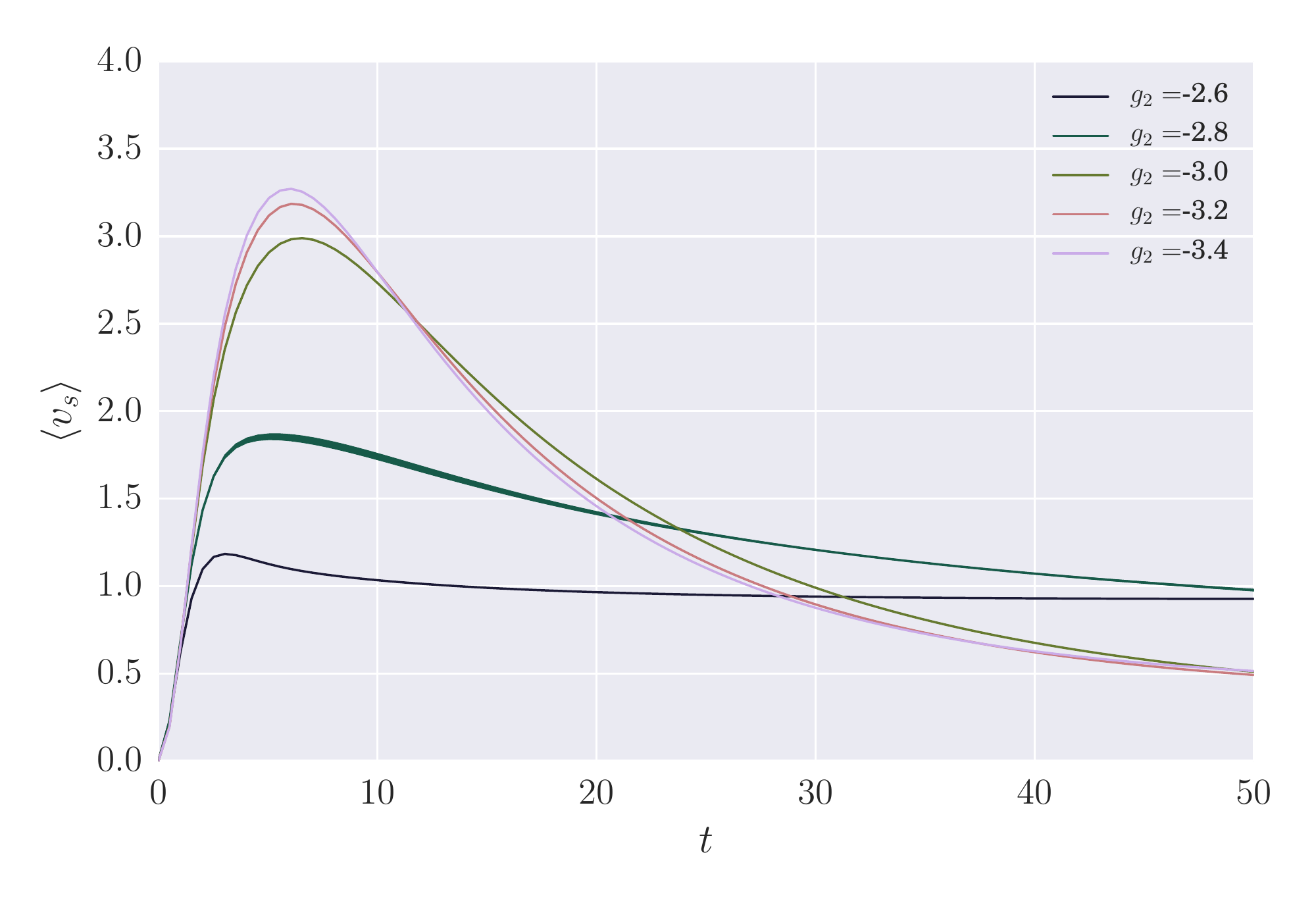}
  \caption{Spectral variance for type $(2,0)$ at different values of $g_2$ for $N=10$.
}\label{fig:type20g2comp}
\end{minipage}\hfill
\begin{minipage}{0.49\textwidth}
  \includegraphics[width=\textwidth]{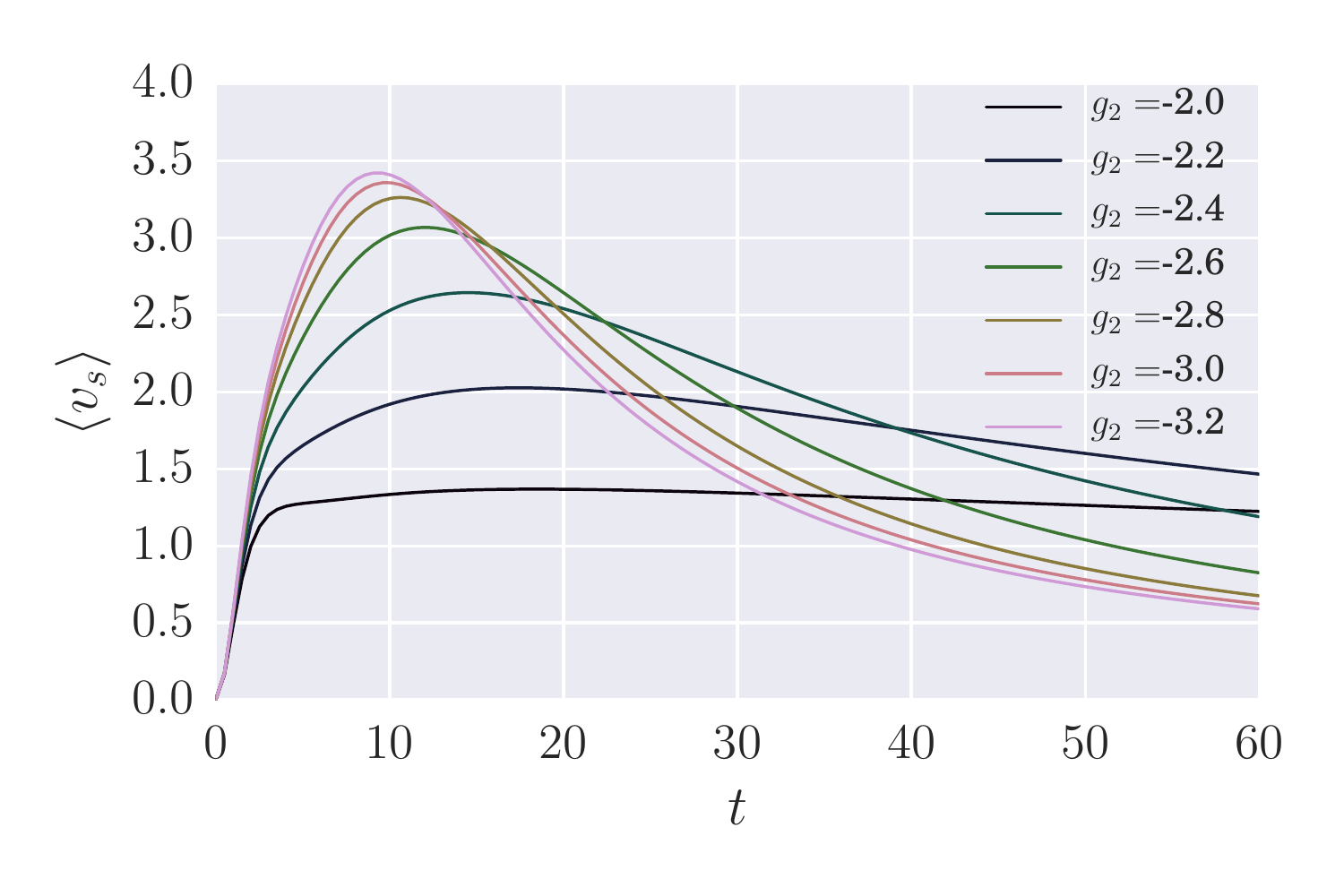}
\caption{Spectral variance for type $(1,1)$ at different values of $g_2$ at $N=10$.}\label{fig:type11g2comp}
\end{minipage}
\end{figure}

It is instructive to plot the spectral variance for all values of $g_2$ for both types as in Figure~\ref{fig:1120comp_mono}, where type $(1,1)$ is petrol-blue and type $(2,0)$ is yellow.
To ensure comparability both are rescaled so that their maximum eigenvalue is $N$.
For type $(1,1)$ the maximum value of the spectral variance is changing slowly, with almost equal distance between the different $g_2$ lines, up to very large $g_2$ for which a saturation seems to be reached.
For type $(2,0)$ on the other hand the curves lie very close together for low $g_2$ and high $g_2$ with a fast change happening in an intermediate regime.
The black dashed line gives the spectral variance of the sphere rescaled in the same way.
Comparing the spectral variances of the two types to this line shows that none of them has particularly similar behavior.

\begin{figure}
\includegraphics[width=0.8\textwidth]{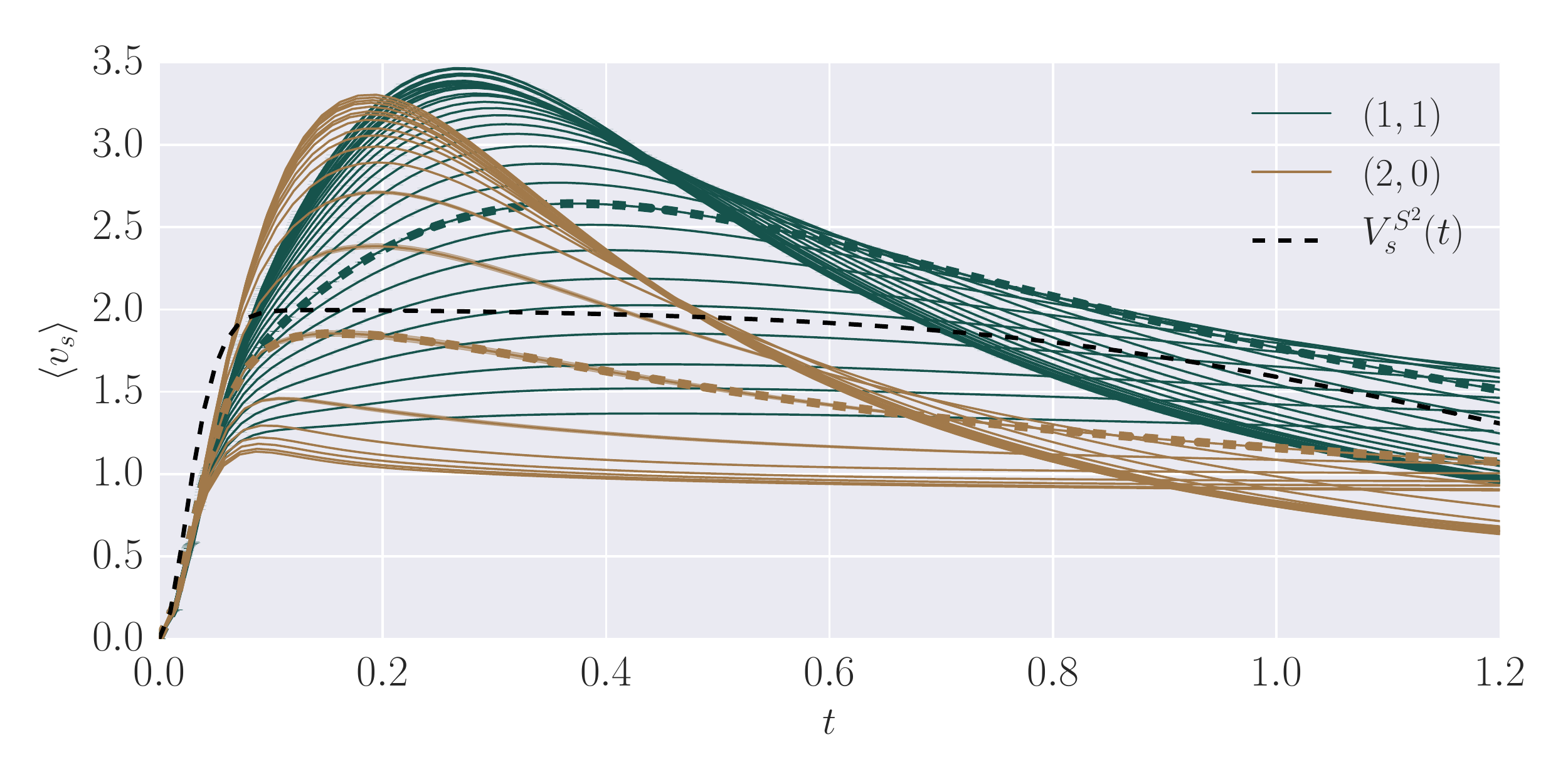}
\caption{Comparing the spectral variance of type $(2,0)$ and type $(1,1)$ random geometries for different couplings $g_2$, rescaled to have maximum eigenvalue $\lambda_{\max}=N$ at $N=10$. The thicker dotted lines for the spectral dimensions of type $(1,1)$ and $(2,0)$ mark the phase transition points.}\label{fig:1120comp_mono}
\end{figure}

These plots show that the tentative conclusion of~\cite{barrett_monte_2015} that the geometries behave similarly  does not survive more detailed examination. It confirms the differences found in~\cite{glaser_scaling_2016} but remains purely qualitative.
To make quantitative judgements one needs more tools, like the zeta-function distance to be introduced in section~\ref{sec:zetadist}.

\subsubsection{The maximum spectral variance}

The average spectral variance curve is zero at $t=0$ and $t\to\infty$ and so has a maximum value. There has been only one local maximum in all of the cases of random geometries studied here. This maximum value $max(\av{v_s})$ is therefore a very crude estimate of the dimension. The limitation of this approach is that the maximum varies widely within each ensemble and so the interpretation is not so clear. Nevertheless, it still proves instructive to plot $max(\av{v_s})$ for the random geometries.

\begin{figure}
\subfloat[][Type $(1,1)$]{\includegraphics[width=0.5\textwidth]{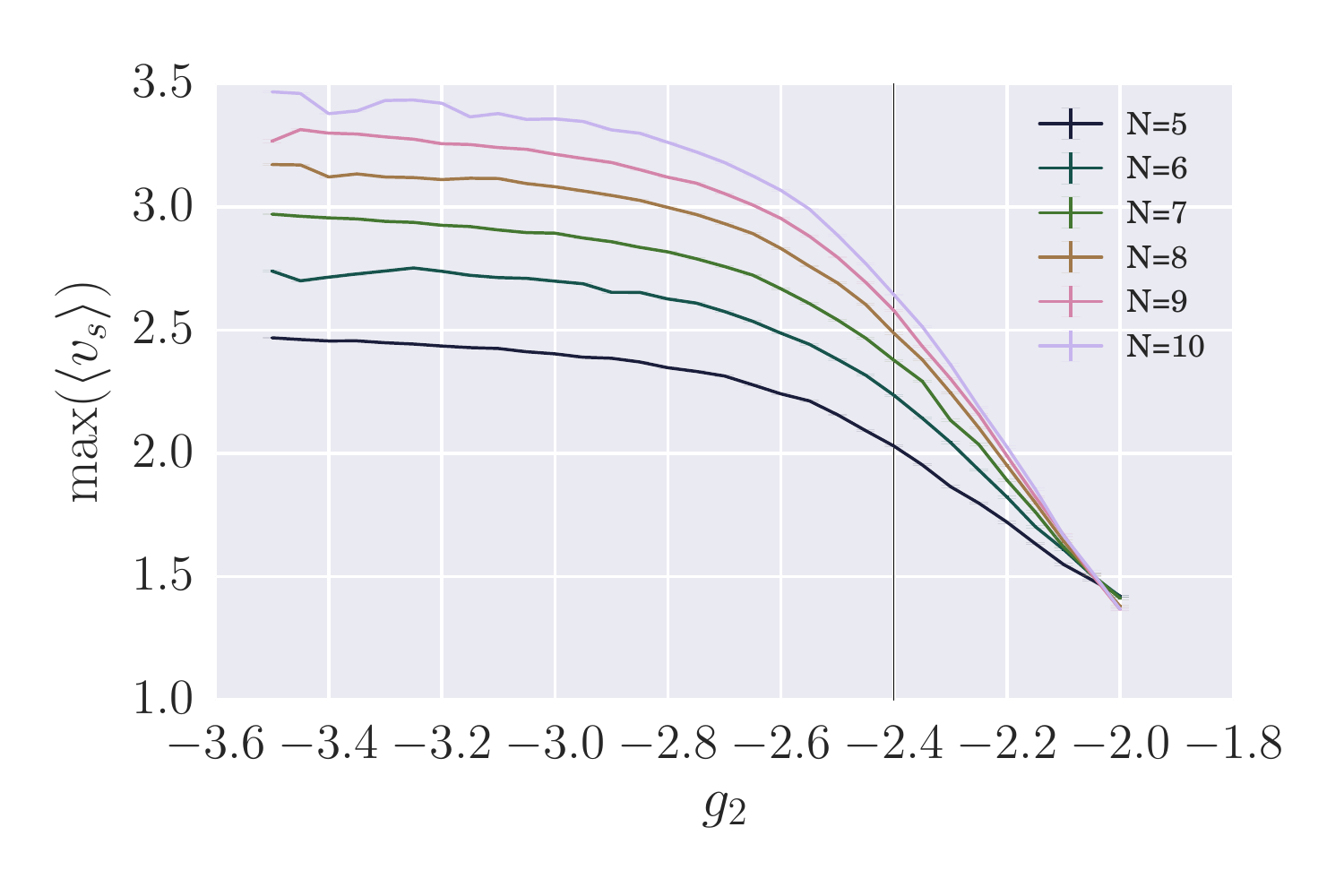}}
\subfloat[][Type $(2,0)$]{\includegraphics[width=0.5\textwidth]{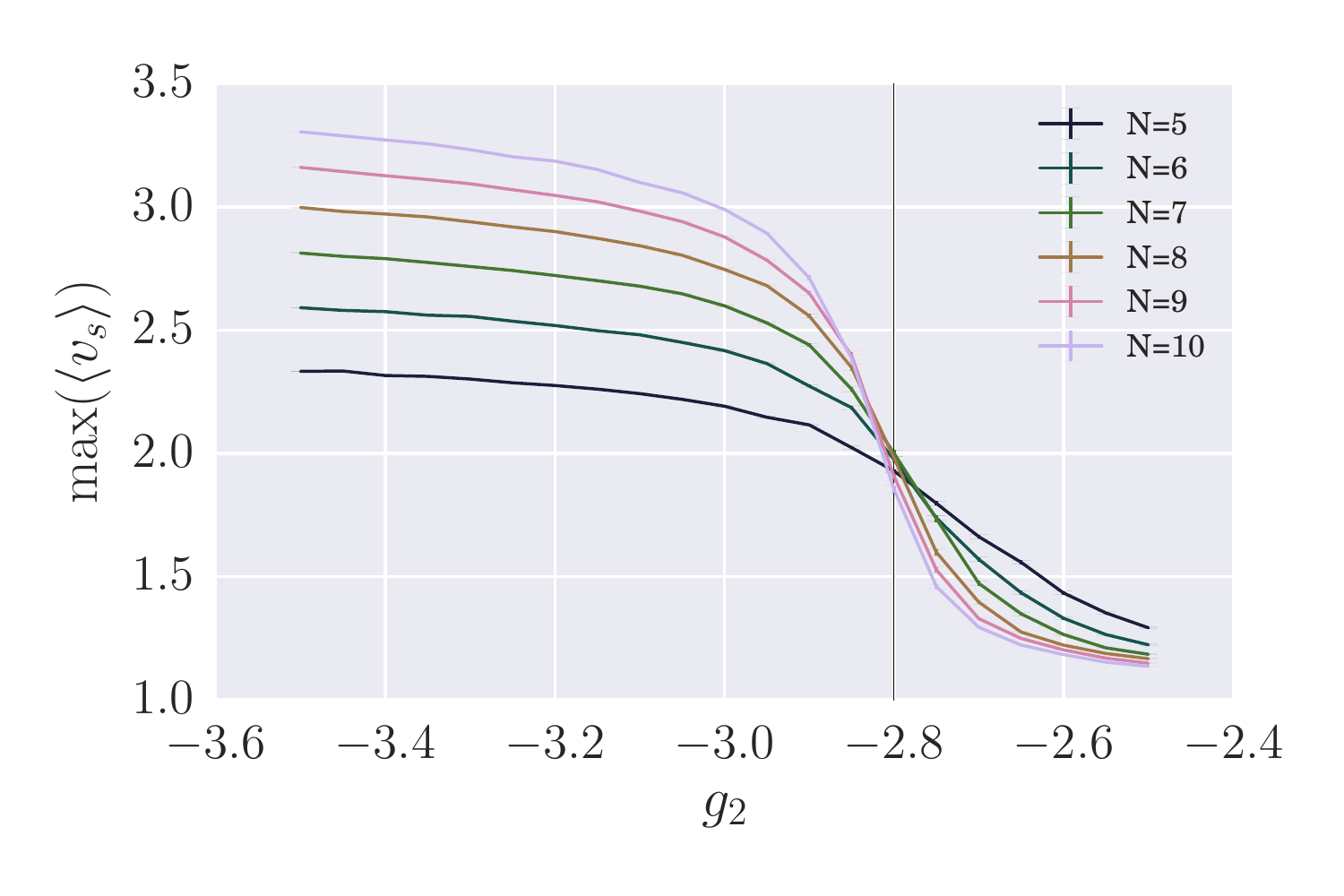}}

\subfloat[][Type $(1,3)$]{\includegraphics[width=0.6\textwidth]{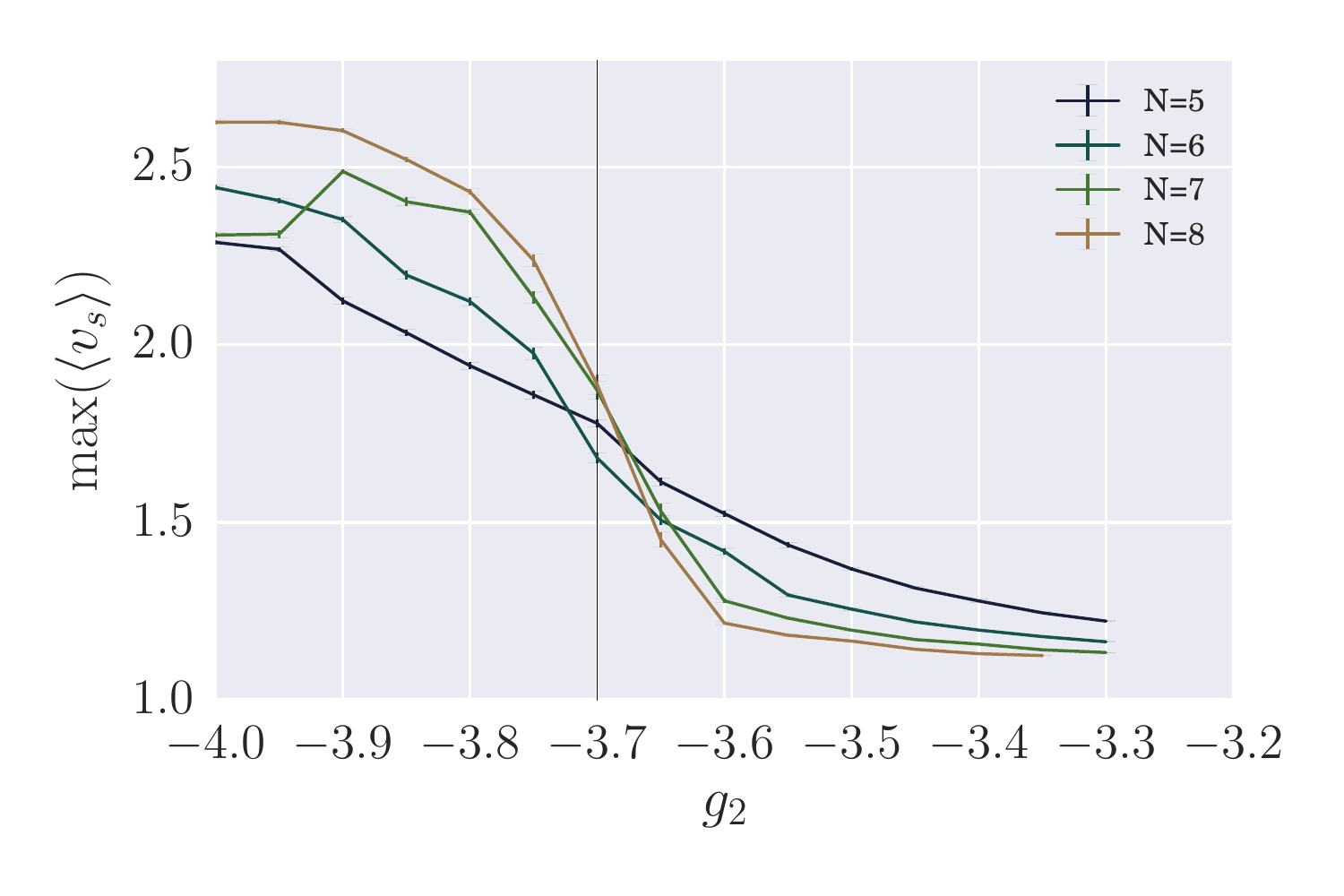}}
  \caption{The maximum value of the average spectral variance plotted against $g_2$.
The vertical lines indicate the phase transition points determined in~\cite{glaser_scaling_2016}.}\label{fig:maxVs}
\end{figure}

Plotting the maximum value against $g_2$ leads to the curves in Figure~\ref{fig:maxVs}.
The separate lines in that figure are for different values of the matrix size $N$.
For fixed $N$, the maximum rises as $g_2$ becomes more negative.
It starts out around $1$ and then rises to large values after the phase transition.
A particularly interesting feature is that for types $(2,0)$ and $(1,3)$ the maximum of the spectral variance around the phase transition seems to be close to $2$ independent of $N$.
An optimistic interpretation of this would be that the behaviour at the phase transition shows a certain scale freedom and might remain the same in the large $N$, continuum, limit.
For type $(1,1)$ on the other hand the point of intersection seems to lie around $g_2=-2.05$ and has a value of about $1.5$. This $g_2$-value is considerably below the phase transition point determined in~\cite{glaser_scaling_2016}, however of the three geometries examined type $(1,1)$ had the least clear signal at the phase transition, which makes the determination of the phase transition in this case less certain.
A priori, no reason exists to expect all of the spectral variances to cross in close proximity.  Therefore this intersection is another marker of interesting behaviour at the phase transition.

\section{Volume measures}\label{sec:zeta}

As discussed above, Weyl's law connects the dimension and the volume of a Riemannian manifold to the growth of the eigenvalues of the Dirac operator. This suggests that it might be possible to extend the definition of volume to the case of fuzzy spaces.
This section summarises the problems encountered in trying to do this.

\subsection{The zeta function and asymptotic volume measures}
\label{sec:ZetaAndVol}
One method to extract information about a manifold from its spectrum is to study the spectral zeta function of the Laplace-Beltrami operator~\cite{minakshisundaram_properties_1949}.
Many of the results obtained thus generalise to operators of Laplace type, such as the squared Dirac operator examined in this work~\cite{LawsonJr:1989ub}.
Let $\{ \lambda_i\}$ be the set of non-zero eigenvalues of the Dirac operator $D$ on a compact Riemannian spin manifold. Then, for large enough $\mathrm{Re}(s)$, its spectral zeta function (\SPZ) is defined as
\begin{align}
\zeta(s) = \sum\limits_n {(\lambda_n^2) }^{-s} \;,
\label{eq:zeta1}
\end{align}
and can be analytically continued to a meromorphic function on the whole complex plane.
As $D^2$ is a positive operator, the spectral zeta function is the Mellin transform of the heat kernel~\cite{Gilkey}
\begin{align}
\zeta(s) &= \frac{1}{\Gamma(s)}\int\limits_0^\infty t^{s-1} K(t) \md t \;.
	\label{eq:HKzeta}
\end{align}
Here,  $\Gamma(s)$ is Euler's gamma function. Using the asymptotic expansion of the heat kernel of~\eqref{eq:HKexp}, one can compute the expansion coefficients $a_i$ by
\begin{align}
	a_i = \Res_{s=(d-i)/{2}}(\Gamma(s)\zeta(s)).
\end{align}
Note that these definitions also work in the case of the Dirac operator from a finite spectral triple, in which case the only poles are those of the gamma function.

It is known that for a closed Riemannian manifold, $M$, with $D^2$ an operator of Laplace type on a vector bundle over $M$, the expansion coefficients are only non-zero for even values of $i$
and can be expressed in terms of local geometric invariants~\cite{Vassilevich:2003ik}.
The first heat kernel coefficient is
\begin{align}
  a_0(D^2) &= \frac{k}{{(4\pi)}^{{d}/{2}}}  \Vol(M)
\end{align}
where $k$ is the dimension of the spinor space at a point.
This term corresponds to the right-most pole of the zeta function at $s= {d}/{2}$.
This pole can be seen at $s=1$ in the continuum $2$-sphere, for which the \SPZ{} is proportional to the Riemann zeta
function $\zeta_R(2s-1)$ (c.f.~Figure~\ref{fig:ZetaS}).
\begin{figure}
\centering
\includegraphics[width=0.5\textwidth]{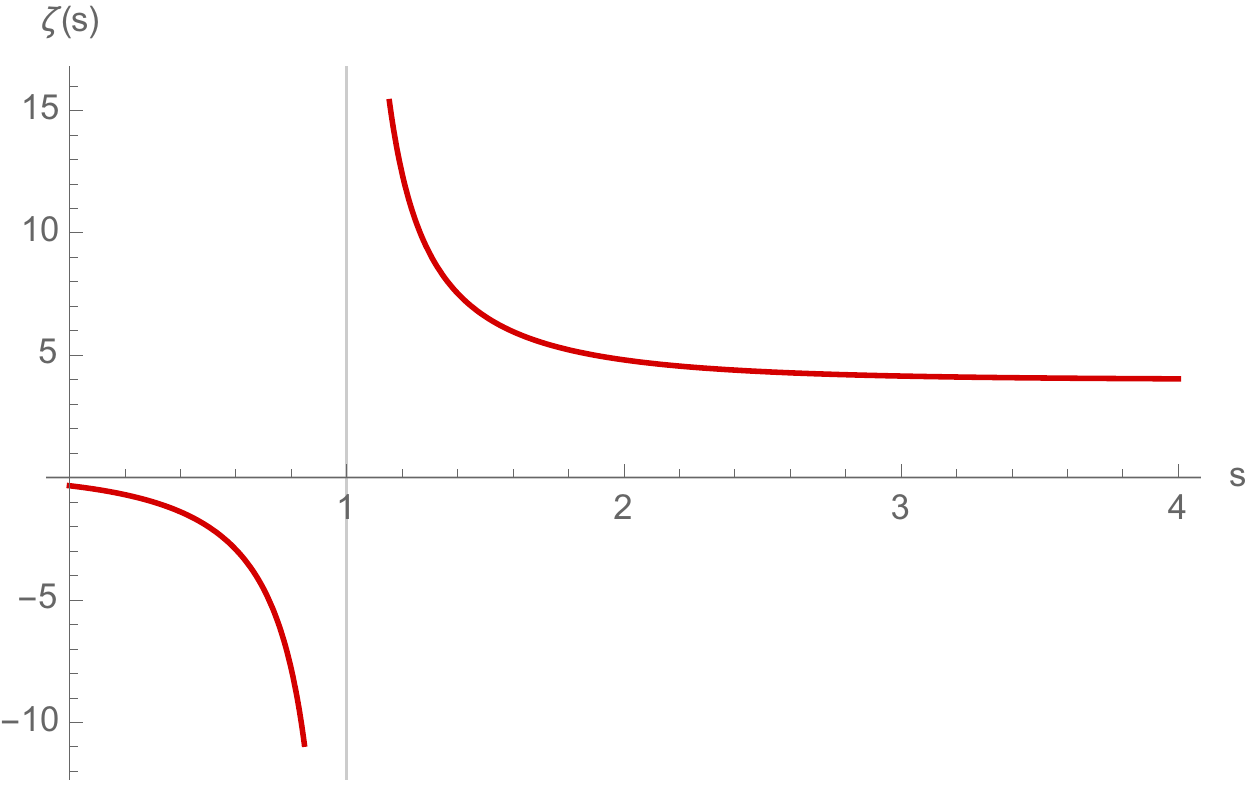}
\caption{\label{fig:ZetaS}The \SPZ{} for the continuum $2$-sphere. The pole at $s=d/2=1$ indicates the dimension.}
\end{figure}

For a fuzzy space, such as those outlined in~\cite{barrett_matrix_2015}, there are a finite number of eigenvalues, and hence the sum in~\eqref{eq:zeta1} is finite, which automatically regularises the poles. One way to investigate the finite analogue of the poles is to use a finite analogue of the Dixmier trace. On a manifold, with the eigenvalues ordered so that the sequence $\lambda_n^2$ is non-decreasing,
one can use the following
\begin{equation}
b_j(s)=\frac1{ \log  j}\sum_{n=1}^j (\lambda_n^2)^{-s} ,
\end{equation}
with $s=d/2$ to approximate the \SPZ{} residue at $s=d/2$.
The Dixmier trace is the limit of the bounded sequence $b_j(d/2)$~\cite[Proposition 4 p.306]{Connes1994}  and
\begin{equation}
\lim_{j\to\infty} b_j(\frac{d}{2})=\frac{2}{d} \Res_{s=d/2}\zeta(s).\label{eq:dixmierlimit}
\end{equation}
Thus it is assumed that $b_j(d/2)$ is a good approximation for $2/d$ times the residue of the pole, according to~\eqref{eq:dixmierlimit}.

In the spirit of the previous sections, one might consider using the poles of the zeta function to define a dimension measure.
For fuzzy spaces that approximate a continuum manifold in the large $N$ limit, the location of this pole should become apparent when comparing the zeta function for different $N$ values.
In this case the largest value of $s$ for which the zeta function diverges logarithmically in $j$ would be the location of the pole.
This is done in Figure~\ref{fig:ZetaFuzzyS} for the fuzzy sphere, where the sum $b_j(s)$ is plotted.
For the fuzzy sphere defined with matrix size $N$ the number of eigenvalues is $j=4 N^2$.
The pole is where the plots cross, as $j\to\infty$.
\begin{figure}
\includegraphics[width=0.65\textwidth]{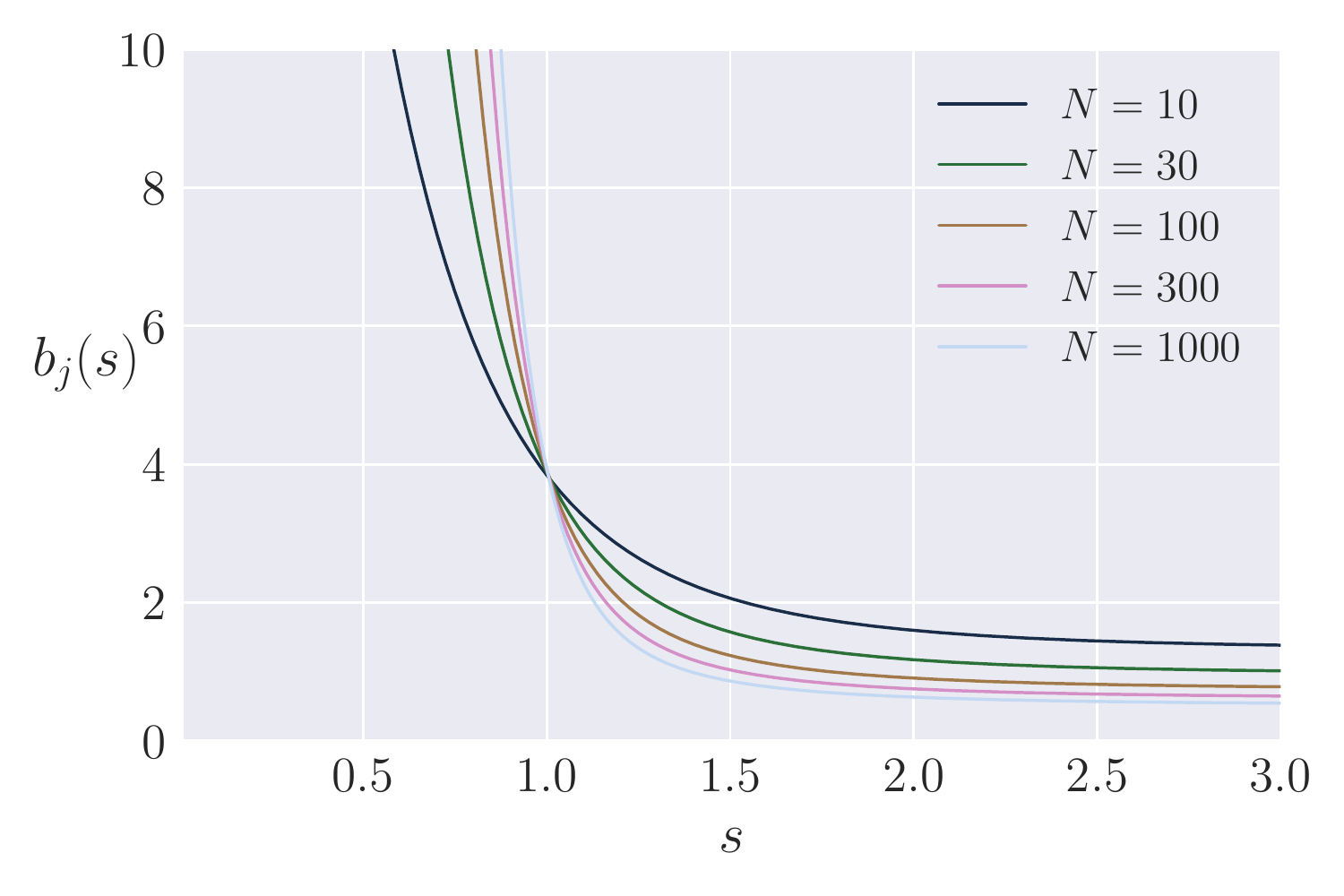}
\caption{\label{fig:ZetaFuzzyS}The \SPZ{} for the fuzzy $2$-sphere for different matrix sizes $N$ rescaled by $\log(j)$, with $j = 4 N^2$ the total number of eigenvalues. The intersection point of the graphs identifies the dimension at $s=d/2$. }
\end{figure}

While this works nicely on the sphere, it does not to lead to a practical method in more general cases.
For one, because locating the pole only works if one has the `same' geometry at each $N$, and so is unsuitable for random geometries.
In addition, the convergence is logarithmic. Hence even for well-understood geometries large matrix sizes are needed to obtain good results.

It is, however, possible to use this expression for the Dixmier trace to give a new definition of a volume from a finite number of eigenvalues, providing there is already an estimate $d$ of the dimension,
\begin{equation}\label{eq:vol}
  Vol^\mathrm{Dix}(D) = \frac{d}{2} \frac{(4\pi)^{d/2}}{k}  \Gamma\left(\frac{d}{2}\right) \frac{1}{\log j}\sum\limits_{\lambda}(\lambda_n^2)^{-d/2}\;.
\end{equation}
For a manifold, the sum is over the $j$ smallest eigenvalues and this gives the volume approximation considered above.
For Dirac operators with a finite number of eigenvalues (e.g., a fuzzy space), the sum is over all $j$ eigenvalues and this gives a definition of a volume measure for this non-commutative `space'.

This expression obeys the scaling property
\begin{equation}
\Vol^\mathrm{Dix}(\mu D)= \mu^{-d}\Vol^\mathrm{Dix}(D)
\end{equation}
that also holds for the volume of a Riemannian manifold.
However it is not additive,
\begin{equation}
\Vol^\mathrm{Dix}(D_1\oplus D_2)\ne \Vol^\mathrm{Dix}(D_1)+\Vol^\mathrm{Dix}(D_2)
\end{equation}
in general.
It is not additive even if $D_1=D_2$, since $\log(2j)\ne\log j$. In this case, $D=D_1 \oplus D_1$ just doubles the multiplicities of the eigenvalues of $D_1$.
Thus the definition is not consistent because taking two particles on a fuzzy space (which means doubling the spinors and replacing $k$ with $2k$) would lead to a change in the volume.

Another, more recent, proposal to calculate the volume was given in~\cite{stern_finite-rank_2017}, in the context of truncating the spectrum of the Dirac operator to retain the eigenvalues below a cut-off, $|\lambda|\le\Lambda$.
The expression was constructed to have improved convergence properties in the $\Lambda \to \infty$ limit, and gives the volume as
\begin{align}
    \Vol_\Lambda^\mathrm{St}(D)= \frac{ \left(4 \pi \epsilon(\Lambda) \right )^{d/2}}{ek \Gamma(1-\frac{d}{2},1)}
        \, \sum_{\lambda\colon|\lambda|\le\Lambda}
     \frac{e^{- \lambda^2\,\epsilon(\Lambda)}}
     {1+\lambda^2 \,\epsilon(\Lambda)}
     \label{eq:volA}
\end{align}
where $\epsilon(\Lambda)=(2 \log{\Lambda})/\Lambda^2$, and $\Gamma(s,u)$ is the upper incomplete gamma function with lower integration limit $u$.
Stern's result is that for the Dirac operator of a manifold $\Vol_\Lambda^\mathrm{St}(D)\to \Vol(M)$ as $\Lambda\to\infty$.

Again, this volume measure can be adapted to the case of a fuzzy space by summing over all eigenvalues.
There are now two parameters that are needed for this formula, $d$ and $\Lambda$.
It is not necessary for the definition to insist that
$\Lambda\ge|\lambda|_\mathrm{max}$ for all eigenvalues, though for compatibility with the original formula $\Lambda$ may be taken to be either the maximum $|\lambda|$ or possibly an estimate for it. This could be determined by a dimensionful coupling constant in the action for a random fuzzy space, for example ${g_4}^{-1/4}$ in~\eqref{eq:action}.

The formula does not have the scaling property with $\Lambda$ fixed, i.e., $\Vol_{\Lambda}^\mathrm{St}(\mu D)\ne \mu^{-d}\Vol_\Lambda^\mathrm{St}(D)$, as one might expect since $\Lambda$ determines a fundamental length scale. The scaling property could be restored by simultaneously changing the value of $\Lambda$.

The formula is additive, so providing $d$ is the same, one has
\begin{equation}
\Vol^\mathrm{St}_\Lambda(D_1\oplus D_2)= \Vol^\mathrm{St}_\Lambda(D_1)+\Vol^\mathrm{St}_\Lambda(D_2).
\end{equation}

Equations~\eqref{eq:vol} and~\eqref{eq:volA} can be used to calculate the volume of a Dirac operator, assuming the dimension of the fuzzy space is known.
To test these expressions, they can be applied to the spectra of the fuzzy sphere and the fuzzy torus using $d=2$ and $\Lambda$ the maximum eigenvalue.
This is shown in Figure~\ref{fig:volumeST}, together with the volume of the continuum sphere and the continuum torus.
In the captions, equation~\eqref{eq:vol} is referred to as the Dixmier trace and equation~\eqref{eq:volA} as the Stern volume.
\begin{figure}
    \subfloat[][Fuzzy Sphere]{
    \includegraphics[width=0.45\textwidth]{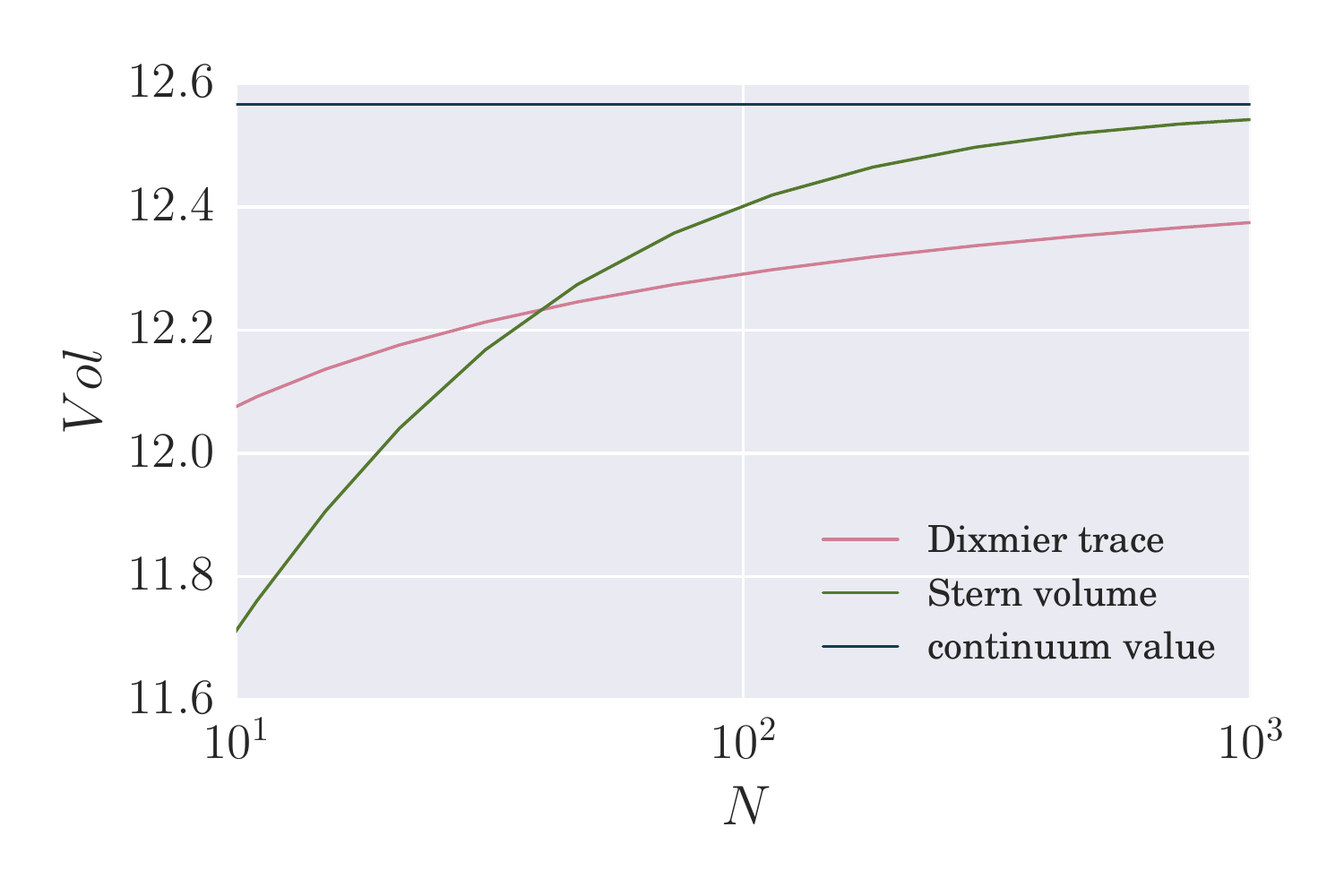}}
    \subfloat[][Fuzzy Torus]{\includegraphics[width=0.45\textwidth]{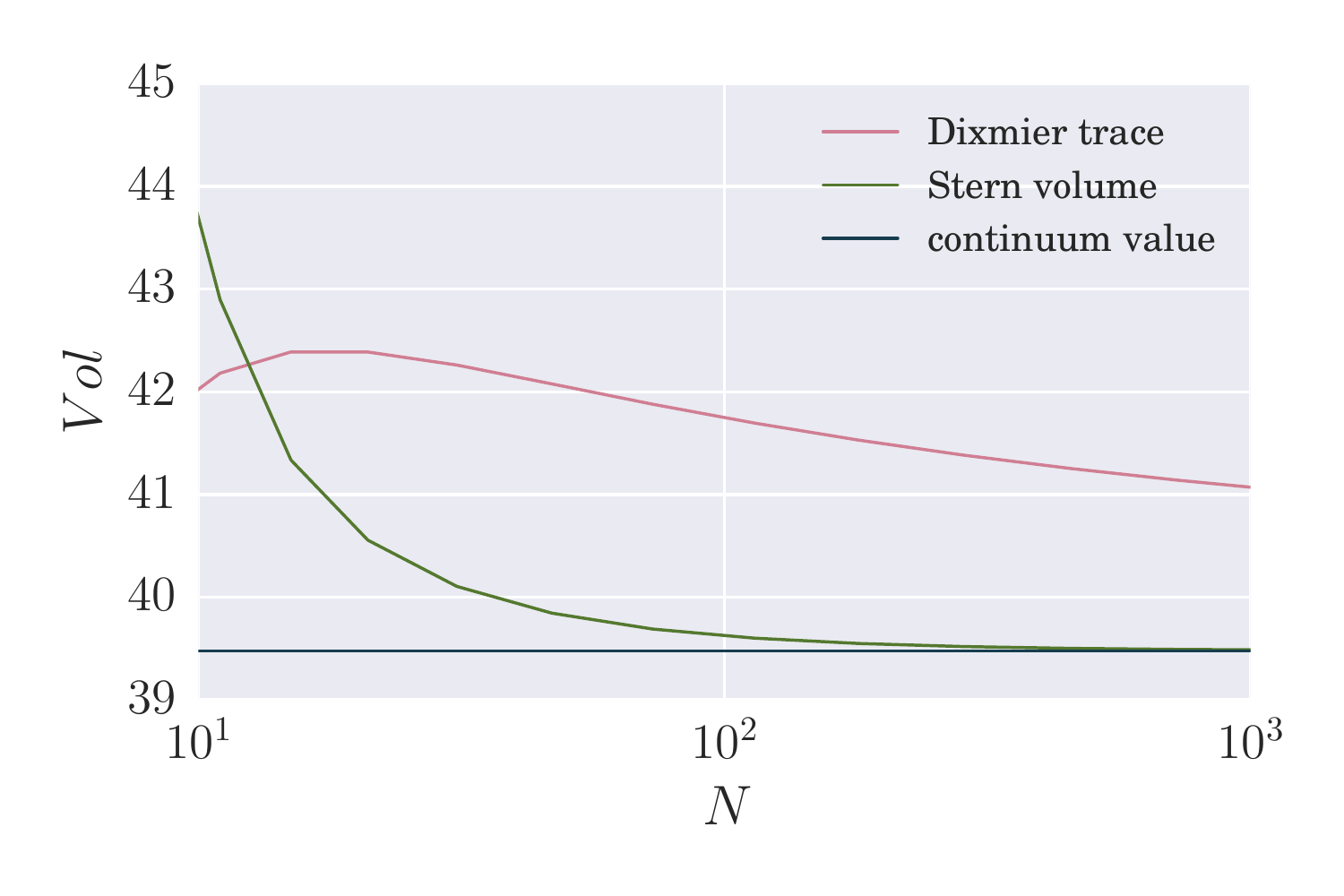}}
    \caption{Volumes of the fuzzy sphere and fuzzy torus. The volumes of the continuum geometries are included for comparison.}\label{fig:volumeST}
\end{figure}
As expected, the Stern volume converges much faster with $N$.

If the dimension parameter is not known, it can be estimated using the spectral variance at a given value of the parameter $t$, which determines an energy scale. The volume of the Dirac operator of two different fuzzy tori are shown in Figure~\ref{fig:torusVolumeEnergy}, plotted against $t$. Of course this agrees with the values of the volume shown in Figure~\ref{fig:volumeST}(b) when the spectral variance (Figure~\ref{fig:torusCvs}) is close to the value $2$, which is approximately the region $0.2 < t < 1$. It also shows that the volume changes away from this quite rapidly as the dimension estimator changes.
\begin{figure}
\subfloat[][Torus $a=d=1$, $b=c=0$]{\includegraphics[width=0.5\textwidth]{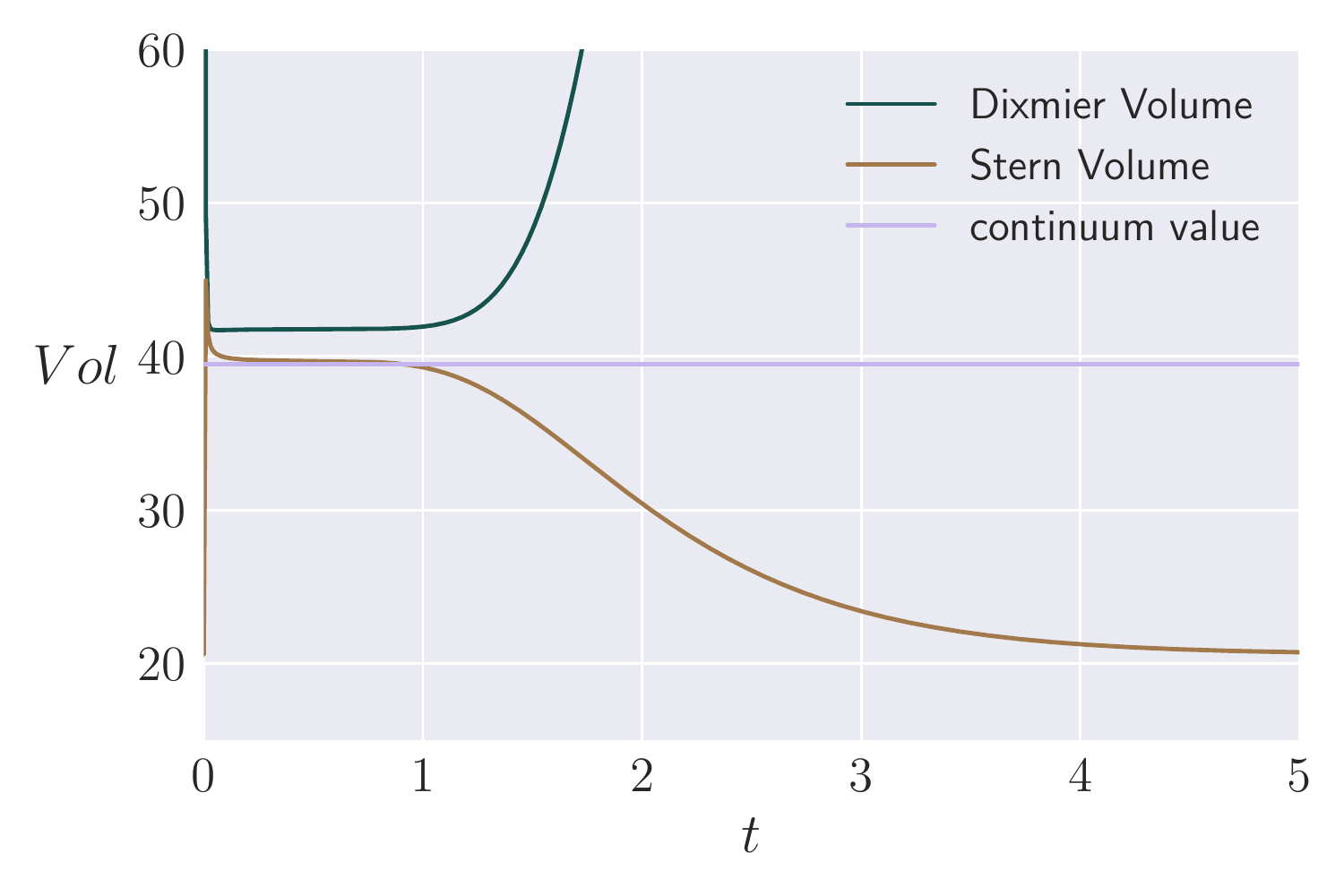}}
\subfloat[][Torus $a=3$ $d=1$, $b=c=0$]{\includegraphics[width=0.5\textwidth]{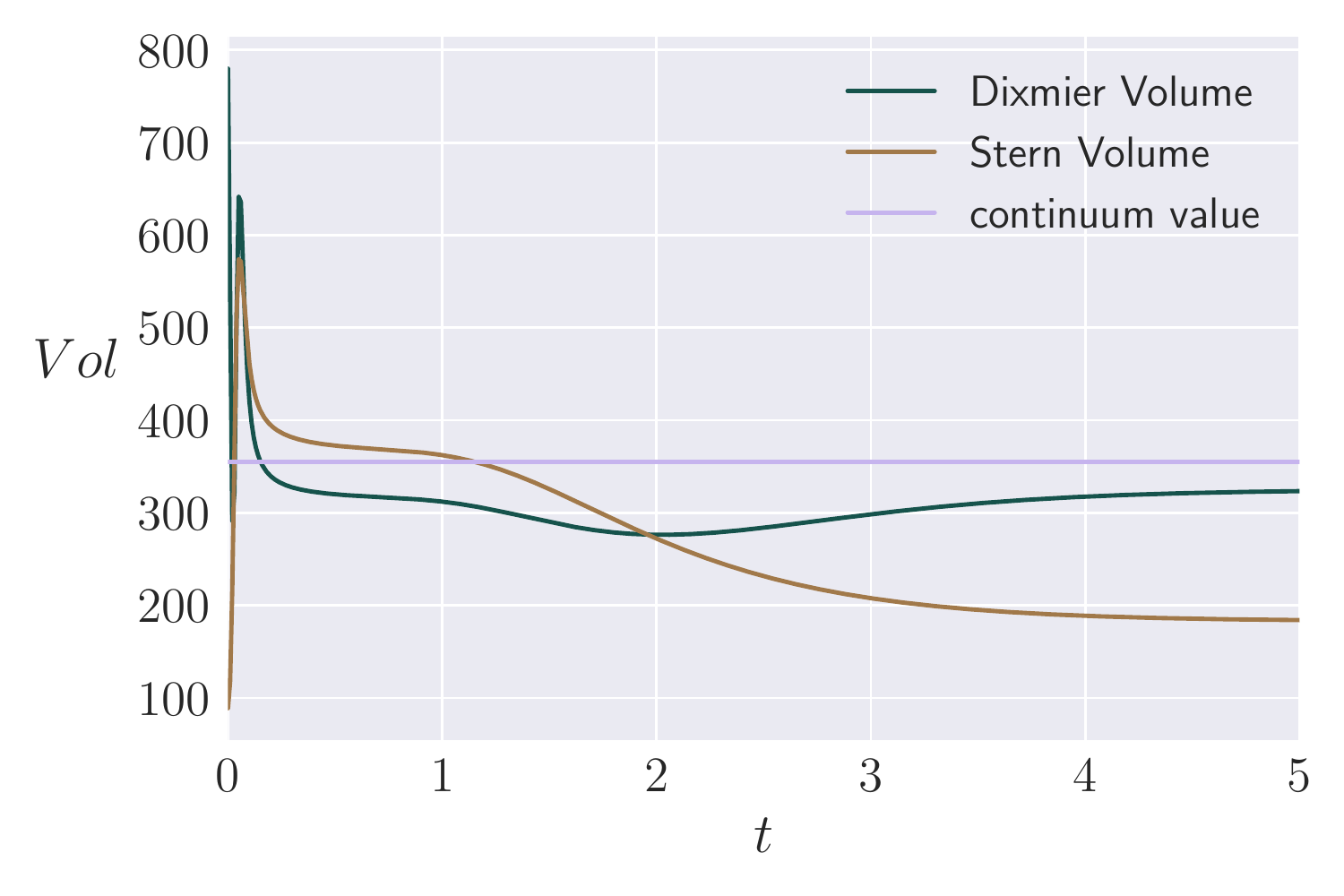}}
    \caption{Volumes of two different tori plotted against $t$, using the dimension estimator $d=v_s(t)$ and $N=90$ for both. }
 \label{fig:torusVolumeEnergy}
\end{figure}

Testing the volume measures on random fuzzy geometries leads to several methodological conundrums.
The first of these is the question of dimension.
For the torus and the sphere their topological dimensions are known, and can be used to calculate their volume; however no such information is available about the random fuzzy spaces.
One could use the spectral variance defined above to define the dimension in the volume, but this is scale dependent. Even more troubling is that in the ensembles of random geometries used in this paper, the spectral variance fluctuates substantially, as can be seen in Figure~\ref{fig:vscomp} on the right. The maxima of the spectral variances of the individual Dirac operators have a large variance (of order 1), and moreover, the value of $t$ for these maxima also has a large variance.
The question is, how could one average over the volume of different geometries if these geometries are of different dimension?
The average volume of a cube, a square and a line makes no physical sense, since their volumes are different quantities.

There are possible solutions to this problem, such as measuring the volume of an average geometry, e.g. defined by the averages of the sorted eigenvalues, using the spectral variance curves in Figure~\ref{fig:vscomp} on the left. This could lead to a unique number for the volume but it is not clear if that usefully represents the properties of the ensemble.

Another option would be to define a dimensionless volume, by dividing by some characteristic volume scale, such as the Planck scale, leading to a measure effectively counting the number of Planck volumes within a given volume. However, there are a number of possible ways of defining the Planck volume, so this is not a clearly defined proposal.
%Unfortunately this opens the door to other problems, such as how to define a Planck volume.
%A naive solution might be to use the inverse of the maximal eigenvalue was as an estimator of the Planck length, and raise this to the power of $d$ generates a Planck hypercube.
%On the other hand, one might reasonably define Planck volumes as hyperspheres, thus giving rise to a dimension dependent prefactor and complicating the matter even further.
Another solution might be to average the volume measured as a function of dimension.
% However, the volume is a monotonously raising function of $d$, which makes it hard to decipher any interesting features that might hint at underlying properties of the geometry.

In summary, there are a number of possible ways to define the apparently simple notion of the volume of a geometry but it is not yet clear if there is a useful definition in the case of random spaces.

\section{Zeta distance}\label{sec:zetadist}

In~\cite{cornelissen_distances_2017} a distance between two geometries is defined using the ratio of the spectral zeta functions of the Laplace-Beltrami operator. This definition is adapted here to use Dirac operators instead. Let $D_1$, $D_2$ be Dirac operators and $\zeta_1$, $\zeta_2$ their zeta functions, as in~\eqref{eq:zeta1}. A real number $\gamma$ is chosen so that it is greater than the real part of any pole of either zeta function.

 The distance between the geometries is defined to be
\begin{align}
  \sigma(D_1,D_2)= \sup_{\gamma\le s\le\gamma+1} \bigg| \log{\bigg( \frac{\zeta_1(s)}{\zeta_2(s)}\bigg) } \bigg| \;.
\end{align}

Note that the zeta functions are positive for the relevant values of $s$. The distance has the property that $\sigma(D_1,D_2)=0$ if and only if the spectra are the same\cite{cornelissen_distances_2017}.
It also obeys the triangle inequality
\begin{equation} \sigma(D_1,D_2)+ \sigma(D_2,D_3)\ge \sigma(D_1,D_3).\end{equation}
The definition uses the closed interval $[\gamma,\gamma+1]$ but the $1$ is just for convenience and in fact any finite interval with a suitable lower limit $\gamma$ will do.
The definition also holds for finite spectra, in which case there is no restriction on $\gamma$. It can even be used to compare a finite spectrum to an infinite one. Several examples are examined in detail in the following sections.

%These are based around comparing the spaces on the algebra level, and computing a distance based off of placing the two algebras of the spaces we are interested in a larger algebra and computing a distance measure there. These methods require the knowledge of the $[D,a\otimes 1]$ for every element $a\in A$ in order to construct the norm used in the distance measure. This makes the task of numerical evaluation computational very expensive for the examples treated in this work. The zeta distance method proposed just requires the eigenvalues of the Dirac operators in question and so is much faster to calculate and still can provide us with valuable insights.

\subsection{Convergence of fuzzy spectra to the continuum}

The distance measure can be used to define the convergence of spectra. If $\sigma(D_n,D)\to0$ for $D_n$ a sequence of Dirac operators, then $\zeta_n(s)\to\zeta(s)$ for each $s\in[\gamma,\gamma+1]$, providing the poles all lie below $\gamma$. According to ~\cite[Thm 3.2]{cornelissen_distances_2017}, this implies that the spectra and multiplicities converge pointwise in a suitable sense. The converse situation is examined here using the sphere and torus as concrete examples. In these examples, the spectra converge pointwise and this implies that the zeta functions converge pointwise in $s$. However, this is not quite good enough to show that $\sigma$ converges to $0$, as this needs uniform convergence in $s$.

The fuzzy spheres define a sequence of Dirac operators $D_n$, with $n=N$, the matrix size. This is compared to the Dirac operator $D$ on the spin bundle of $S^2$ tensored with $\C^2$, so that the multiplicities are doubled. Then the $D_n$ have the same spectrum as $D$ but with a cut-off. Therefore $\zeta_n(s)\to\zeta(s)$ as $n \to \infty$ for $s>1$ and this is uniform in $[\gamma,\gamma+1]$ for any $\gamma>1$. Therefore $\sigma(D_n,D)\to0$. This conclusion generalises to any sequence of fuzzy spaces for which the spectra are obtained by truncating the spectrum of the limiting geometry.

Now let $D_n$ be a sequence of square fuzzy tori with $N=n$, and $D$ the continuum torus, again with multiplicities doubled. Here, the spectra do not coincide but converge pointwise, when ordered in increasing value. The fact that the zeta functions converge uniformly for $\gamma>1$  is shown in Appendix~\ref{sec:appendix}. This implies again that $\sigma(D_n,D)\to0$.

\begin{figure}[b]
    \includegraphics[width=0.49\textwidth]{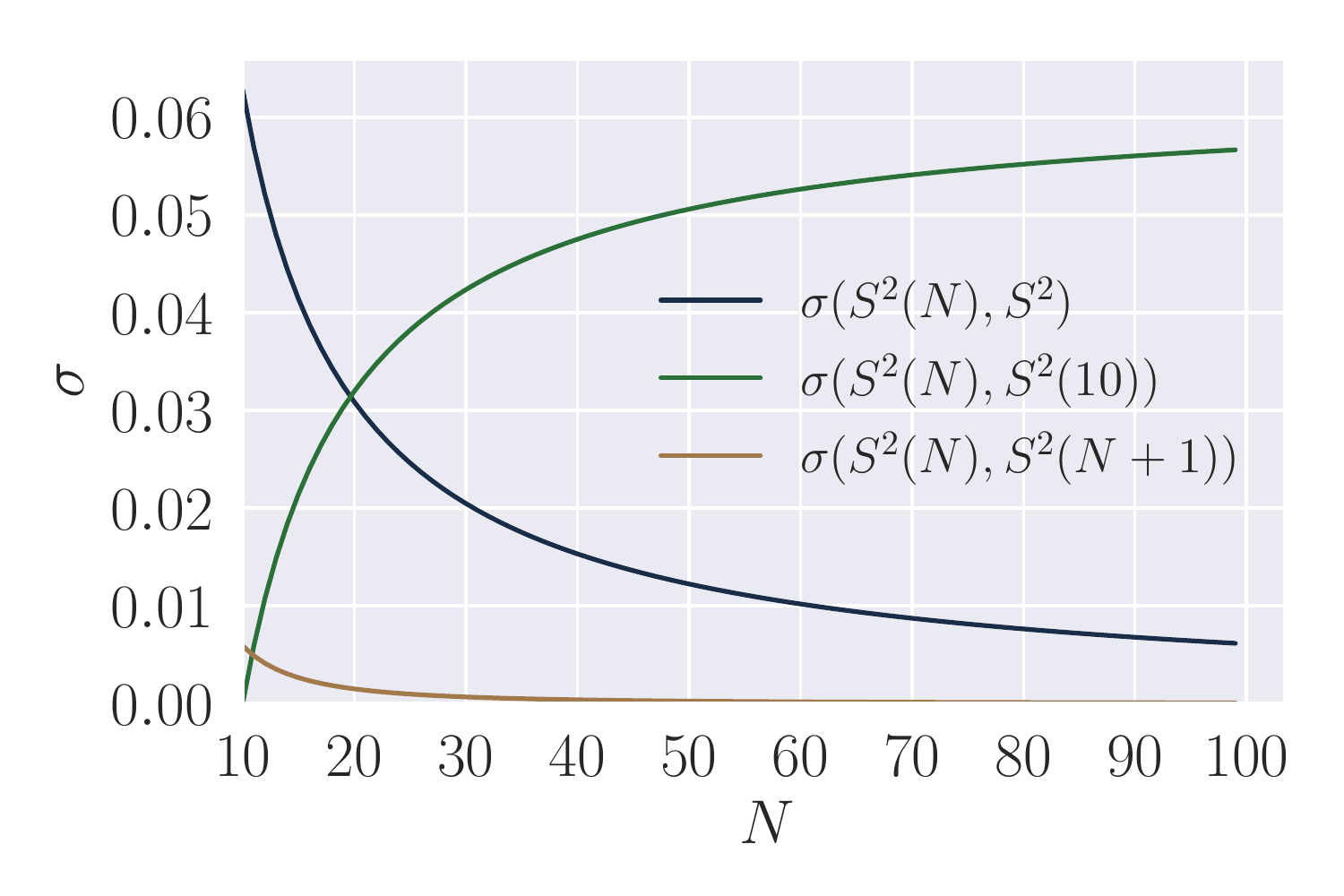} \includegraphics[width=0.49\textwidth]{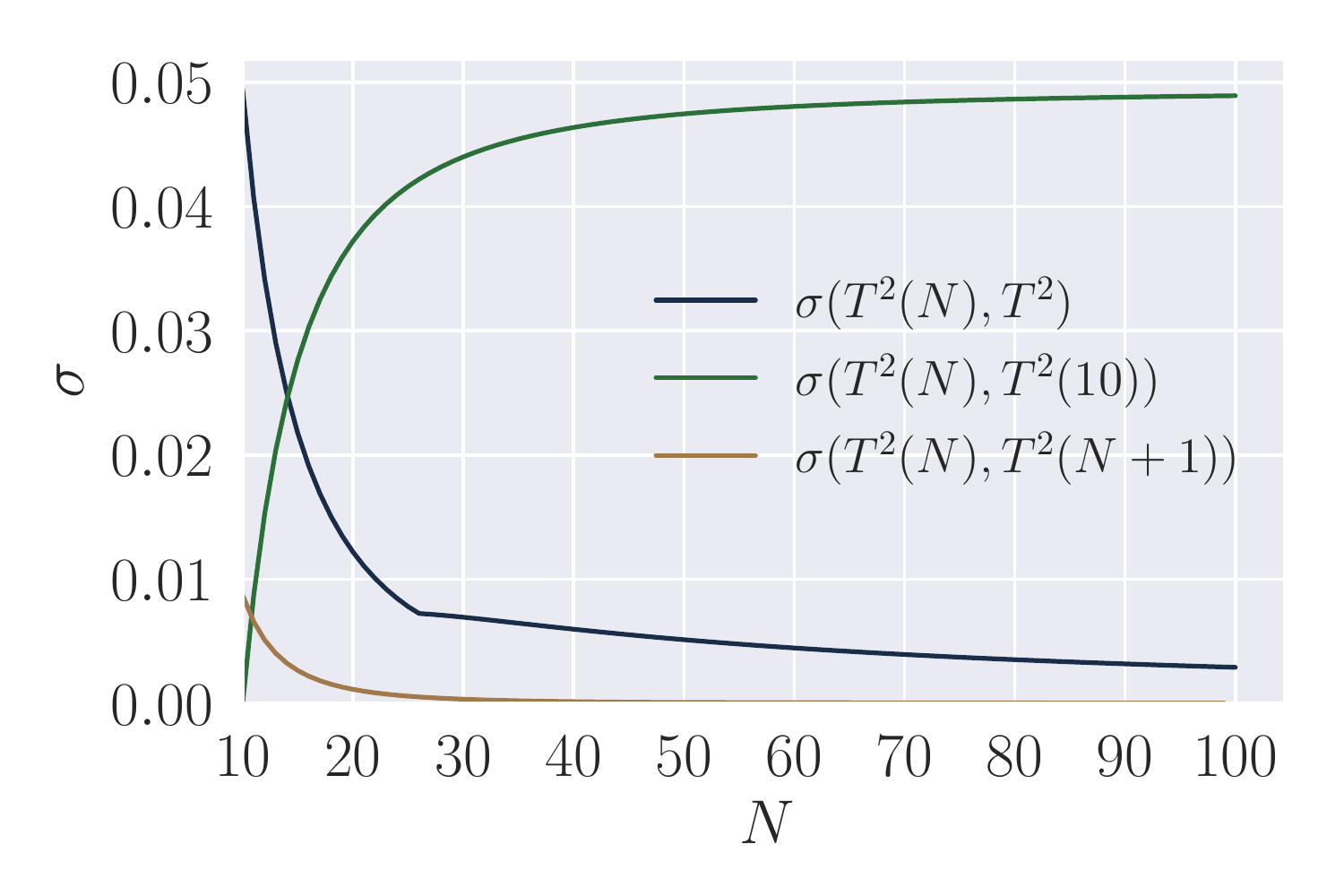}
	\caption{Spectral distances between fuzzy spaces of different matrix size $N$ and the continuum space, using $\gamma=1.5$.  The sphere is on the left and the torus on the right. The dark blue line shows that the distance of the fuzzy space to the continuum space. The green line denotes the distance of the fuzzy
space at $N=10$ to the fuzzy space at arbitrary $N$. The brown line shows the distance between the spaces at $N$ and $N+1$.}
\label{fig:checkit}
\end{figure}
Some numerical results are presented in Figure~\ref{fig:checkit}, where $\gamma=1.5$ to avoid numerical instabilities close to the singularity at $s=d/2 =1$. The left-hand plot shows the results for the fuzzy sphere and the right-hand plot for the fuzzy torus.
The plot for the torus has a noticeable kink in the line comparing the continuum torus to the fuzzy torus at $N=26$.
This feature arises because at this point the maximum of the logarithm flips from the upper end of the interval $[\gamma, \gamma+1]$ to the lower end of this interval.

\subsection{Distances between random geometries as $g_2$ varies}
To apply the distance function to random geometries, the first step is to pick a value of $\gamma$.
Since the random geometries do not have a simple dimension measure or poles in the $\zeta$ function the easiest choice is to explore the distance for a few values of $\gamma$ that would be compatible with the range of dimension found using the spectral variance.
For the exploration here $\gamma=0.5, 1.0, 1.5, 2.0$ are used, which are suitable considering the dimension values found in Figure~\ref{fig:maxVs}.
The choice of $\gamma$ influences the relative weight given to the low or high energy part of the spectrum, with lower $\gamma$ emphasising the higher energy part of the spectrum.

Calculating this on the random fuzzy spaces requires some form of averaging, either over distances or over geometries.
Calculating averages over quantities involving the zeta function is plagued by instabilities which arise from terms $\lambda_{0}^{-s}$ if the eigenvalue $\lambda_0$ with the smallest absolute value fluctuates close to $0$.
This makes it more practical to calculate the averages over the eigenvalues and then calculate the distances between these average spectra.
Using this, the distance function can be applied to random geometries, obtaining some interesting results.

A simple test is to calculate the distance between geometries of the same type at different $g_2$.
This is shown in Figure~\ref{fig:goodexamples} for type $(2,0)$, where the distance from the geometries with $g_2=-2.5$ and $g_2=-3.5$ as reference points is shown.
\begin{figure}
\subfloat[][Distance from $g_2=-2.5$]{\includegraphics[width=0.45\textwidth]{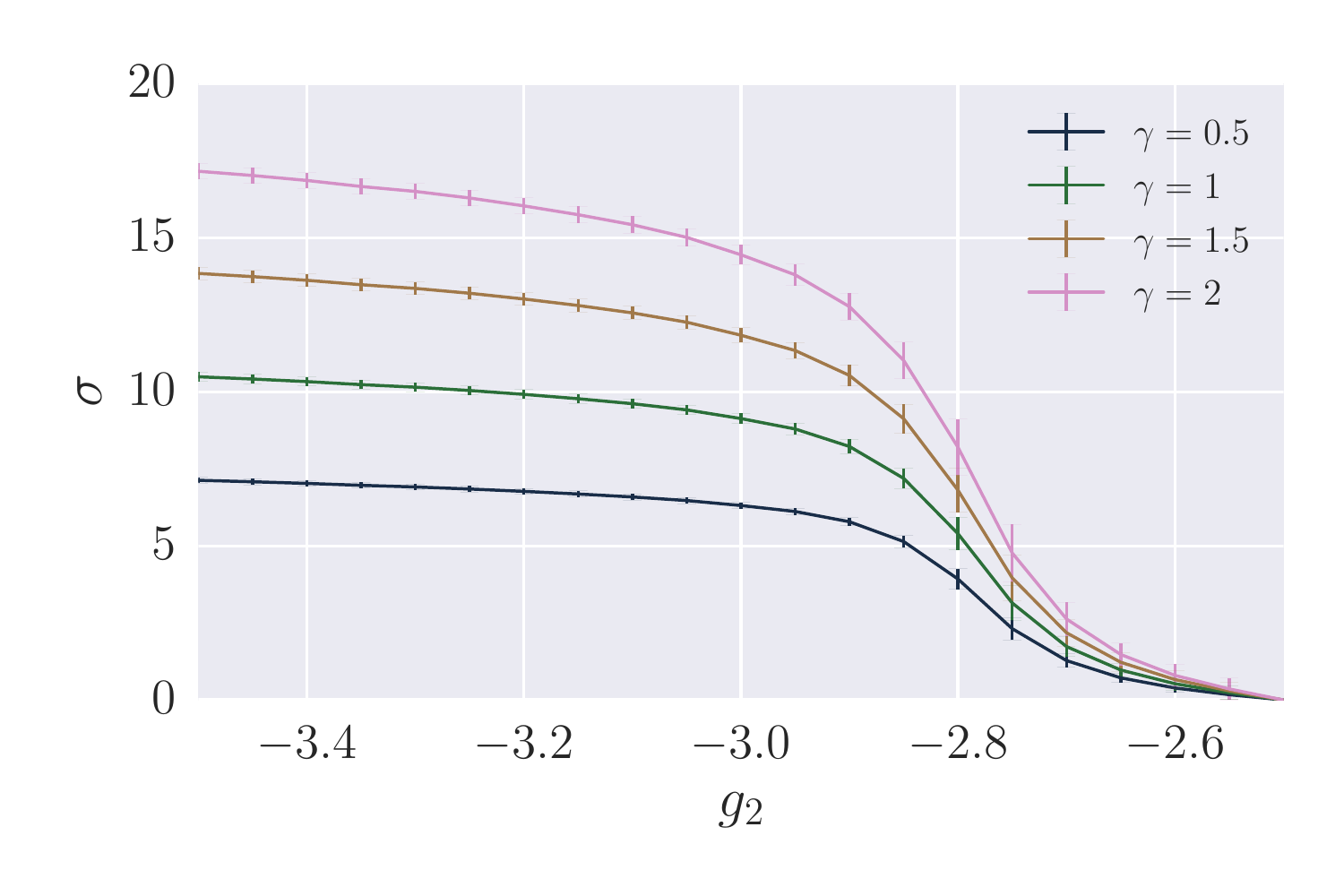}}
\subfloat[][Distance from $g_2=-3.5$]{\includegraphics[width=0.45\textwidth]{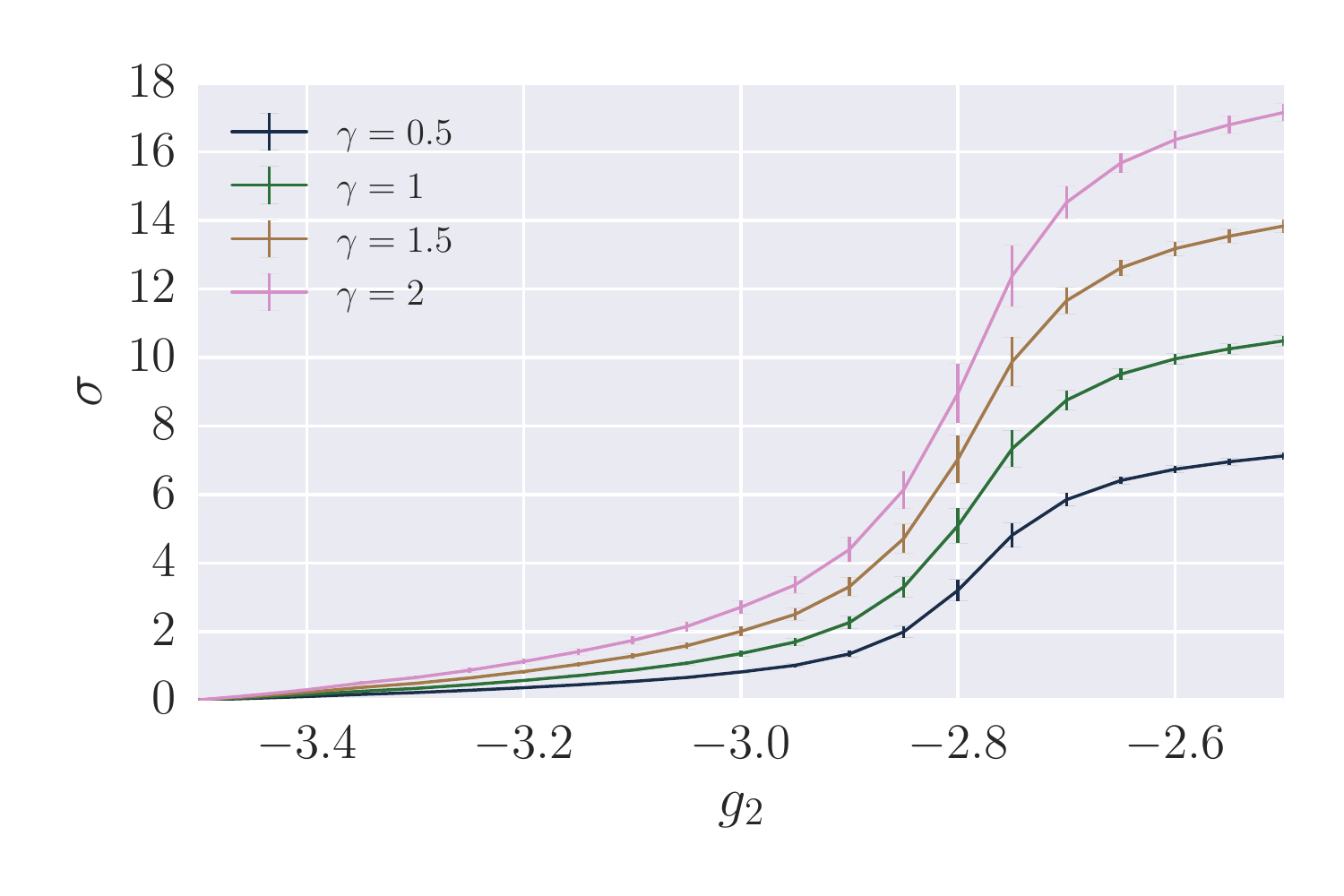}}
\caption{Distance of random geometries of type $(2,0)$ with varying $g_2$ from the geometries at $g_2=-2.5,-3.5$}\label{fig:goodexamples}
\end{figure}
The choice of $\gamma$ changes the resulting distances but the relative distances are qualitatively the same.
In particular, the geometries above the phase transition at $g_2=-2.8$ are more similar to the geometry at  $g_2=-2.5$ while those below the phase transition are more similar to the geometry at $g_2=-3.5$.
The error bars on all distance measures are calculated by propagation of uncertainty starting from the errors on the average eigenvalues.

\subsection{Measuring the distance from the fuzzy $S^2$}
The spectral distance measure can be used to investigate whether the spectra of the random geometries are close to the fuzzy sphere.
As demonstrated in~\cite{cornelissen_distances_2017}, the spectral distance measure is also sensitive to size differences between geometries.
However for the random fuzzy spaces their size is of less interest than the question of whether they resemble a fuzzy sphere of any size.
Thus to compare the geometries, the average spectra are scaled so that the maximum eigenvalues are equal.
Physically this rescaling should correspond to fixing the Planck scale for the geometries to agree, as discussed in section~\ref{subsec:EVscale}.

The resulting distances between the random geometries and the appropriate fuzzy spheres are shown in Figure~\ref{fig:rescaledDistS2lmax}.
These are encouraging, since they align with the hope that the geometries close to the phase transition are similar to the fuzzy sphere.
For all three types the minimum in the distance to the fuzzy sphere is for the geometries with $g_2$ one step away from the phase transition.
For type $(1,1)$ it is for $g>g_c$, and for types $(2,0)$ and $(1,3)$ it is for $g<g_c$.
This would hint that in the latter two cases the fuzzy sphere is approached when approaching the phase transition from the gapped phase.
\begin{figure}
\subfloat[][Type $(1,1)$ $N=10$]{\includegraphics[width=0.48\textwidth]{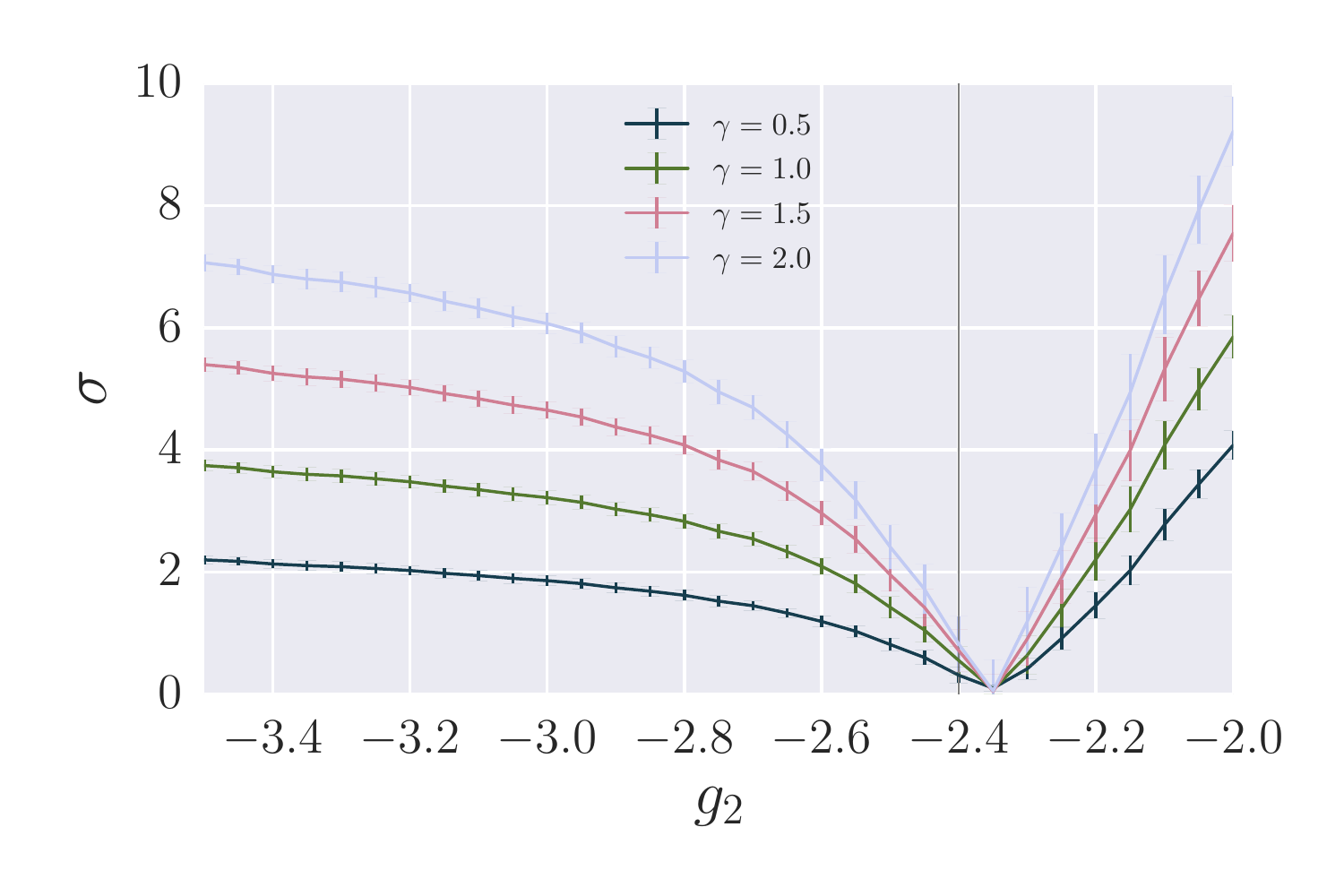}}
\subfloat[][Type $(2,0)$ $N=10$]{\includegraphics[width=0.48\textwidth]{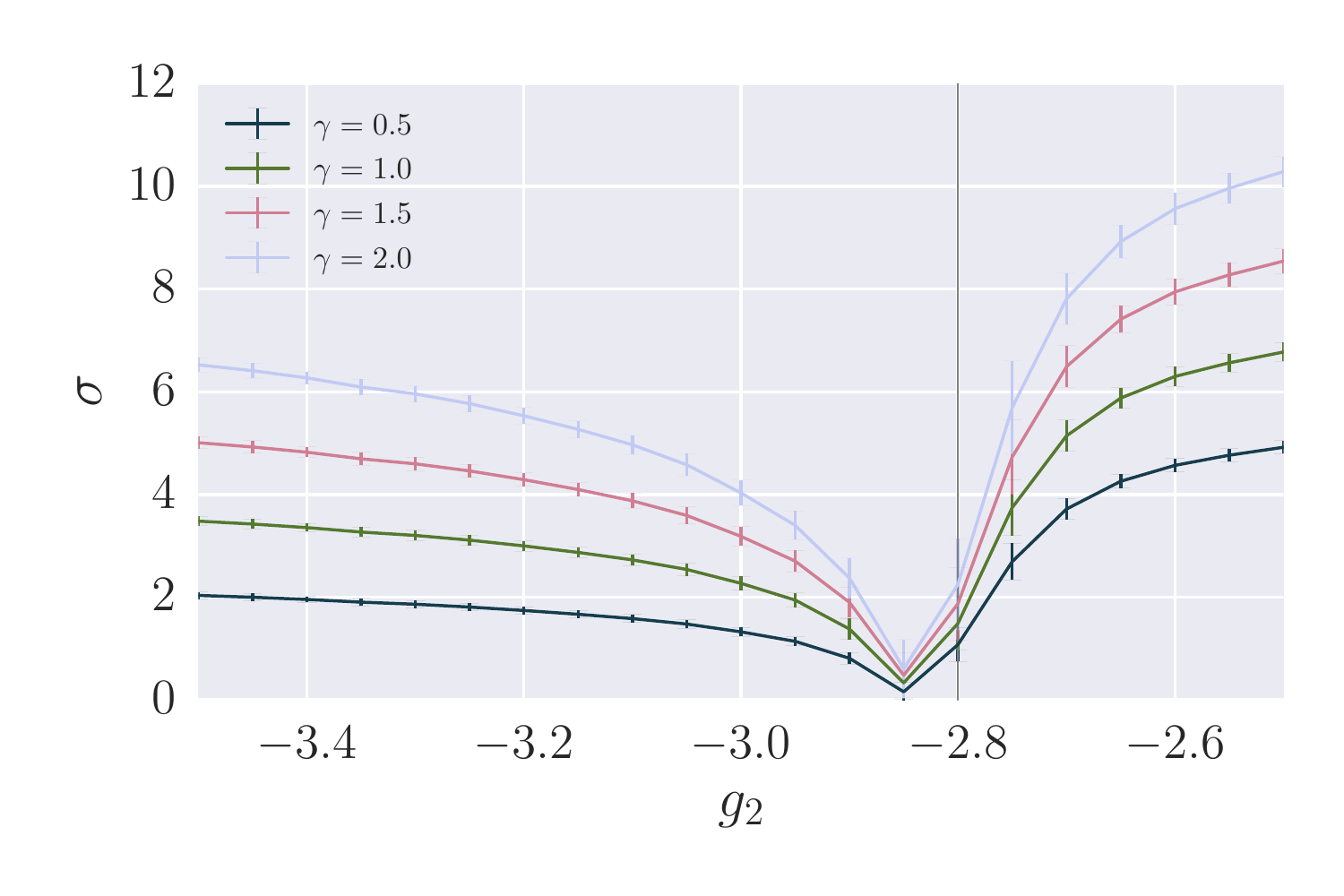}}

\subfloat[][Type $(1,3)$ $N=8$]{\includegraphics[width=0.48\textwidth]{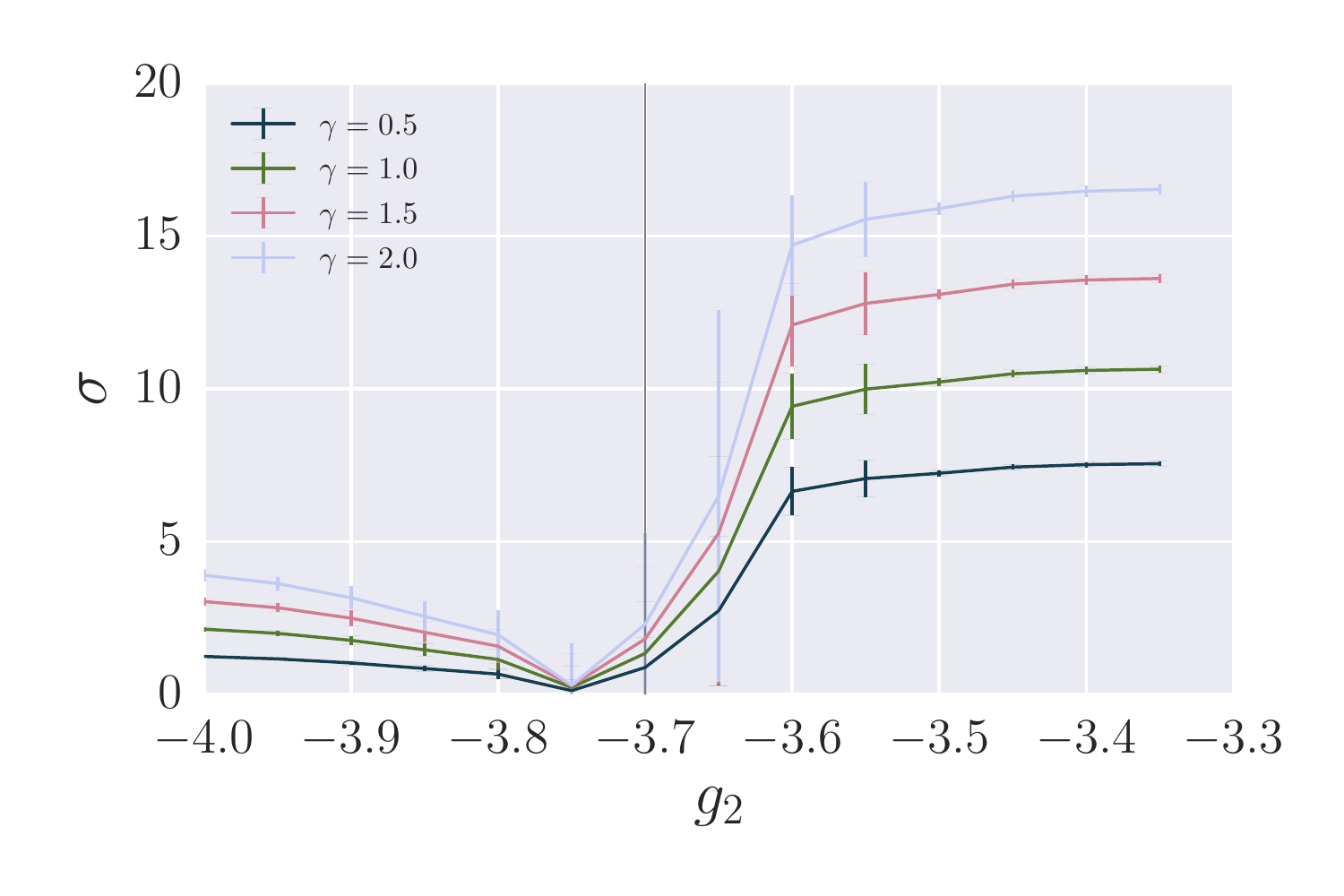}}
  \caption{Distance between the fuzzy sphere and random fuzzy geometries, with the spectra rescaled such that their maximal eigenvalues agree. For the geometries of types $(1,1)$ and $(2,0)$ only half of the multiplicity of the fuzzy sphere eigenvalues were used to compare to geometries with the same number of eigenvalues.}\label{fig:rescaledDistS2lmax}
\end{figure}
In principle it would also be interesting to rescale spaces to match their volumes.
However this would require a solution to the problems with the volume definition described in the previous section.

\subsection{Measuring the distance between type $(1,1)$ and type $(2,0)$}

One of the aims in doing this additional analysis was to understand the difference between the geometries of type $(1,1)$ and $(2,0)$ better.
For this a distance measure is very useful.
When comparing geometries of different types, in addition to choosing the rescaling, there is freedom in which $g_2$ values to compare with each other.

The simplest option is to compare the geometries at the same value of $g_2$, disregarding the fact that the geometries have different phase transition points.
The difference found in this way is dominated by whether the geometries are in the same phase or not, as seen in Figure~\ref{fig:type1120comp}(a).
The distances measured do not change much when the eigenvalues are rescaled to $\lambda_{\max}=1$. This is because the maximum eigenvalues in both geometries are very similar.

\begin{figure}
\subfloat[][At the same $g_2$]{ \includegraphics[width=0.45\textwidth]{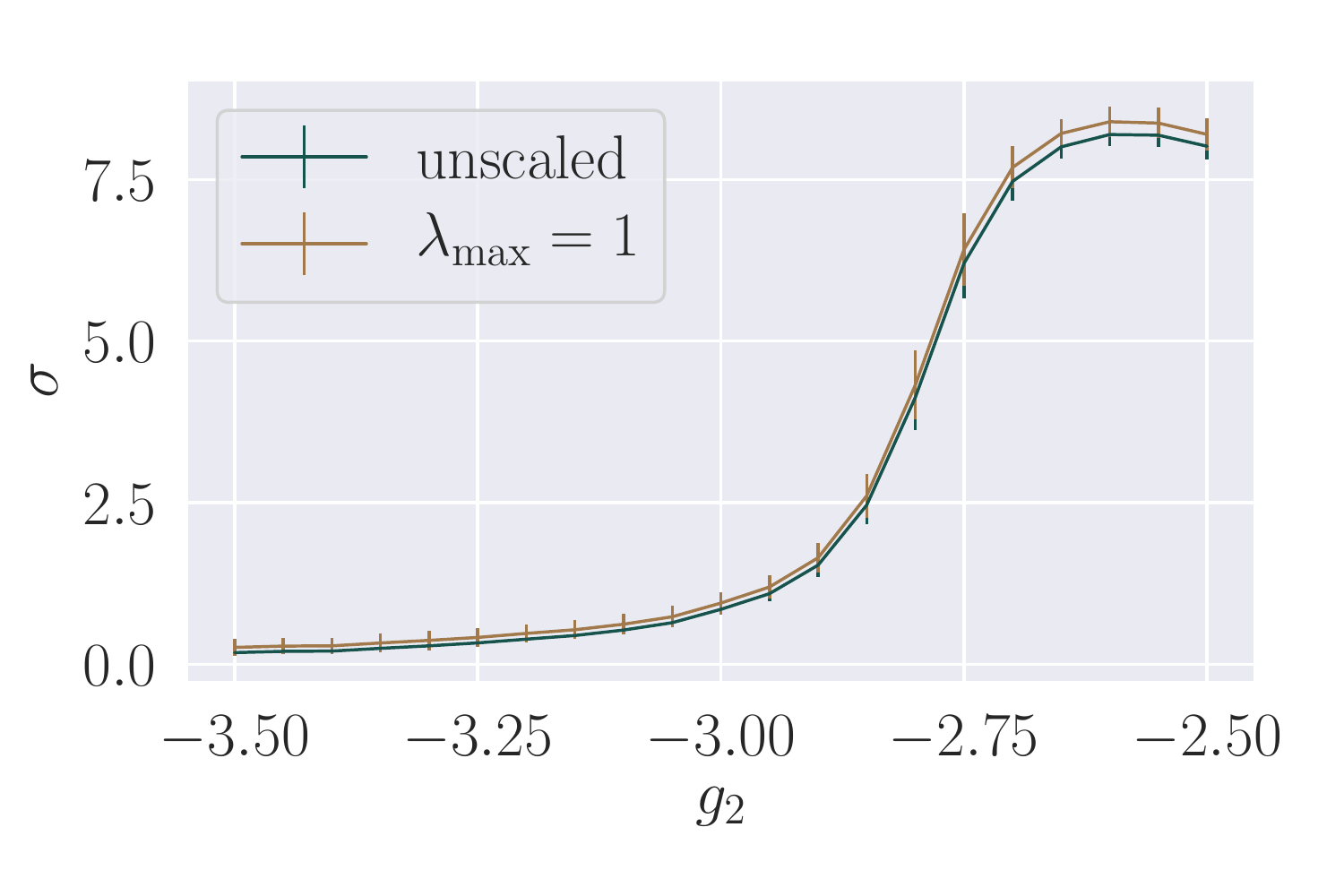}}
\subfloat[][At the same $g_2-g_c$]{ \includegraphics[width=0.45\textwidth]{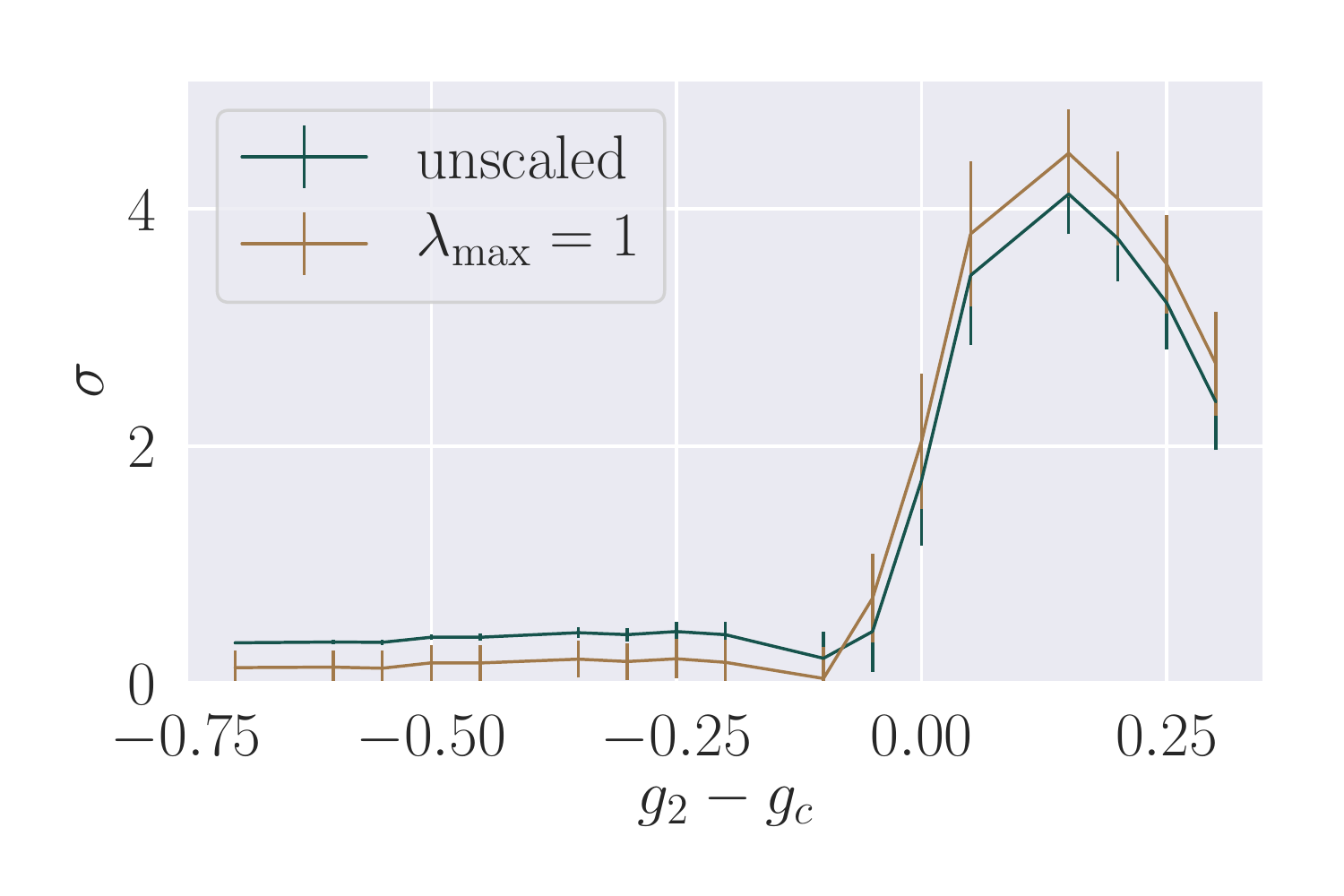}}
  \caption{The spectral distance between geometries of type $(1,1)$ and type $(2,0)$ with different rescalings. Plotted for $\gamma=1$.}\label{fig:type1120comp}
\end{figure}

Comparing at fixed $g_2$ value the difference between the geometries is largest between $g_2=-2.8, \dots ,-2.5$ which is the region in which type $(1,1)$ is already transitioned, while type $(2,0)$ is not yet.
After $g_2=-2.8$, when type $(2,0)$ transitions to the gapped phase, the distance between the geometries becomes much smaller.

The other option is to correct for this, by comparing the geometries at a constant distance from the phase transition $g_2-g_c$, shown in Figure~\ref{fig:type1120comp}(b). This requires the assumption that the determination of the phase transition for both phases is correct, which introduces an additional possible error.
The overall distance between the geometries in this measure is smaller than before.
It is also much larger when comparing the geometries in the un-gapped phase as opposed to comparing those in the gapped phase.
The geometries after the phase transition have distance close to $0$, while before the phase transition, their distance is quite large.

\section{Conclusion}\label{sec:conclusion}

This work has developed a number of tools to measure and compare spectra of the Dirac operator $D$ on fuzzy spaces. Since the spectra are finite, one does not have the luxury of using the large-eigenvalue asymptotics to measure geometric quantities. As a result, the spectral measurements depend on a choice of energy scale. This can be seen most simply by adapting Weyl's law of asymptotic eigenvalue scaling to the finite situation. Some results for the dimension of a space are plotted but, as discussed in section~\ref{sec:weyl}, it is hard to interpret the plots systematically.

The spectral dimension was originally developed to give a scale-dependent dimension measure from the eigenvalues of a scalar Laplacian. This can be adapted to the case of the operator $D^2$ but is not as useful as for the Laplacian due to the absence of a zero-mode. The new notion of spectral variance introduced here fixes this problem. It depends only on the difference in eigenvalues of $D^2$ and so is insensitive to the gap between the lowest squared eigenvalue and zero. Like the spectral dimension, for fuzzy spaces it goes to zero when the energy scale becomes infinite. This is due to the absence of very high energy modes above the cut-off energy scale, which is interpreted as the Planck scale in quantum gravity.

The spectral variance is applied to several examples of spectra and it is argued that it gives a good measure of dimension. To extract a single number for the dimension, rather than a function, it appears to be best to look at the value of the spectral variance at the point where the curve is flattest. Usually this is at the maximum, but the example of the fuzzy torus shows that this is not always the case. This exhibits an upward spike at the Planck energy scale but the much larger flat part of the curve at lower energies gives a very good agreement with the expected value 2 for the dimension.

For the random fuzzy spaces, the spectral variance shows that the geometries around the phase transition have a dimension close to the value $2$, at least at the energy scale corresponding to the maximum point on the spectral variance curve. Note that the Planck-scale spike in the spectral variance of the torus does not occur for the random geometries studied here.
Plotting the value of the maximum of the spectral variance against the coupling constant $g_2$ shows that these lines, for different values of the matrix size $N$, intersect near the phase transition value of $g_2$ for the $(2,0)$ and the $(1,3)$ geometries. This is a hint that the behaviour at the phase transition might be independent of $N$.

In principle, the volume of fuzzy spaces could be calculated using the spectral $\zeta$ function. However, the formulas defining the volume depend on the dimension of the space. Since the dimension varies with energy scale and eigenvalues at all different energy scales contribute to the volume, it is not clear what a uesful measure of volume is. For the random spaces there are additional difficulties due to the fact that the dimension varies significantly within the ensemble. This is shown in Figure~\ref{fig:vscomp}. It is an interesting question as to whether the variance of the dimension reduces for larger values of $N$ than those investigated here.

Comparing the ratios of zeta functions gives very good results for the comparison of geometries via their spectra. It is even possible to compare an infinite spectrum with a finite one. In particular, the results for the distance between the fuzzy sphere and the average eigenvalues of some random ensembles are very striking. These spectral distances become very small when the random geometries are near to their phase transitions. This give a quantitative estimate of the similarity that was previously only apparent qualitatively from plots of the eigenvalue distribution or the spectral variance. The spectral distance is also useful in comparing the random geometries at different values of the coupling constant, giving a characterisation of the different phases.

There are other methods that could be used to examine the convergence of fuzzy spaces to continuum spaces, see~\cite{latrmolire2015, rieffel2004gromov} and the references therein. These are based around comparing the spaces as quantum compact metric spaces and using a generalisation of the Gromov-Hausdorff distance. There seems to be no obvious way to implement this method numerically and moreover, there is no obvious role for the spectrum of the Dirac operator. Thus it is hard to compare their method with any of our results.

In summary, this work establishes several new tools to explore random fuzzy spaces, extending the understanding of the geometries described in~\cite{barrett_monte_2015,glaser_scaling_2016}. In future it should be possible to develop more efficient Monte Carlo codes that will allow the use of the new tools on much larger matrices. Several interesting new phenomena are flagged for further study. It would also be useful to develop ways to characterise the coordinates and symmetries of fuzzy spaces to exploit more of the information contained in the spectral triple than just the spectrum of the Dirac operator. The spectral measures developed in this paper help to identify the fuzzy spaces that are worthy of further study.

\subsection*{Acknowledgements}
The simulations and some of the analysis presented in this paper owe their thanks to the University of Nottingham High Performance Computing Facility.

PD was supported by the Engineering and Physical Sciences Research Council grant number EP/MX50810X/1.

LG has been supported by the People Programme (Marie Curie Actions) H2020 REA grant agreement n.706349 `Renormalisation Group methods for discrete Quantum Gravity' during parts of this work.

\appendix
\section{Convergence proofs}\label{sec:appendix}

This appendix shows the convergence of the spectrum of the Dirac operator for the square fuzzy torus $a=d=1$, $b=c=0$ to the spectrum of the Dirac operator of the corresponding continuum torus.

First, there is a general result that if the zeta functions converge uniformly, then the distance converges to zero. As before, $\gamma$ is greater than the real part of any pole so that the zeta functions are convergent series with positive terms.
\begin{lemma} Let $\zeta\ne0$.
        If $\zeta_n\to\zeta$ as $n\to\infty$ uniformly in $[\gamma,\gamma+c]$ for any constant $c>0$, then $\sigma(\zeta_n,\zeta)\to0$.
\end{lemma}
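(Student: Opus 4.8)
The plan is a routine $\varepsilon$--$\delta$ argument: bound the ratio $\zeta_n/\zeta$ uniformly on $[\gamma,\gamma+1]$ using that $\zeta$ is bounded away from $0$ there, and then pass through the logarithm, which is Lipschitz near $1$. Only the case $c=1$ of the hypothesis is needed, namely uniform convergence on the interval $[\gamma,\gamma+1]$ that appears in the definition of $\sigma$.

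First I would record the relevant properties of the limit function. On $[\gamma,\gamma+1]$ the series $\zeta(s)=\sum_n (\lambda_n^2)^{-s}$ has positive terms, each dominated termwise by the convergent quantity $(\lambda_n^2)^{-\gamma}+(\lambda_n^2)^{-(\gamma+1)}$, so it converges uniformly on the compact interval and $\zeta$ is continuous there. Since $\zeta\ne0$ and every term is strictly positive, $\zeta(s)>0$ for all $s$, and hence by compactness
\[
m:=\min_{s\in[\gamma,\gamma+1]}\zeta(s)>0 .
\]
This is the one place the hypothesis $\zeta\ne0$ is used, and it is the only point in the argument that is not purely formal; without it the ratio and its logarithm would be meaningless.

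Next, fix $\varepsilon>0$ and put $\eta:=\min\{1/2,\varepsilon/2\}$. By the hypothesis (with $c=1$) choose $n_0$ with $\sup_{s\in[\gamma,\gamma+1]}|\zeta_n(s)-\zeta(s)|<m\eta$ for all $n\ge n_0$. Then for such $n$ and every $s\in[\gamma,\gamma+1]$,
\[
\left|\frac{\zeta_n(s)}{\zeta(s)}-1\right|=\frac{|\zeta_n(s)-\zeta(s)|}{\zeta(s)}\le\frac{|\zeta_n(s)-\zeta(s)|}{m}<\eta\le\tfrac12 ,
\]
so in particular $\zeta_n(s)>0$ and the logarithm is defined. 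Using $|\log x|\le 2|x-1|$ for $|x-1|\le 1/2$ (by the mean value theorem, since $1/x\le 2$ on $[\tfrac12,\tfrac32]$), we get $\bigl|\log(\zeta_n(s)/\zeta(s))\bigr|<2\eta\le\varepsilon$ for all $s\in[\gamma,\gamma+1]$. Taking the supremum over $s$ yields $\sigma(\zeta_n,\zeta)\le\varepsilon$ for all $n\ge n_0$, and since $\varepsilon>0$ was arbitrary, $\sigma(\zeta_n,\zeta)\to0$.

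So the only obstacle worth flagging is the strict positivity of the lower bound $m$, which is exactly what the assumption $\zeta\ne0$ (together with positivity of the terms and continuity on a compact interval) delivers; everything else is elementary.
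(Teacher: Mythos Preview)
Your proof is correct and follows essentially the same approach as the paper: use uniform convergence together with the positive lower bound $m=\min_{s}\zeta(s)>0$ to control $|\zeta_n/\zeta-1|$, then pass through the logarithm. If anything, your treatment of the logarithm step (using $|\log x|\le 2|x-1|$ for $|x-1|\le 1/2$) is more careful than the paper's bare inequality $|\log(\zeta_n/\zeta)|\le|(\zeta-\zeta_n)/\zeta|$, which as written fails when $\zeta_n<\zeta$.
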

\begin{proof}
    Suppose $|\zeta_n(s)-\zeta(s)|<\epsilon$ for all $n>n_0$ and all $s\in[\gamma,\gamma+c]$. Then
    $$\left|\log\frac{\zeta_n}{\zeta}\right|\le\left|\frac{\zeta-\zeta_n}{\zeta}\right|\le\frac{\epsilon}{\zeta}.$$
    Hence $\sigma(\zeta_n,\zeta)\le \epsilon\,\sup_s{1/\zeta(s)}={\epsilon/\min_s\zeta(s)}$. Hence the distance $\sigma$ converges to zero.
\end{proof}

For the square fuzzy torus with $a=d=1$, $b=c=0$, the spin structure is $\Sigma_c = (1,1)$. Therefore the same spin structure is used for the continuum torus and its eigenvectors are  labelled with $(k,l)\in A_\infty=\Z^2+(1/2,1/2)$. Let $A_N\subset A_\infty$ be the subset such that $-N/2\le k,l < N/2$. This indexes the eigenvectors of the fuzzy torus of size $N$ exactly once each, ignoring the fourfold degeneracy of the squared eigenvalues for each $(k,l)$ (see equation~\eqref{eq:fuzzyTorus}).

The difference in the two zetas is
$$
\Delta\zeta= \sum\limits_{(k,l)\in A_N} ([k]^2 + [l]^2)^{-s} - \sum\limits_{(k,l)\in A_\infty} (k^2 + l^2)^{-s}.
$$
The remainder of this appendix is a proof that $\Delta\zeta\to0$ as $N\to \infty$, uniformly on $[\gamma,\gamma+1]$, for any $\gamma>1$.

To show this convergence, the comparison is split into two regions. The fuzzy zeta spectrum is similar in value to the continuum torus for small $[k],[l]$ but for large $[k],[l]$ the spectra differ quite drastically. This can be seen in Figure~\ref{fig:TorFuzzyComm}. Define $A_M\subset A_N$ to be the region with $|k|, |l|<M$ for some $M$ such that  $1\le M\le N/2$. Then different comparisons are used for $A_M$ and $A_N\setminus A_M$.
The difference in the two zetas is thus
\begin{align}
\Delta\zeta
&= \sum\limits_{A_M} ([k]^2 + [l]^2)^{-s} - (k^2+ l^2)^{-s}  + \sum \limits_{A_N\setminus A_M} ([k]^2 + [l]^2)^{-s} - \sum\limits_{A_\infty\setminus A_M} (k^2 + l^2)^{-s}
\label{eq:TorDeltaZeta}
\end{align}

These three sums are investigated in turn. For the first term,
\begin{align}
([k]^2+[l]^2)^{-s} - (k^2+ l^2)^{-s} &= \left(\frac{\sin^2 x+\sin^2 y}{\sin^2z}\right)^{-s}-\left(\frac{x^2+y^2}{ z^2}\right)^{-s}
\end{align}
with $x=\pi k/N$, $y=\pi l/N$, $z=\pi/N$. The Taylor expansion for this is
\begin{align}
\left(\frac{\sin^2 x+\sin^2 y}{\sin^2z}\right)^{-s}=\left(\frac{x^2+y^2}{ z^2}\right)^{-s}\biggl(1+s\bigl(O(x^2+y^2)+O(z^2)\bigr)\biggr)
\end{align}
providing $x$, $y$ and $z$ are in a sufficiently small region. This requires that $M/N$ be sufficiently small.

Putting this estimate into the sum gives
\begin{align}
  \sum\limits_{A_M} ([k]^2 + [l]^2)^{-s} - (k^2+ l^2)^{-s}  &\leq s \cdot \sum_{A_M}  O\left( \frac{(k^2+l^2)^{1-s}}{N^2} \right)\le s \,O(M^2/N^2).
\end{align}
The second inequality follows from the fact that $s>1$ and so the summand is greatest for  $k,l=\pm 1/2$. This shows that as long as $M/N\to 0$, the first piece of expression~\eqref{eq:TorDeltaZeta} converges uniformly to zero.

The second term in Eq~\eqref{eq:TorDeltaZeta} is examined now. Let $\sinc(x)=(\sin x)/x\le 1$. The term is
\begin{align}
    \sum\limits_{ A_N \setminus A_{M}} ([k]^2 + [l]^2)^{-s}
& \leq N^2 \left( \frac{ \sin \frac{\pi M}{N}}{\sin\frac{\pi}{N}} \right)^{-2s}
= N^2 M^{-2s}  \left( \frac{\sinc\frac{\pi M}{N} }{\sinc\frac{\pi}{N}}\right)^{-2s}
\leq  N^2 M^{-2\gamma}   \frac{\left(\sinc\frac{\pi M}{N} \right)^{-2(\gamma+1)}}
{\left(\sinc\frac{\pi}{N}\right)^{-2\gamma}}
\end{align}
Thus as long as $M$ is chosen so that $NM^{-\gamma}\to 0$ as $N\to\infty$ as well as the previous condition $M/N\to0$, the bound on the right converges to zero.
Hence this sum also converges uniformly to zero.

The last piece is the one arising from the continuum zeta. This converges to zero as long as $M\to\infty$ since it is the tail of a convergent series. The convergence is uniform since
\begin{align}
    \sum_{A_\infty \setminus A_M} \left(k^2+l^2 \right)^{-s} \leq \sum_{A_\infty \setminus A_M} (k^2 +l^2)^{-\gamma},
\end{align}
a bound independent of $s$.

All of the conditions on $M$ can be satisfied by taking $M=(N/2)^a$ for $1/\gamma<a<1$.

\bibliography{ref}
\end{document}